\renewcommand{\theequation}{\thesection.\arabic{equation}}
\newcommand{\myref}[2]{\hyperref[#1]{#2}}
\numberwithin{equation}{section}
\def\qed{\rule{2mm}{2mm}}
\newtheorem{theorem}{Theorem}
\newtheorem{lemma}{Lemma}
\newtheorem{assumption}{Assumption}
\newtheorem{remark}{Remark}
\newcounter{assumptionM}
\newcounter{assumptionA}
\def\theassumptionM{M.\arabic{assumptionM}}
\def\theassumptionA{A.\arabic{assumptionA}}
\begin{document}
\author{
Federico A. Bugni \\
Department of Economics\\
Northwestern University \\
\url{federico.bugni@northwestern.edu}
\and
Ivan A.\ Canay \\
Department of Economics\\
Northwestern University\\
\url{iacanay@northwestern.edu}
\and
Deborah Kim \\
Department of Economics\\
University of Warwick\\
\url{deborah.kim@warwick.ac.uk}
}

\title{Testing Conditional Stochastic Dominance at Target Points\footnote{We thank Xinran Li, David Kaplan, Matias Cattaneo, and participants at several conferences and seminars for helpful comments and discussion.}}

\maketitle

\begin{spacing}{1.1}
\begin{abstract}
This paper introduces a novel test for conditional stochastic dominance at specific values of the conditioning covariates, referred to as target points. The test is relevant for analyzing treatment effects, discrimination, and income inequality. We propose a Kolmogorov--Smirnov-type test statistic based on induced order statistics from independent samples. A key feature is its data-independent critical value, which eliminates the need for resampling techniques such as the bootstrap, as well as kernel smoothing or parametric assumptions. Instead, the test relies on a tuning parameter to select relevant observations. We establish that the induced order statistics converge to independent draws from the true conditional distributions and that the test is asymptotically of level $\alpha$ under weak regularity conditions. Monte Carlo simulations confirm the strong finite-sample performance of our method.
\end{abstract}
\end{spacing}

\noindent KEYWORDS: stochastic dominance, regression discontinuity design, induced order statistics, rank tests, permutation tests.

\noindent JEL classification codes: C12, C14.

\maketitle

\thispagestyle{empty}

\newpage 

\maketitle

\section{Introduction}

The concept of stochastic dominance has long been central to numerous areas of applied research. This paper examines a specific aspect of stochastic dominance: testing conditional stochastic dominance (CSD) at specific values of the conditioning covariate, referred to as target points. Such conditional comparisons are crucial in many contexts, including evaluating treatment effects in social programs within a regression discontinuity design, analyzing economic disparities across demographic groups, and investigating potential discrimination in decision-making processes.

Unconditional stochastic dominance methods, which analyze entire distributions, have been extensively studied and widely applied in the literature, with foundational contributions dating back to \cite{hodges:1958} and \cite{mcfadden:1989} and more recent developments in \cite{abadie:2002, 
barrett/donald:2003, Linton/maasoumi/whang:2005}, and \cite{linton/song/whang:2010}, among others. However, in many empirical settings, the primary interest lies in dominance conditional on a subset of the population defined by specific characteristics or values of a conditioning variable. For instance, in regression discontinuity designs, the nature of the methodology necessitates comparing outcome distributions conditional on the cutoff of the running variable (\cite{donald/hsu/barrett:2012, shen/zhang:16, goldman/kaplan:2018, qu/yoon:2019}). Likewise, in wage discrimination studies, researchers may seek to compare wage distributions across demographic groups while controlling for observed skill levels (\cite{becker:1957, canay/etal:2024, bharadwaj/deb/renou:24}). 

The primary goal of this paper is to test whether the conditional cumulative distribution function (CDF) of one variable stochastically dominates that of the other at specific values of a conditioning variable. Formally, we consider the null hypothesis  
\[
H_0: F_Y(t | z) \leq F_X(t | z) \quad \text{for all } (t, z) \in \mathbf{R} \times \mathcal{Z}~,
\]
against the alternative that there exists some \((t, z)\) for which the reversed inequality holds strictly. Here, \( F_Y(\cdot | z) \) and \( F_X(\cdot | z) \) represent the conditional CDFs of the random variables \( Y \) and \( X \), respectively, given \( Z = z \). Importantly, we focus on situations where the set of target values,  $\mathcal{Z}$, is not the entire support of $Z$, but rather a finite collection of points (including the case of a singleton). To address this testing problem, we propose a novel procedure that leverages induced order statistics based on independent samples from \( Y \) and \( X \). Our test statistic, a Kolmogorov--Smirnov-type measure, captures the maximal deviation between the empirical CDFs of the two samples, conditional on observations near the target points. Crucially, the critical value we propose is derived in a deterministic, non-data-dependent manner once a tuning parameter is accounted for, ensuring computational simplicity.  

Our contributions in this paper are both methodological and theoretical. First, we introduce a novel test for CSD at target points, particularly suited for settings where researchers seek to compare distributions conditional on covariates at specific values of the conditioning covariates. The proposed test exploits induced order statistics, leveraging observations closest to the target conditioning point to construct empirical CDFs that form the basis of our test statistic. Unlike traditional methods, our approach neither relies on kernel smoothing nor imposes parametric assumptions on conditional distributions. However, it requires a tuning parameter, which serves a role analogous to bandwidth selection in nonparametric estimation.

Second, we establish the asymptotic validity of our test under two alternative asymptotic frameworks. In the first framework, the numbers of effective ``local'' observations, $q_y$ and $q_x$, are held fixed as the sample size $n \to \infty$. In this setting, we show that the test statistic converges to a limiting experiment in which the induced order statistics behave as independent draws from the true conditional CDFs at the target point. This convergence yields a feasible critical value that remains valid without relying on resampling methods such as the bootstrap. Our regularity conditions in this fixed-$q$ framework are mild: the conditional CDFs at the target point may contain finitely many discontinuities; both $Y$ and $X$ may be continuous, discrete, or mixed; and we require only that the conditional CDFs $F_Y(t \mid z)$ and $F_X(t \mid z)$ be continuous in $z$ at the target point, uniformly in $t$. This is a weaker condition than the smoothness assumptions, such as twice differentiability in $z$, typically imposed in nonparametric methods. In the second framework, the numbers of effective observations $q_y$ and $q_x$ are allowed to diverge as $n \to \infty$. Building on recent results on convergence rates for induced order statistics \citep{bugni/canay/kim:26b}, we derive an explicit rate at which the rejection probability of our test approaches its nominal level as a function of $q_y$, $q_x$, and $n$. This second framework complements the fixed-$q$ results by demonstrating that the test continues to control size when the number of effective observations increases, and it provides practical guidance for the data-dependent selection of the tuning parameters $q_y$ and $q_x$.

Third, we show that the proposed critical value aligns with the one obtained from a permutation-based approach when the random variables $Y$ and $X$ are both continuous, thus establishing a natural connection between our method and the broader literature on permutation-based inference. To the best of our knowledge, this result provides the first formal justification for the validity of permutation-based inference in testing unconditional stochastic dominance. We demonstrate that the critical value of our test cannot be improved when both $Y$ and $X$ are continuous. However, this result does not extend to the case when either $Y$ or $X$ is discretely distributed. For this latter case, we introduce a \emph{refined} critical value, which is typically smaller than the default one we propose and only a function of the support points for $Y$ and $X$. This refinement enhances power relative to the default critical value, though it comes at the cost of increased computational complexity.

Finally, we examine the finite-sample performance of our test through Monte Carlo simulations and present an empirical application to illustrate its implementation. Both exercises suggest that the data-dependent rule we propose for selecting the key tuning parameters performs well in practice.

Our work contributes to the extensive literature on stochastic dominance testing, building on seminal contributions such as \cite{anderson:1996, davidson/duclos:2000, abadie:2002, barrett/donald:2003, Linton/maasoumi/whang:2005}, and \cite{linton/song/whang:2010}. These studies examine the null hypothesis of unconditional stochastic dominance and predominantly rely on asymptotic arguments and resampling techniques. Our approach to testing CSD differs in that, in the limit experiment, the conditional testing problem simplifies to a finite-sample unconditional testing problem. In the context of CSD testing, prior research---including \cite{delgado/escaciano:2013, gonzalo/olmo:2014, chang-lee-whang-2015-EJ}, and \cite{andrews-shi-2017-JOE}---evaluate stochastic dominance over a range or the entire support of a continuous conditioning variable. Our work diverges from this literature by testing CSD at specific target points. A distinct line of research, including \cite{donald/hsu/barrett:2012, shen/zhang:16, goldman/kaplan:2018}, and \cite{qu/yoon:2019}, studies CSD within regression discontinuity designs, where dominance is defined conditional on cutoffs. All of these methods assume continuity of conditional distributions, limiting their applicability in empirical settings featuring discrete mass points. Our method relaxes this constraint and accommodates distributions with finitely many discontinuities. The resulting test is novel, computationally simple, and valid across a broader class of distributions.

Our work closely aligns with the well-established literature on testing the equality of two continuous distributions. Foundational contributions by \cite{gnedenko/korolyuk:1951}, \cite{korolyuk:1955}, and \cite{blackman:1956} established that the finite-sample distribution of the two-sample one-sided Kolmogorov--Smirnov test statistic is pivotal under the null, deriving closed-form expressions under various simplifying assumptions. Later research by \cite{hodges:1958, hajek/sidak:1967}, and \cite{durbin:1973} developed methods to approximate this finite-sample distribution. Although our null hypothesis differs, our critical value coincides with the corresponding quantile of this distribution. Notably, \cite{hodges:1958} and \cite{mcfadden:1989} proposed that the two-sample one-sided Kolmogorov--Smirnov test, originally designed for testing equality of two distributions, could be adapted to test stochastic dominance under continuity assumptions, though without formal proof. We provide a rigorous justification for this claim.

The remainder of the paper is organized as follows. Section~\ref{sec:testing-problem} defines the testing problem and introduces notation. Section~\ref{sec:our-test} presents the induced order statistics that form the basis of our procedure and introduces the proposed test. Section~\ref{sec:results} establishes its asymptotic validity under two alternative asymptotic frameworks. Section~\ref{sec:extensions} discusses several refinements and extensions, including results on rates of convergence, a data-dependent rule for selecting the tuning parameters, and a refinement of the test that increases power when $Y$ and $X$ are discrete. Importantly, Section~\ref{sec:extensions} also provides formal results on the validity of permutation tests for stochastic dominance with continuously distributed random variables, offering novel arguments for the validity of permutation-based inference in this setting. Section~\ref{sec:simulations} evaluates the finite-sample performance of our test through Monte Carlo simulations. Finally, Section~\ref{sec:conclusion} concludes. All proofs are contained in the Appendix. 


\section{Testing problem}\label{sec:testing-problem}
Let $(Y,Z)$ and $(X,Z)$ denote random pairs, each supported on $\mathbf{R}\times\mathbf{R}$. Let $P_Y$ and $P_X$ denote the joint distributions of $(Y,Z)$ and $(X,Z)$, respectively. For $t\in\mathbf{R}$ and $z$ in the support of $Z$, define the conditional distribution functions
\[
F_Y(t | z) := P_Y\{Y \le t \mid Z=z\}
\qquad\text{and}\qquad
F_X(t | z) := P_X\{X \le t \mid Z=z\}~.
\]

We are interested in testing the null hypothesis:
\begin{equation}\label{eq:H0}
  H_0: F_Y(t|z) \le F_X(t|z) \text{ for all } (t, z) \in \mathbf{R} \times \mathcal{Z}
\end{equation}
versus the alternative hypothesis:
\begin{equation*}\label{eq:H1}
  H_1: F_Y(t|z) > F_X(t|z) \text{ for some } (t, z) \in \mathbf{R} \times \mathcal{Z}~,
\end{equation*}
where $\mathcal{Z} = \{z_1, \dots, z_L\}$ is a finite set of target values of the conditioning variable. The case where $L = 1$ and $Z$ is a continuously distributed random variable is both simpler and particularly relevant in empirical applications. To minimize notational clutter, we focus on this case for most of the remainder of the paper, with Section \ref{app:multiple-target} addressing the case where $L > 1$.

To test this hypothesis, we assume the analyst observes two independent samples:
\begin{equation}\label{eq:obs-data}
\{(Y_i,Z_i):1\le i\le n_y\}
\qquad\text{and}\qquad
\{(X_j,Z_j):1\le j\le n_x\}~.
\end{equation}
We refer to the first sample as the $Y$-sample, which consists of $n_y$ i.i.d.\ draws from $P_Y$, and to the second sample as the $X$-sample, which consists of $n_x$ i.i.d.\ draws from $P_X$. Independence of the samples implies that the data-generating distribution is the product measure
\[
P := P_Y \otimes P_X~.
\]
All probability and expectation statements in what follows are taken with respect to $P$ unless otherwise noted. We denote by $\mathbf P$ the class of data-generating distributions $P$ satisfying the regularity conditions imposed later, and let 
\begin{equation}\label{eq:P_0}
  \mathbf P_{0} := \{P \in \mathbf P: \eqref{eq:H0}~ \text{holds}\}
\end{equation}
denote the subset of distributions $P\in \mathbf P$ satisfying the null hypothesis in \eqref{eq:H0}. 

\section{Test based on induced ordered statistics}\label{sec:our-test}

Let the observed data be those given in \eqref{eq:obs-data}. Let $(q_y,q_x)$ be two small positive integers (relative to $(n_y,n_x)$) and consider the point $z_{0}\in \mathcal Z$. The test we propose is based on the following two samples: 
\begin{itemize}
    \item The $q_y$ values of $\{Y_i:1\le i\le n_y\}$ associated with the $q_y$ values of $\{Z_i:1\le i\le n_y\}$ closest to $z_{0}$, and 
    \item The $q_x$ values of $\{X_j:1\le j\le n_x\}$ associated with the $q_x$ values of $\{Z_j:1\le j\le n_x\}$ closest to $z_{0}$.
\end{itemize}

To define these samples formally, we introduce $g$-order statistics for the conditioning variable $Z$, where $g(Z) := |Z - z_{0}|$; see \citet[Section 2.1]{reiss:89} and \cite{kaufmann/reiss:92}. For any two values \( z, z' \in \mathcal{Z} \), we define the ordering \(\le_{g}\) as follows:
\[
z \le_{g} z' \quad \text{if and only if} \quad g(z) \le g(z')~.
\]
This defines a $g$-ordering on the set $\mathcal{Z}$. The $g$-order statistics \( Z_{g,(i)} \) are then the values of \( Z \) ordered according to this criterion:
\[
Z_{g,(1)} \le_{g} Z_{g,(2)} \le_{g} \cdots \le_{g} Z_{g,(n)}~.
\]
If there are ties in the $g$-ordering, they can be resolved arbitrarily—for example, by relying on the original sample order.

We then take the values of $\{Y_i:1\le i\le n_y\}$ associated with the $q_y$ smallest $g$-ordered statistics of $Z$ in the $Y$-sample, denoted by 
\begin{equation}\label{eq:ios-Y}
  Y_{n_y,[1]},Y_{n_y,[2]}, \dots,Y_{n_y,[q_y]}~.
\end{equation}
That is, $Y_{n_y,[j]}=Y_k$ if $Z_{g,(j)}=Z_k$ for $k=1,\dots,q_y$. Similarly, we take the values of $\{X_i:1\le i\le n_x\}$ associated with the $q_y$ smallest $g$-ordered statistics of $Z$ in the $X$-sample, denoted by  
\begin{equation}\label{eq:ios-X}
  X_{n_x,[1]},X_{n_x,[2]}, \dots,X_{n_x,[q_x]}~.
\end{equation}
The random variables in \eqref{eq:ios-Y} and \eqref{eq:ios-X} are referred to as \emph{induced order statistics} or \emph{concomitants of order statistics}; see \cite{david/galambos:74,bhattacharya:74,canay/kamat:18}. Intuitively, we view these samples as independent samples of \(Y\) and \(X\), conditional on \(Z\) being `close' to \(z_{0}\). A key feature of our test is that it relies solely on these induced order statistics. Letting \(n := \min\{n_y,n_x\}\) and \(q := q_y + q_x\), the effective (pooled) sample is
\begin{equation}\label{eq:Sn}
  S_n = (S_{n,1}, \dots, S_{n,q}) := (Y_{n_y,[1]}, \dots, Y_{n_y,[q_y]}, X_{n_x,[1]}, \dots, X_{n_x,[q_x]})~.
\end{equation}
It is also important to note that the first \(q_y\) elements of \(S_n\) are associated with the \(Y\)-sample, while the remaining \(q_x\) elements come from the \(X\)-sample.

Having defined the induced order statistics, we now define our test statistic as 
\begin{equation}\label{eq:ks-stat}
  T(S_n) = \sup_{t\in \mathbf R} \left( \hat F_{n,Y}(t) - \hat F_{n,X}(t) \right) = \max_{k \in \{1,\dots,q_y\}} \left( \hat F_{n,Y}(S_{n,k}) - \hat F_{n,X}(S_{n,k}) \right) ~,
\end{equation}
where the empirical CDFs are 
\begin{equation*}
  \hat F_{n,Y}(t) := \frac{1}{q_y} \sum_{j=1}^{q_y} I\{S_{n,j} \le t\} \quad \text{ and }\quad \hat F_{n,X}(t) := \frac{1}{q_x} \sum_{j=1}^{q_x} I\{S_{n,q_y+j} \le t\}~.
\end{equation*}
The test statistic in \eqref{eq:ks-stat} is a two-sample one-sided Kolmogorov--Smirnov (KS) test statistic, see \citet[][p.\ 99]{hajek/sidak/sen:99}, and the test we propose rejects the null hypothesis in \eqref{eq:H0} when $T(S_n)$ exceeds a critical value, defined next.

To introduce the critical value, let $\alpha\in(0,1)$ be given and $\{U_j: 1\le j\le q\}$ be a sequence of uniform random variables i.i.d., that is, $U_j\sim U[0,1]$. Define $\Delta(u)$ as
\begin{equation}\label{eq:Delta-u}
    \Delta(u) := \frac{1}{q_y}\sum_{j=1}^{q_y} I\{U_{j} \le u\}-\frac{1}{q_x}\sum_{j=q_y+1}^{q} I\{U_{j} \le u\} ~,
\end{equation}
and
\begin{equation}\label{eq:our-cv}
  c_{\alpha}(q_y,q) := \inf_{x\in\mathbf R}\left\lbrace P\left\lbrace \sup_{u\in (0,1)} \Delta(u) \le x \right\rbrace\ge 1-\alpha \right\rbrace ~.
\end{equation}
The test we propose for the null hypothesis in \eqref{eq:H0} is 
\begin{equation}\label{eq:our-test}
  \phi(S_n) := I\{T(S_n)>c_{\alpha}(q_y,q)\}~.
\end{equation}
We reiterate that \eqref{eq:our-test} corresponds to our test with a single target point ($L = 1$). Section \ref{app:multiple-target} extends this framework to the general case with multiple target points ($L > 1$). Furthermore, in addition to our default critical value in \eqref{eq:our-cv}, we provide a refined critical value tailored for discrete $Y$ and $X$ with finite support in Section \ref{sec:discrete}.

The choice of the two-sample one-sided KS test statistic in \eqref{eq:ks-stat} is crucial for accommodating cases where \( Y \) and \( X \) are discrete or mixed. While one could construct analogues of our test in \eqref{eq:our-test} using alternative statistics commonly employed in the stochastic dominance literature---such as one-sided versions of the Cram\'er--von Mises or Anderson--Darling statistics---our asymptotic validity result does not generally extend to these alternatives. See the discussion following Theorem \ref{thm:SD-AsySz}.

\begin{remark}\label{rem:c-approx}
From a computational perspective, a notable feature of the test $\phi(S_n)$ is that the supremum over $\mathbf{R}$ in \eqref{eq:ks-stat} can be replaced by a maximum over the set $\{1, \dots, q_y\}$. This simplification arises because the KS test statistic increases only when evaluated at points corresponding to the $Y$-sample, which are the first $q_y$ elements in the vector $S_n$. Consequently, the supremum in $c_{\alpha}(q_y, q)$ can also be replaced with a maximum over $\{1, \dots, q_y\}$. This allows us to compute the critical value $c_{\alpha}(q_y, q)$ with arbitrary accuracy through simulation.
\end{remark}

\begin{remark}\label{rem:choice-q}
     The values of $(q_y, q_x)$ are tuning parameters chosen by the researcher. Section~\ref{sec:results} shows that the test is asymptotically valid when $(q_y, q_x)$ are either held fixed or allowed to diverge slowly with the sample size. In Section~\ref{sec:tuning-parameters}, we develop data-dependent guidelines for selecting $(q_y, q_x)$ that are compatible with these asymptotic requirements, and we assess their practical performance through simulations in Section~\ref{sec:simulations} and in the empirical application in Section \ref{app:empirical-application}.
\end{remark}

\begin{remark}
    A natural application of our test is in the context of a sharp regression discontinuity design, where an outcome $\tilde{Y}$ depends on a running variable $\tilde{Z}$, and the point of interest is the cutoff $z_0$. The observed sample in this case is $\{(\tilde{Y}_i, \tilde{Z}_i): 1 \le i \le n\}$, and the two samples needed for implementing our test are the following:
    \begin{align*}
        &\{(Y_i, Z_i): 1 \le i \le n_y\} := \{(\tilde{Y}_i, \tilde{Z}_i): 1 \le i \le n_y \text{ such that } Z_i \le z_0\}~ \text{ and }\\
        &\{(X_i, Z_i): 1 \le i \le n_x\} := \{(\tilde{Y}_i, \tilde{Z}_i): 1 \le i \le n_x \text{ such that } Z_i > z_0\}~. 
    \end{align*}
    Importantly, this formulation shows that the point $z_0$ can be either an interior or a boundary point in its support. We apply our test to this setting in our Monte Carlo simulations in Section \ref{sec:simulations} and our empirical application in Section \ref{app:empirical-application}.
\end{remark}

\section{Asymptotic framework and formal results}\label{sec:results}

In this section, we derive the asymptotic properties of the test in \eqref{eq:our-test} using two alternative asymptotic frameworks. The first one requires $q := q_y+q_x$ to be fixed as $n\to \infty$ and represents a finite sample
situation where the effective number of observations used by the test is too small to credibly invoke approximations that require a ``large'' value of $q$. The second framework requires $q\to \infty$ slowly as $n\to \infty$, and represents a finite sample situation where the effective number of observations used by the test is large enough to invoke approximations for ``large'' $q$.

\subsection{Asymptotic results for fixed \emph{q}}\label{sec:fixed-q}

In this section, we examine the asymptotic properties of the test in \eqref{eq:our-test} within a framework where $q := q_y+q_x$ is fixed and $n:=\min\{n_y,n_x\}\to \infty$. We first derive the asymptotic properties of induced order statistics in \eqref{eq:Sn}, and then present our main theorem. 

We start by deriving a result on the induced order statistics collected in the vector $S_n$ in \eqref{eq:Sn}. To do so, we make the following assumptions. 

\begin{assumption}\label{ass:Z}
For any $\varepsilon >0$ and $z\in \mathcal{Z}$, $P\{Z\in (z-\varepsilon ,z+\varepsilon )\}>0$.
\end{assumption}

\begin{assumption}\label{ass:cond-continous}
For any $z\in \mathcal{Z}$ and sequence $ z_{k}\to z$, $\sup_{t\in \mathbf{R} }|F_{Y}(t|z_{k})-F_{Y}(t|z)|\to 0$ and $\sup_{t\in \mathbf{R} }|F_{X}(t|z_{k})-F_{X}(t|z)|\to 0$.
\end{assumption}

Assumption \ref{ass:Z} requires that the distribution of $Z$ is locally dense at each of the points in $\mathcal{Z}$. This includes the case where $Z$ has a mass point at $z\in \mathcal Z$. Assumption \ref{ass:cond-continous} is a smoothness assumption required to guarantee that conditioning on observations close to $z$ is informative about the distribution conditional on $Z=z$.

\begin{theorem}\label{thm:limit-Sn}
Let Assumptions \ref{ass:Z} and \ref{ass:cond-continous} hold. Then,
    \begin{equation}\label{eq:limit-Sn}
          S_n \stackrel{d}{\to} S=(S_1,\dots,S_{q})~,
    \end{equation}
    where for any $s:=(s_1,\dots,s_{q})\in \mathbf{R}^{q}$, the random vector $S$ satisfies
  \begin{equation*}
   P\{S\le s\} = \prod_{j=1}^{q_y} F_Y(s_j|z_{0}) \cdot \prod_{j=q_y+1}^{q} F_X(s_j|z_{0})~.
  \end{equation*}
\end{theorem}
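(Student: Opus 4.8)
The plan is to prove joint convergence in distribution of $S_n$ by combining a classical result on induced order statistics (concomitants) with the smoothness Assumption \ref{ass:cond-continous}, handling the $Y$-sample and $X$-sample separately and then invoking independence of the two samples. Since the $Y$-sample and $X$-sample are independent, it suffices to show that $(Y_{n_y,[1]},\dots,Y_{n_y,[q_y]}) \stackrel{d}{\to} (S_1,\dots,S_{q_y})$ with i.i.d.\ coordinates drawn from $F_Y(\cdot\mid z_0)$, and symmetrically for the $X$-sample; the product form of the limiting CDF then follows immediately by independence.

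First I would fix attention on the $Y$-sample and describe the joint law of the concomitants $(Y_{n_y,[1]},\dots,Y_{n_y,[q_y]})$ conditionally on the realized $g$-order statistics $(Z_{g,(1)},\dots,Z_{g,(q_y)})$ of the $Z_i$ in the $Y$-sample. Because the pairs $(Y_i,Z_i)$ are i.i.d., conditionally on $Z_{g,(j)} = \zeta_j$ for $j=1,\dots,q_y$ the induced values $Y_{n_y,[j]}$ are \emph{independent} across $j$ with $Y_{n_y,[j]} \sim F_Y(\cdot \mid \zeta_j)$. Thus for any $s=(s_1,\dots,s_{q_y})$,
\begin{equation*}
P\{Y_{n_y,[1]}\le s_1,\dots,Y_{n_y,[q_y]}\le s_{q_y}\} = E\left[\prod_{j=1}^{q_y} F_Y(s_j \mid Z_{g,(j)})\right]~.
\end{equation*}

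The second and key step is to argue $Z_{g,(j)} \stackrel{p}{\to} z_0$ for each fixed $j \le q_y$ as $n_y\to\infty$. This is where Assumption \ref{ass:Z} enters: local denseness of the law of $Z$ at $z_0$ guarantees that for any $\varepsilon>0$ the event $\{|Z_i - z_0|<\varepsilon\}$ has positive probability, so the number of sample points falling in $(z_0-\varepsilon, z_0+\varepsilon)$ grows to infinity almost surely (a Borel–Cantelli / law of large numbers argument), hence for any fixed $j$ eventually $|Z_{g,(j)} - z_0|<\varepsilon$ with probability tending to one. Combining this with Assumption \ref{ass:cond-continous} — which gives $\sup_t |F_Y(t\mid Z_{g,(j)}) - F_Y(t\mid z_0)| \stackrel{p}{\to} 0$ and in particular pointwise convergence at each continuity point — together with the bounded convergence theorem applied to the displayed expectation, yields
\begin{equation*}
P\{Y_{n_y,[1]}\le s_1,\dots,Y_{n_y,[q_y]}\le s_{q_y}\} \to \prod_{j=1}^{q_y} F_Y(s_j \mid z_0)~,
\end{equation*}
valid at every continuity point $s$ of the limiting CDF, which is all that is needed for convergence in distribution. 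The symmetric argument gives the analogous statement for the $X$-concomitants, and independence of the two samples completes the proof.

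The main obstacle I anticipate is the handoff between $Z_{g,(j)}\stackrel{p}{\to} z_0$ and the convergence of the conditional CDFs: one must be careful that $F_Y(s_j\mid \cdot)$ need not be continuous as a function of $z$ in full generality, so I would lean on the uniform-in-$t$ statement in Assumption \ref{ass:cond-continous} (convergence along \emph{every} sequence $z_k\to z_0$) rather than on any differentiability, and I would state the final convergence at continuity points of the product CDF to sidestep the finitely many discontinuities that $F_Y(\cdot\mid z_0)$ itself may have in $t$. A secondary technical point is verifying that ties in $|Z_i - z_0|$ occur with probability zero when $Z$ is continuous (so the $g$-ordering is well defined), or otherwise adopting an arbitrary tie-breaking rule that does not affect the limit — this is routine but worth a sentence.
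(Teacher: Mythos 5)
Your proposal is correct and follows essentially the same approach as the paper: conditional independence of the concomitants given the $Z$'s, convergence in probability of the $g$-order statistics $Z_{g,(j)}$ to $z_0$ (Assumption \ref{ass:Z}), and passage to the limit via the uniform-in-$t$ continuity in Assumption \ref{ass:cond-continous} together with bounded convergence, followed by factorization across the two independent samples. The only cosmetic difference is that you sketch $Z_{g,(j)}\stackrel{p}{\to} z_0$ via an LLN counting argument rather than the paper's explicit binomial-expansion induction, but the underlying idea and all key steps coincide.
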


Theorem \ref{thm:limit-Sn} is a special case of Theorem \ref{thm:limit-Sn-general} in the appendix with $L=1$, which generalizes \citet[][Theorem 4.1]{canay/kamat:18} to accommodate multiple conditioning values. It establishes that the limiting distribution of the induced order statistics in the vector $S_n$ is such that the elements of the vector, denoted by $S$, are mutually independent. Specifically, the first $q_y$ elements of this vector follow the distribution $F_Y(\cdot|z_{0})$, while the remaining $q_x$ elements follow $F_X(\cdot|z_{0})$. The proof leverages the fact that the induced order statistics  
$S_n$ in \eqref{eq:Sn} are conditionally independent given $(Z_1,\dots,Z_n)$, with conditional CDFs 
$$F_Y(\cdot|{Z_{n_y,(1)}}),\dots,F_Y(\cdot|{Z_{n_y,(q_y)}}),F_X(\cdot|{Z_{n_x,(1)}}),\dots,F_X(\cdot|{Z_{n_x,(q_x)}})~.$$
The result then follows by showing that $Z_{n_y,(j)}\overset{p}{\to}z_0$ and $Z_{n_x,(j)}\overset{p}{\to}z_0$ for all $j\in\{1,\dots,q\}$, and invoking standard properties of weak convergence. Theorem \ref{thm:limit-Sn} plays a crucial role in our asymptotic validity result presented in Theorem \ref{thm:SD-AsySz}. 

In addition to Assumptions \ref{ass:Z} and \ref{ass:cond-continous}, we also require that, conditional on $Z=z$, the random variables $Y$ and $X$ have distributions with a finite number of discontinuity points. To state this assumption formally, let $\mathcal{D}_{Y}(z)$ and $\mathcal{D}_{X}(z)$ denote the sets of discontinuity points of the CDFs of ${ Y|Z=z }$ and ${ X|Z=z }$, respectively.

\begin{assumption}\label{ass:FiniteDisc}
For any $z \in \mathcal{Z}$, $|\mathcal{D}_{Y}(z)|$ and $|\mathcal{D}_{X}(z)|$ are finite.
\end{assumption}

It is important to note that Assumption \ref{ass:FiniteDisc} allows both $Y$ and $X$ to be continuous, discrete, or mixed random variables. However, it excludes cases where these variables have countably many discontinuities conditional on $z\in \mathcal Z$. We also point out that Theorem \ref{thm:limit-Sn} does not require Assumption \ref{ass:FiniteDisc}. 

We now formalize our main result in Theorem \ref{thm:SD-AsySz}, which shows that the test defined in \eqref{eq:our-test} is asymptotically level $\alpha$ under the assumptions we just introduced. Below, we denote by $E_P[\cdot]$ the expected value with respect to the distribution $P\in \mathbf P$.

\begin{theorem}\label{thm:SD-AsySz}
  Let $\mathbf P$ the space of distributions that satisfy Assumptions \ref{ass:Z}, \ref{ass:cond-continous}, and \ref{ass:FiniteDisc} hold, and let $\mathbf P_{0}$ be as in \eqref{eq:P_0}. Let $\alpha\in(0,1)$ be given, $T$ be the KS test statistic in \eqref{eq:ks-stat}, and $\phi(\cdot)$ be the test in \eqref{eq:our-test}. Then, 
  \begin{equation} \label{eq:Asysize1}
    \limsup_{n \to \infty} E_{P}[\phi(S_n)]\le \alpha \qquad\text{for all }P \in \mathbf P_{0}~.
  \end{equation}
\end{theorem}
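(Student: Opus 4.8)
The plan is to reduce the problem to the limit experiment of Theorem \ref{thm:limit-Sn} and then show that, whenever $P\in\mathbf{P}_{0}$, the statistic $T(S_n)$ is asymptotically stochastically dominated by $\sup_{u\in(0,1)}\Delta(u)$, whose $(1-\alpha)$-quantile is exactly $c_{\alpha}(q_y,q)$ by \eqref{eq:our-cv}. Concretely I would (i) establish $T(S_n)\stackrel{d}{\to}T(S)$ with $S$ as in \eqref{eq:limit-Sn}; (ii) show, by a quantile coupling, that $T(S)\le\sup_{u\in(0,1)}\Delta(u)$ almost surely under the null, so the cdf of $T(S)$ dominates that of $\sup_{u\in(0,1)}\Delta(u)$; and (iii) combine these using the fact that $T$ takes values in a finite set.

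Step (ii) is the most transparent. Using the independence and marginals in \eqref{eq:limit-Sn}, represent $S_j=F_Y^{-1}(U_j|z_{0})$ for $j\le q_y$ and $S_j=F_X^{-1}(U_j|z_{0})$ for $j>q_y$, where $F^{-1}$ is the left-continuous generalized inverse and $U_1,\dots,U_q$ are i.i.d.\ $U[0,1]$. By the identity $F^{-1}(u)\le t\iff u\le F(t)$, the two empirical cdfs in \eqref{eq:emp-cdf} built from $S$ equal $\tfrac{1}{q_y}\sum_{j\le q_y}I\{U_j\le F_Y(t|z_{0})\}$ and $\tfrac{1}{q_x}\sum_{j>q_y}I\{U_j\le F_X(t|z_{0})\}$. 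Since $P\in\mathbf{P}_{0}$ gives $F_Y(t|z_{0})\le F_X(t|z_{0})$ for all $t$, the second sum only increases when $F_X$ is replaced by $F_Y$, so the difference of the two cdfs at any $t$ is at most $\Delta(F_Y(t|z_{0}))\le\sup_{u\in(0,1)}\Delta(u)$; taking the supremum over $t$ gives $T(S)\le\sup_{u\in(0,1)}\Delta(u)$ a.s., hence $P\{T(S)\le x\}\ge P\{\sup_{u\in(0,1)}\Delta(u)\le x\}$ for every $x$. When $F_Y(\cdot|z_{0})=F_X(\cdot|z_{0})$ is continuous the bound is tight, which explains why $c_{\alpha}(q_y,q)$ cannot be improved in that case.

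For steps (i) and (iii), note that $T$ takes values in the finite set $\mathcal{V}=\{a/q_y-b/q_x:0\le a\le q_y,\ 0\le b\le q_x\}$ and is continuous at every vector with pairwise distinct coordinates, its only discontinuities lying on the set of vectors with a repeated coordinate. Granting $T(S_n)\stackrel{d}{\to}T(S)$, choose $\varepsilon>0$ small enough that $c_{\alpha}(q_y,q)+\varepsilon$ is a continuity point of the cdf of $T(S)$ and $(c_{\alpha}(q_y,q),c_{\alpha}(q_y,q)+\varepsilon]\cap\mathcal{V}=\emptyset$ (possible since $\mathcal{V}$ is finite and $T(S)$ has finitely many atoms). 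Then, as $n\to\infty$, $P\{T(S_n)\le c_{\alpha}(q_y,q)+\varepsilon\}\to P\{T(S)\le c_{\alpha}(q_y,q)+\varepsilon\}\ge P\{\sup_{u\in(0,1)}\Delta(u)\le c_{\alpha}(q_y,q)\}\ge 1-\alpha$, where the middle inequality is the cdf domination from step (ii) and the last is \eqref{eq:our-cv}. Because $T(S_n)\in\mathcal{V}$, the events $\{T(S_n)>c_{\alpha}(q_y,q)\}$ and $\{T(S_n)>c_{\alpha}(q_y,q)+\varepsilon\}$ coincide, so $\limsup_n E_P[\phi(S_n)]=\limsup_n P\{T(S_n)>c_{\alpha}(q_y,q)+\varepsilon\}\le\alpha$, which is \eqref{eq:Asysize1}.

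The main obstacle is justifying $T(S_n)\stackrel{d}{\to}T(S)$ in step (i): $T$ is genuinely discontinuous on the set of vectors with tied coordinates, and in the discrete or mixed case $S$ places positive mass there, so the continuous mapping theorem does not apply directly. This is where Assumptions \ref{ass:cond-continous} and \ref{ass:FiniteDisc} are used. Uniform continuity in $z$ forces the conditional laws of the selected observations, with cdfs $F_Y(\cdot|Z_{n_y,(j)})$ and $F_X(\cdot|Z_{n_x,(j)})$ and $Z_{n_y,(j)},Z_{n_x,(j)}\overset{p}{\to}z_{0}$, to inherit up to $o(1)$ the atoms of $F_Y(\cdot|z_{0})$ and $F_X(\cdot|z_{0})$ — a smeared-atom configuration is incompatible with sup-norm convergence of cdfs — so the tie pattern of $S_n$ lines up with that of $S$ in the limit; Assumption \ref{ass:FiniteDisc} keeps the number of relevant atom locations finite. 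I would make this precise with a coupling of $S_n$ and $S$ from common uniform draws (in the spirit of the proof of Theorem \ref{thm:limit-Sn}), showing $T(S_n)\to T(S)$ along it both on the event that $S$ has distinct coordinates, where $T$ is continuous, and on its complement, handled through the inherited-atom structure. The remaining steps are routine.
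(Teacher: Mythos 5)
Your proposal takes essentially the paper's route: your step~(ii) is a near-verbatim match for the paper's Theorem~\ref{thm:limit-cont} (quantile-transform representation of $S$, use of $F_Y(\cdot|z_0)\le F_X(\cdot|z_0)$ to bound the difference by $\Delta(F_Y(t|z_0))\le\sup_{u}\Delta(u)$), and your step~(i) is the paper's Skorokhod-coupling argument. However, you substantially underrate the difficulty of step~(i). The ``inherited-atom'' claim --- that on the event where a coordinate $\tilde S_j$ of the a.s.\ limit lands on a discontinuity point $d\in\mathcal D$, the coupled $\tilde S_{n,j}$ must eventually equal $d$ exactly rather than merely approach it --- is the genuine content of the theorem, and ``a smeared-atom configuration is incompatible with sup-norm convergence'' is an intuition, not a proof. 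The paper devotes Lemma~\ref{lemma:equal-value-at-limit-S} and Lemma~\ref{lemma:equal-value-at-limit} precisely to this: sup-norm convergence of the cdfs (established via $Z_{n_y,(j)}\overset{p}{\to}z_0$ and Assumption~\ref{ass:cond-continous}) gives $P\{V_n\in(d-\varepsilon,d)\cup(d,d+\varepsilon)\}\to P\{V\in(d-\varepsilon,d)\cup(d,d+\varepsilon)\}=o_\varepsilon(1)$, with finiteness of $\mathcal D$ (Assumption~\ref{ass:FiniteDisc}) used to isolate each atom; combined with $V_n\overset{p}{\to}V$, this yields $P\{V_n\neq V,\ V\in\mathcal D\}\to0$. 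One also needs a companion anti-concentration estimate (Lemma~\ref{lem:aux_lemma4}) to handle pairs of distinct coordinates of $\tilde S$ that are close but not both on atoms, so that the three events $E_1(\varepsilon)$, $E_{n,2}(\varepsilon)$, $E_{n,3}$ jointly force rank agreement. Labeling this ``routine'' is where your proposal is incomplete.

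A minor stylistic point: your step~(iii) is a detour. Once the coupling delivers $T(\tilde S_n)\to T(\tilde S)$ a.s.\ (equivalently, rank agreement eventually), the fact that $T$ is rank-invariant and $\mathcal V$-valued gives $T(\tilde S_n)=T(\tilde S)$ eventually a.s., hence $\phi(\tilde S_n)=\phi(\tilde S)$ eventually, so $E_P[\phi(S_n)]\to E_P[\phi(S)]\le\bar\alpha\le\alpha$ directly; the continuity-point argument is only needed if one insists on working with bare convergence in distribution.
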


Theorem \ref{thm:SD-AsySz} establishes the asymptotic validity of the test in \eqref{eq:our-test}. There are three main reasons why the inequality in \eqref{eq:Asysize1} may be strict, resulting in the limiting rejection probability strictly below $\alpha$. First, for distributions $P$ in the interior of $\mathbf{P}_0$, where the inequality in \eqref{eq:H0} holds strictly for some $t \in \mathbf{R}$, the test is expected to reject with probability less than $\alpha$, with a magnitude depending on the `distance' of $P$ from the boundary of $\mathbf{P}_0$. Second, in cases where $Y$ or $X$ are not continuously distributed, the critical value defined in \eqref{eq:our-cv} serves as an upper bound for the desired quantile, as discussed further in Section \ref{sec:discrete}. Finally, the test statistic $\Delta(u)$ in \eqref{eq:Delta-u} is discretely distributed, taking only a limited number of distinct values. Consequently, the achieved significance level
\begin{equation}\label{eq:bar-alpha}
    \bar{\alpha} := P\left\lbrace \sup_{u \in (0,1)} \Delta(u) > c_{\alpha}(q_y, q) \right\rbrace
\end{equation}
satisfies $\bar{\alpha} \leq \alpha$ by definition, but may be strictly less than $\alpha$. Whether $\bar{\alpha} = \alpha$ occurs or not depends on whether the critical value $c_\alpha(q_y, q)$ aligns exactly with one of the discrete jumps in the CDF of $\sup_{u \in (0,1)} \Delta(u)$, which depends on $\alpha$, $q_y$, and $q$.

We derive Theorem~\ref{thm:SD-AsySz} by linking the weak convergence of the induced order statistics \( S_n \) to the limit variable \( S \) in \eqref{eq:limit-Sn} with the finite-sample validity of the test \( \phi(S) \) in the limit experiment. This connection becomes nontrivial when the data are not continuously distributed. The KS statistic plays a central role in addressing these challenges. First, our proof leverages the fact that the rank of the induced order statistics is preserved as the sample size grows. The KS statistic, being rank-based, ensures that the rejection rate of \( \phi(S_n) \) converges to that of \( \phi(S) \) in the limit experiment. Second, the structure of the KS statistic implies that our test controls size in the limit experiment over our class of null distributions---including those that are discrete or mixed---thereby establishing asymptotic validity. By contrast, as shown in Section \ref{app:other-stats} in the appendix, analogous results generally fail when using the one-sided versions of the Cramér--von Mises or Anderson--Darling statistics, which are commonly used in the stochastic dominance testing. Nonetheless, in the special case where the data are continuously distributed, we show in Theorem \ref{thm:SD-AsySz_CvM} that the Cramér--von Mises-based test remains valid.

\begin{remark}
\citet{goldman/kaplan:2018} develop an inference framework for the two-sample one-sided hypothesis test in \eqref{eq:H0}, interpreting it as a multiple-testing procedure and with applications to the regression discontinuity design. As in this section, they adopt an asymptotic framework with fixed $q$ and construct critical values by simulating i.i.d.\ $U(0,1)$ random variables given a test statistic. Our approach to stochastic dominance testing, however, differs from theirs in several important respects. First, we focus on the KS statistic, whereas they advocate for a different statistic based on the so-called Dirichlet approach, which they argue may offer power advantages. This distinction allows us to accommodate discrete or mixed distributions, while their analysis is restricted to continuously distributed data; see Appendix~\ref{app:other-stats} for details.\footnote{Although \citet[][p.~146]{goldman/kaplan:2018} state that their test remains asymptotically valid—albeit conservative—for discrete data, they do not provide a formal proof. By contrast, Theorem~\ref{thm:SD-AsySz} establishes this property for the KS statistic. Interestingly, asymptotic validity does not extend to other test statistics: Appendix~\ref{app:other-stats} shows that it fails for tests based on the Cramér--von Mises and Anderson--Darling statistics.} Second, we also derive results under an asymptotic framework where $q \to \infty$ as $n \to \infty$. This extension allows us to obtain data-dependent rules for choosing the tuning parameters that satisfy the required rate-of-convergence conditions.
\end{remark}

\subsection{Asymptotic results for large \emph{q}}\label{sec:large-q}

In this section, we examine the asymptotic properties of the test in \eqref{eq:our-test} within a framework where $q:= q_y+q_x \to \infty $ as $n\to \infty$. Our results here follow from the recently derived rates of convergence for induced order statistics in \cite{bugni/canay/kim:26b}, and so we keep the discussion brief. 

The result in Theorem \ref{thm:limit-Sn} establishes that the limiting distribution of the induced order statistics in the vector $S_n$ is such that the elements of the limit vector, denoted by $S$, are mutually independent. Thus, $S_n$ serves as an “approximation’’ to $S$, and while the infeasible test $\phi(S)$ would be ideal, in practice we must work with $\phi(S_n)$. Since $\phi \le 1$, the variational characterization of total variation implies
\begin{equation}\label{eq:TVbound}
\big|E_P[\phi(S_n)] - E_P[\phi(S)]\big| \le \mathrm{TV}\big(\mathcal L(S_n),\,\mathcal L(S)\big)~,
\end{equation}
where $\mathcal L(\cdot)$ denotes the distribution of a random vector and $\mathrm{TV}$ denotes total variation distance. Hence, a bound on the rate of convergence of $\mathrm{TV}\big(\mathcal L(S_n),\mathcal L(S)\big)$ would immediately yield a rate at which the rejection probability $E_P[\phi(S_n)]$ is asymptotically bounded by $\alpha$, since in our setting $ E_P[\phi(S)]\le \alpha$. Unfortunately, Theorem \ref{thm:limit-Sn} is silent about the rate for the convergence of $S_n$ to $S$.

Theorem \ref{thm:size-power-largeq} in Appendix \ref{app:large-q-details} leverages the results in \cite{bugni/canay/kim:26b} to show that, under assumptions slightly stronger than those stated in Section \ref{sec:fixed-q}, it follows that
\begin{equation}\label{eq:rate-convergence}
    \mathrm{TV}\big(\mathcal L(S_n),\mathcal L(S)\big) = O\left(\frac{q_y^{3}}{n^2_y} + \frac{q_x^{3}}{n^2_x} \right)~,
\end{equation}
and so 
\begin{equation}\label{eq:size-large-q}
    E_P[\phi(S_n)] \le \alpha + O\left(\frac{q_y^{3}}{n^2_y} + \frac{q_x^{3}}{n^2_x} \right)~.
\end{equation}
Hence, the test defined in \eqref{eq:our-test} is asymptotically level~$\alpha$ under those same assumptions, provided that $q_y = o(n_y^{2/3})$ and $q_x = o(n_x^{2/3})$. We rely on these rates of convergence in the next section to derive data-dependent rules of thumb for selecting the two tuning parameters. 

\begin{remark}\label{rem:consistency}
   Theorem \ref{thm:size-power-largeq} in Appendix \ref{app:large-q-details} also shows that $\phi(S_n)$ in \eqref{eq:our-test} is consistent for any fixed alternative $P\in \mathbf P_1$ when $q\to \infty$ at the specified rates.
\end{remark}

\begin{remark}
    In a framework where $q \to \infty$, one could alternatively define a test using a critical value based on the $1-\alpha$ quantile of the limiting distribution of the (scaled) KS statistic in \eqref{eq:ks-stat}. Denote this limiting critical value by $c_{\alpha}(\infty)$. Numerical evaluations show that our scaled critical value is smaller than the limiting critical value: $\sqrt{q_yq_x/q}\,c_{\alpha}(q_y,q) \le c_{\alpha}(\infty)$ for $\alpha \in \{0.1, 0.05, 0.01\}$ and $(q_y,q_x) \in \{10,20,\ldots,500\}^2$, with strict inequality in most cases. For example, when $\alpha=5\%$ and $(q_y,q_x) = (70,70)$---a value consistent with our simulations---we obtain $\sqrt{q_yq_x/q}\,c_{\alpha}(q_y,q) = 1.1832 < 1.2239 = c_{\alpha}(\infty)$. The smaller critical value associated with our test translates into an improvement in power of roughly $4\%$ under a local-power approximation for location-shift alternatives. We omit the details for brevity, but they are available upon request.
\end{remark}

\section{Discussion and extensions}\label{sec:extensions}

\subsection{Data-dependent choice of tuning parameters}\label{sec:tuning-parameters}
We now discuss the practical considerations for implementing our test. We propose a data-dependent method for two tuning parameters $q_y$ and $q_x$, drawing on arguments from \cite{armstrong/kolesar:18}, similar to the approach used by \cite{bugni/canay:21}. Importantly, the analysis from these arguments is consistent with the rates of convergence implied by \eqref{eq:rate-convergence}. Concretely, this method leverages a bias-variance trade-off inherent in the estimation of the conditional CDFs used in the test statistic for $\phi(S_n)$, within an asymptotic framework where $q$ can grow slowly with $n$. Our goal is to provide practical guidance for choosing these tuning parameters based on the data, rather than claiming optimality or even validity of any sort. We examine the performance of this rule via Monte Carlo simulations in Section \ref{sec:simulations} and use it in the empirical application in Section \ref{app:empirical-application}.

We propose choosing $q_x$ and $q_y$ using the following data-dependent rules:
\begin{equation}\label{eq:rot-y}
q_y^* := n_y^{1/2}\left(\frac{4 \cdot \phi_{\mu_Z,\sigma_Z}^2(z_{0})}{\frac{2}{\sigma_Z}\frac{1}{\sqrt{2\pi e}} + \frac{|\rho_Y|}{\sigma_Z \sqrt{1 - \rho_Y^2}} \frac{1}{\sqrt{2\pi}}}\right)^{2/3}
\end{equation}
and
\begin{equation}\label{eq:rot-x}
q_x^* := n_x^{1/2}\left(\frac{4 \cdot \phi_{\mu_Z,\sigma_Z}^2(z_{0})}{\frac{2}{\sigma_Z}\frac{1}{\sqrt{2\pi e}} + \frac{|\rho_X|}{\sigma_Z \sqrt{1 - \rho_X^2}} \frac{1}{\sqrt{2\pi}}}\right)^{2/3}~,
\end{equation}
where $\mu_Z := E[Z]$, $\sigma_Z^2 := \text{Var}[Z]$, $\phi_{\mu_Z,\sigma_Z}(\cdot)$ denotes the probability density function of a normal distribution with mean $\mu_Z$ and variance $\sigma_Z^2$, and $\rho_Y$ and $\rho_X$ are the correlation coefficients between $Y$ and $Z$, and $X$ and $Z$, respectively.

To provide some intuition as to why this rule of thumb may be reasonable, assume that the random variable $Z$ is continuous with a density function $f_Z(\cdot)$ satisfying 
\begin{equation*}
   |f_Z(z_1) - f_Z(z_2) | \le C_Z|z_1-z_2| \text{ for a Lipschitz constant }C_Z<\infty~,
\end{equation*}
and any values $z_1,z_2\in \mathcal Z$. In addition, suppose that the conditional CDF of $Y$ satisfies
\begin{equation*}
   \left| \frac{\partial F_Y(t|z)}{\partial z}\right| \le C_Y  \quad \text{ for a constant }C_Y<\infty~, 
\end{equation*}
and that the conditional CDF of $X$ satisfies the same condition with a constant $C_X$. It can be shown that the standardized bias $B_{n_y,q_y}$ associated with the estimator of the conditional CDF $F_Y(\cdot|z_{0})$ satisfies  
\begin{equation}\label{eq:worst-bias}
  |B_{n_y,q_y}|\le \frac{q_y^{3/2}}{n_y}\frac{2C_Z + C_Y}{4f^2_Z(z_{0})}~.   
\end{equation}
Let $t^{\ast}$ denote the right-hand side of \eqref{eq:worst-bias}. Solving for $q_y$, we obtain
\begin{equation*}
q_y = n_y^{2/3}(t^{\ast})^{2/3} \left(\frac{4f^2_Z(z_{0})}{2C_Z + C_Y}\right)^{2/3} ~.
\end{equation*}
The proposed data-dependent rule in \eqref{eq:rot-y} and \eqref{eq:rot-x} can be viewed as under-smoothed approximations of these values, where the unknown Lipschitz constants are approximated by the working model $Z \sim N(\mu_Z, \sigma_Z)$. This guarantees that $q_y = o(n_y^{2/3})$, which is the condition discussed in Section \ref{sec:large-q} to obtain \eqref{eq:rate-convergence}. The constant multiplying $n_y^{1/2}$ in \eqref{eq:rot-y} is intuitive for two reasons. First, it reflects that a steeper density at $z_{0}$, or a steeper derivative of the conditional CDFs at $z_{0}$ calls for smaller $q_y$. In such cases, nearby observations provide a poorer approximation of the quantities at $z_{0}$. Since the maximum slope is determined by the constants $C_Z$ and $C_Y$, the rule is inversely proportional to these constants. Second, the rule accounts for low density at $z_{0}$. When $f_Z(z_{0})$ is small, the $q_y$ nearest observations are likely to be farther from $z_{0}$, again requiring smaller $q_y$. While one could replace the normality assumption with a nonparametric estimator of $f_Z(\cdot)$, it is unfortunately impossible to adaptively choose $C_Z$ and $C_Y$ for testing \eqref{eq:H0} \citep[see, e.g.,][]{armstrong/kolesar:18}. Since any data-dependent rule for $q$ must reference $C_Z$ and $C_Y$, we prioritize simplicity and use normality for both $f_Z(\cdot)$ and the associated constants.

\subsection{Refined critical value for discrete data}\label{sec:discrete}
While our default critical value in \eqref{eq:our-cv} is valid as long as the distributions of random variables $Y$ and $X$ have finitely many discontinuities, it is possible to construct a smaller \textit{refined} critical value when both variables are discretely distributed with a limited number of support points. Our test with the refined critical value still maintains the asymptotic validity, though it comes at the cost of additional computational complexity.

To motivate the refined critical value, let $\mathbf Y$ and $\mathbf X$ denote the support of $Y$ and $X$, respectively. 
Our proof for the asymptotic validity (specifically, Theorem \ref{thm:limit-cont} in the appendix) relies on the following inequalities:
\begin{equation*}
P\left\lbrace \sup_{u\in \mathcal{U}}\Delta(u) > c_{\alpha}(q_y,q) \right\rbrace ~\leq ~ P\left\lbrace \sup_{u\in\mathbf (0,1)}\Delta(u) > c_{\alpha}(q_y,q) \right\rbrace~\leq ~ \alpha.
\end{equation*}
where $\mathcal{U} := \cup_{t\in \mathbf Y}\{u =  F_X(t|z_{0}) \}$ is the set of values that $F_X(t|z_{0})$ takes as $t$ varies over $\mathbf Y$ and $\Delta(u) $ is given in \eqref{eq:Delta-u}. Once we replace the set $ \mathcal{U}$ with the interval $(0,1)$, our default critical value $c_{\alpha}(q_y,q)$, defined as a quantile of $\sup_{u\in(0,1)}\Delta(u)$ in \eqref{eq:our-cv}, ensures the probability is bounded below $\alpha.$ However, replacing the set $ \mathcal{U}$ with $(0,1)$ could be unnecessarily conservative when $ \mathcal{U}$ contains only a few points---either because $\mathbf Y$ contains only a few points or because $F_X(t|z_{0})$ takes few distinct values as $t$ varies. In fact, the cardinality of the set $ \mathcal{U}$ is determined by the smaller support size of $Y$ and $X$.

To define our refined critical value, let $r$ denote the smaller support size of $Y$ and $X$, 
\begin{equation*}
    r := \min \{|\mathbf Y|,|\mathbf X|\}~,
\end{equation*}
and let $\mathbf{U}_r$ denote the collection of all ordered $r$-tuples of distinct points in (0,1),
\begin{equation*}
   \mathbf{U}_r := \{(u_1,u_2,\dots,u_r)\in (0,1)^r : u_1 < u_2 < \cdots < u_r \}~.
\end{equation*}
We denote an arbitrary element of $\mathbf{U}_r$ by $\mathcal{U}_r$. Our refined critical value is defined as 
\begin{equation}\label{eq:our-cv-r}
    c^{r}_{\alpha}(q_y,q) := \inf_{x\in\mathbf R} \left\{\inf_{\mathcal{U}_r\in  \mathbf{U}_r }
    P\left\lbrace \sup_{u\in\mathcal{U}_r }\Delta(u)  \le x \right\rbrace \ge 1-\alpha \right\}~,
\end{equation} 
with $\Delta(u)$ as in \eqref{eq:Delta-u}, and the refined test for the null hypothesis in \eqref{eq:H0} when either $Y$ or $X$  is discretely distributed with a limited number of support points is thus 
\begin{equation}\label{eq:our-test-r}
  \phi^{r}(S_n) := I\{T(S_n)>c^{r}_{\alpha}(q_y,q)\}~.
\end{equation}
Section \ref{app:multiple-target} presents the general version of this test for the case $L>1$.

The power gains of using $c_{\alpha}^{r}(q_y, q)$ over $c_{\alpha}(q_y, q)$ are most pronounced when $r$ is small. Our numerical analysis shows the largest gains for $r \le 10$. Thus, this refinement is most effective for discrete data with a limited number of support points, rather than all discrete settings. Moreover, the computational cost of $c_{\alpha}^{r}(q_y, q)$ increases with $r$, and so as $r$ grows, the gains diminish while the cost rises. 

We propose to compute $c_{\alpha}^{r}(q_y, q)$ numerically, by solving the following optimization problem: 
\begin{equation}\label{eq:c-r-optimization}
    c_{\alpha}^{r}(q_y, q) =  \min \left\{ x\in [c_{\rm lb}, c_{\rm ub}] \cap \mathbf{T}: \inf_{\mathcal{U}_r\in  \mathbf{U}_r }
    P\left\lbrace \max_{u\in\mathcal{U}_r } \Delta(u)   \le x \right\rbrace~ \ge 1-\alpha \right\}~,
\end{equation}
where $\mathbf{T}$ is the support of $\Delta(u)$ in \eqref{eq:Delta-u}. Here, $c_{\rm ub} = c_{\alpha}(q_y, q)$ and $c_{\rm lb}$ is given by 
\begin{equation*}
    c_{\rm lb} := \min_{x\in\mathbf T} \left\{
    P\left\lbrace \max_{u\in\left\{\frac{1}{1+r},\frac{2}{1+r},\dots,\frac{r}{1+r} \right\} } \Delta(u)   \le x \right\rbrace \ge 1-\alpha \right\}~.
\end{equation*}
The fact that $c_{\alpha}(q_y, q)$ provides a valid upper bound is unsurprising given the preceding discussion. On the other hand, $c_{\rm lb}$ serves as a valid lower bound because $\left\{\frac{1}{1+r},\frac{2}{1+r},\dots,\frac{r}{1+r} \right\}$ is a specific element in $\mathbf U_{r}$. In our numerical evaluations, we often found that  $c_{\alpha}^{r}(q_y, q) = c_{\rm lb}$, but not always. This indicates that the additional optimization in \eqref{eq:c-r-optimization} cannot be generally avoided. For modest values of  $r$ and $q$, however, this optimization step is computationally straightforward, primarily due to the relatively small number of points typically found in $[c_{\rm lb}, c_{\rm ub}] \cap \mathbf{T}$.

\begin{remark}\label{rem:support}
    The support $\mathbf{T}$ of $\Delta(u)$ is a discrete subset of $[-1,1]$ and can be easily enumerated for modest values of $q$. Specifically, the support has cardinality bounded by  $(q_y+1)(q_x+1)$, and it is independent of the realizations of the random variables as well as the specific value that $ u \in (0,1)$  takes.
\end{remark}

\subsection{Properties of our test in the limit experiment}\label{sec:permutations}

In this section, we study the properties of the test $\phi(\cdot)$ in \eqref{eq:our-test} in the limit experiment associated with the asymptotic framework in Section \ref{sec:fixed-q}. By Theorem \ref{thm:limit-Sn}, this test is equivalent to $\phi(S)$, where
\begin{equation}\label{eq:limit-S}
    S = (S_1,\dots, S_q), \quad S_j \sim F_Y(\cdot|z_{0}) \text{ for } j \leq q_y \text{ and } S_j \sim F_X(\cdot|z_{0}) \text{ for } j > q_y~.
\end{equation}
In words, in the limit experiment, we observe one random sample of size $q_y$ from the distribution $F_Y(\cdot|z_{0})$ and the other independent random sample of size $q_x$ from the distribution $F_X(\cdot|z_{0})$. The KS test statistic in \eqref{eq:ks-stat} is a function of $S$, and the critical value $c_{\alpha}(q_y, q)$ remains unchanged. We begin our discussion by focusing on the case where $S$ is \emph{continuously distributed}. 

The finite-sample properties of the two-sample one-sided KS statistic have been extensively studied in the literature of testing equality of two continuous (unconditional) distributions. Early works established that the test statistic's finite-sample distribution is pivotal under the null and developed algorithms for its computation (\cite{gnedenko/korolyuk:1951,korolyuk:1955,blackman:1956,hodges:1958,hajek/sidak:1967}, and \cite{durbin:1973}). Our critical value is obtained from this pivotal finite-sample distribution, despite the different null hypothesis. 

This connection to the literature of testing equality of two distributions arises from the observation that the distribution \(F_Y(\cdot|z_{0}) = F_X(\cdot|z_{0})\) is the least favorable within the set of null distributions \(\mathbf{P}_0\) in \eqref{eq:P_0} satisfying stochastic dominance, a point first made by \cite{lehmann:1951} and later reiterated by \cite{hodges:1958,mcfadden:1989}, and \cite{goldman/kaplan:2018}. We define the subset of continuous distributions in \(\mathbf{P}_0\) that satisfy \(F_Y(\cdot|z_{0}) = F_X(\cdot|z_{0})\) as \(\mathbf{P}^*_0\), and denote a generic element in \(\mathbf{P}^*_0\) by \(P^*\). Although these papers studied the distribution of KS statistic under \(P^*\), they did not provide a formal proof that \(P^*\) determines the size of the test under the null hypothesis in \eqref{eq:H0}. For completeness, we formally state this result in Lemma \ref{lem:limit-exact} below, and provide its proof.

\begin{lemma}\label{lem:limit-exact}
Let \(\mathbf{P}^*_0 \subset \mathbf{P}_0\) be the subset of distributions \(P^*\) of the random variable \(S\) in \eqref{eq:limit-S}, such that for a continuous CDF $F$, \(S_j \sim F\) for all \(j = 1, \ldots, q\). Let \(\phi(\cdot)\) be the test defined in \eqref{eq:our-test}. Then, for any $P^*\in \mathbf P^*_0$, we have
\[
\sup_{P \in \mathbf{P}_0} E_P[\phi(S)] = E_{P^*}[\phi(S)] = \bar{\alpha}~,
\]
where $\bar{\alpha}\le \alpha $ is defined in \eqref{eq:bar-alpha}. 
\end{lemma}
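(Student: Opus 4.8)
The plan is to prove the two equalities in the display separately, in the order $E_{P^*}[\phi(S)] = \bar\alpha$ first, then $\sup_{P\in\mathbf P_0} E_P[\phi(S)] = E_{P^*}[\phi(S)]$. For the first equality, I would fix any $P^*\in\mathbf P_0^*$, so that $S_1,\dots,S_q$ are i.i.d.\ from a continuous cdf $F$. Because the test statistic $T(\cdot)$ in \eqref{eq:ks-stat} is invariant to strictly increasing transformations of the data (it depends on $S$ only through the relative order of the $Y$-part and the $X$-part), I may assume without loss of generality that $F$ is the $U[0,1]$ cdf; more precisely, $T(S) \stackrel{d}{=} T((F(S_1),\dots,F(S_q)))$ and the latter vector is i.i.d.\ $U[0,1]$. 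Then I would identify $T(S)$ under $P^*$ with $\sup_{u\in(0,1)}\Delta(u)$: the empirical cdf difference $\hat F_{n,Y}-\hat F_{n,X}$ evaluated along the sample is exactly $\Delta$ evaluated at the $U[0,1]$ draws, since $\hat F_{n,Y}(t)=\frac1{q_y}\sum_{j\le q_y} I\{U_j\le t\}$ and likewise for $X$, so $T(S)\stackrel{d}{=}\sup_{u\in(0,1)}\Delta(u)$ with $\{U_j\}$ i.i.d.\ $U[0,1]$ — the same object defining $c_\alpha(q_y,q)$ in \eqref{eq:our-cv}. By Remark~\ref{rem:c-approx} the supremum is attained among the $q_y$ points coming from the $Y$-block, so this is a genuine maximum and $\sup_{u}\Delta(u)$ is discretely distributed with support $\mathbf T$. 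Hence $E_{P^*}[\phi(S)] = P\{T(S)>c_\alpha(q_y,q)\} = P\{\sup_u\Delta(u)>c_\alpha(q_y,q)\} = \bar\alpha$ by the very definition \eqref{eq:bar-alpha}. Crucially this value does not depend on which continuous $F$ was chosen, which is why the claim holds for \emph{any} $P^*\in\mathbf P_0^*$.

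For the second equality I need to show $P^*$ is least favorable within $\mathbf P_0$, i.e.\ $E_P[\phi(S)]\le E_{P^*}[\phi(S)]$ for every $P\in\mathbf P_0$ (the reverse inequality $\sup_{P\in\mathbf P_0} E_P[\phi(S)]\ge E_{P^*}[\phi(S)]$ is trivial since $P^*\in\mathbf P_0$). Take an arbitrary $P\in\mathbf P_0$, with $S_j\sim F_Y(\cdot|z_0)$ for $j\le q_y$ and $S_j\sim F_X(\cdot|z_0)$ for $j>q_y$, all independent, and with $F_Y(t|z_0)\le F_X(t|z_0)$ for all $t$. The idea is a stochastic-ordering / coupling argument: since $T(S)=\max_{k\le q_y}\big(\hat F_{n,Y}(S_{n,k})-\hat F_{n,X}(S_{n,k})\big)$, making the $Y$-observations stochastically larger only pushes $\hat F_{n,Y}$ down and the $X$-part's contribution is unaffected in distribution if we also make $X$ stochastically larger in a way that preserves the comparison. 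Concretely I would couple $P$ to $P^*$ as follows: let $F$ be any continuous cdf with $F\le F_X(\cdot|z_0)$ pointwise (for instance one can take $F=F_X(\cdot|z_0)$ if it is continuous, or a continuous minorant in general), build i.i.d.\ uniforms, and set the $Y$-observations under $P$ via the quantile transform of $F_Y(\cdot|z_0)$ and the "tilde" observations via the quantile transform of $F$; then $F_Y^{-1}(U)\ge F^{-1}(U)$ pointwise because $F_Y(\cdot|z_0)\le F_X(\cdot|z_0)$ and $F_X(\cdot|z_0)\le F$ would be the wrong direction — so I must be careful about the direction of monotonicity and instead argue directly at the level of the statistic. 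The clean route is: $T$ is coordinatewise nonincreasing in each $Y$-coordinate and coordinatewise nondecreasing in each $X$-coordinate (increasing a $Y$-value weakly decreases $\hat F_{n,Y}$ at the relevant evaluation points; increasing an $X$-value weakly decreases $\hat F_{n,X}$, hence weakly increases $T$ — I need to double-check this last sign but it is the standard monotonicity of the KS statistic under the SD ordering). Combined with $F_Y(\cdot|z_0)\succeq$ (stochastically dominates, i.e.\ is larger than) $F_X(\cdot|z_0)$ under $H_0$, a monotone coupling gives $T(S)\le_{st} T(S^*)$ where $S^*$ has all coordinates drawn from the common continuous marginal, yielding $E_P[\phi(S)]=P\{T(S)>c_\alpha\}\le P\{T(S^*)>c_\alpha\}=E_{P^*}[\phi(S)]$.

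The main obstacle is the least-favorability step: one has to handle that under a general $P\in\mathbf P_0$ the marginals $F_Y(\cdot|z_0)$ and $F_X(\cdot|z_0)$ are \emph{different} from each other and possibly discontinuous, whereas $P^*$ uses a single \emph{continuous} law for all $q$ coordinates. Bridging these requires two moves stacked together — first dominate $F_Y(\cdot|z_0)$ by $F_X(\cdot|z_0)$ (this is exactly $H_0$), reducing to the case $F_Y(\cdot|z_0)=F_X(\cdot|z_0)=:G$ where all coordinates are i.i.d.\ from a common $G$; and then, when $G$ has ties, perturb it to a continuous law. For the tie-handling I would argue that ties can only \emph{decrease} $T$ relative to the no-tie (continuous) case — intuitively, when $S_{n,k}$ equals several $X$-observations the indicators $I\{S_{n,q_y+j}\le S_{n,k}\}$ count them, making $\hat F_{n,X}(S_{n,k})$ larger and hence $T$ smaller — so $E_G[\phi(S)]\le E_{P^*}[\phi(S)]$; alternatively add an independent infinitesimal continuous jitter to all coordinates and pass to the limit, using that $\sup_u\Delta(u)$ has atoms so the distribution function is continuous from the right at $c_\alpha$ except possibly at a jump, handled via the $\le$ in the definition of $c_\alpha$. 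Either way the technical care is in the monotonicity bookkeeping of the KS statistic under coordinatewise changes and in the treatment of ties; once those are pinned down, assembling $\sup_{P\in\mathbf P_0}E_P[\phi(S)]=E_{P^*}[\phi(S)]=\bar\alpha$ is immediate.
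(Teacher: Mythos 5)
Your proof is correct and follows essentially the same route as the paper's, just with different packaging. The equality $E_{P^*}[\phi(S)]=\bar{\alpha}$ is established exactly as in the paper (quantile transform plus continuity of $F$ giving $\{F(t):t\in\mathbf{R}\}=(0,1)$), and your coupling/monotonicity argument for the least-favorability step is a hands-on re-derivation of Theorem~\ref{thm:limit-cont}, which the paper simply invokes: your ``two moves'' --- replace each $Y$-coordinate $Q_Y(U_j|z_{0})$ by the pointwise smaller $Q_X(U_j|z_{0})$ (weakly increasing $T$ since $T$ is nonincreasing in $Y$-coordinates), then bound the supremum over the range $\mathcal{U}$ of $F_X(\cdot|z_{0})$ by the supremum over all of $(0,1)$ --- correspond precisely to steps (3) and (5) in that theorem's proof. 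The sign/direction issues you flagged do check out in the end ($F_Y\le F_X$ means $Q_Y\ge Q_X$, so swapping to $Q_X$ lowers the $Y$-values and raises $T$), and the tie-handling via the range restriction is exactly what the paper does; the only thing you lose by not citing Theorem~\ref{thm:limit-cont} directly is a bit of economy, not correctness.
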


Lemma \ref{lem:limit-exact} shows that when $S \sim P^*\in \mathbf P^*_0$, our test is `exact' in that it achieves the closest possible rejection rate to $\alpha$, defined as $\bar{\alpha}$ in \eqref{eq:bar-alpha}. In this case, we have 
  $$ T(S) \stackrel{d}{=} T(U) \quad \text{where}\quad \{U_j \sim U[0,1]: 1 \leq j \leq q\} \; \text{are i.i.d.}$$
This demonstrates that the analytical (finite-sample) critical value $c_{\alpha}(q_y, q)$ for the one-sided KS test can be accurately approximated by simulating uniform random variables. However, when $S \sim P\in \mathbf P_0$ is such that $P\{S_i = S_{j}: i\not=j\} > 0$, the connection $T(S) \stackrel{d}{=} T(U)$ breaks down. In this case, $c_{\alpha}(q_y, q)$ is no longer the finite-sample \emph{analytical} quantile of $T(S)$, but rather a valid upper bound.

When \(S\) is continuously distributed, the proposed test \(\phi(S)\) is equivalent to a non-randomized permutation test. This connection establishes the validity of permutation tests for testing stochastic dominance. While \cite{hodges:1958} and \cite{mcfadden:1989} suggested that a permutation test could be used for this purpose, they did not provide a formal justification. To the best of our knowledge, our proof of this result is novel.

To formally define a permutation test, we introduce the following notation. Let \(\mathbf{G}\) denote the set of all permutations \(\pi = (\pi(1), \dots, \pi(q))\) of \(\{1, \dots, q\}\). The permuted values of \(S\) are given by
\[
S^{\pi} = (S_{\pi(1)}, \dots, S_{\pi(q)})~.
\]
The (non-randomized) permutation test is then defined as follows:
\begin{align}
    \phi^{\rm p}(S) &:= I\left \{ T(S) > c^{\rm p}_{\alpha}(S) \right \} \notag \\
    c^{\rm p}_{\alpha}(S) &:= \inf_{x \in \mathbf{R}} \left \{ \frac{1}{|\mathbf{G}|} \sum_{\pi \in \mathbf{G}} I\{T(S^{\pi}) \leq x\} \geq 1 - \alpha \right \}~.\label{eq:perm-cv}
\end{align}
It follows from standard arguments (see, e.g., \citet[][Ch.\ 15]{lehmann/romano:05}) that when $S$ is invariant to permutations, i.e., $S \stackrel{d}{=} S^{\pi}$, the randomized version of the test $\phi^{\rm p}(S)$ is exact in finite samples. However, under the null hypothesis in \eqref{eq:H0}, we have that $S \stackrel{d}{\not =} S^{\pi}$ for some $P \in \mathbf{P}_0$, and so invariance (or the so-called randomization hypothesis) fails. Therefore, the traditional finite-sample arguments for validity no longer apply. Alternative arguments that claim validity of permutation tests when invariance does not hold typically require $q \to \infty$; see \cite{chung/romano:13,canay/romano/shaikh:17}, and \cite{bugni/canay/shaikh:18}, among others. In our current setting, where $q$ is fixed, such arguments do not apply.

We contribute to this literature by demonstrating that, when \(S\) is continuously distributed, our test is equivalent to a non-randomized permutation test. The formal statement of this result follows.

\begin{lemma}\label{lem:limit-permutation}
For any random variable $\tilde{S}\in \mathbf{R}^{q}$ with $P\{\tilde{S}_{i}\neq \tilde{S}_{j} : i\not=j\}=1$, we have
\begin{equation}\label{eq:same-cv}
P\{ c_{\alpha }^{\rm p}(\tilde{S}) =c_{\alpha }( q_{y},q)\} =1~.
\end{equation}
Moreover, \eqref{eq:same-cv} no longer holds if $\tilde{S}$ is such that $P\{\tilde{S}_{i}=\tilde{S}_{j}: i\not=j\}>0$.
\end{lemma}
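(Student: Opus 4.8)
The plan is to show that when $\tilde S$ has no ties, the permutation distribution of $T(\tilde S^\pi)$ over $\pi \in \mathbf G$ is \emph{deterministic} — identical for every realization of $\tilde S$ — and in fact coincides with the distribution of $T(U)$ for $U = (U_1,\dots,U_q)$ i.i.d.\ uniform, which by Remark \ref{rem:c-approx} / the discussion after Lemma \ref{lem:limit-exact} is exactly what defines $c_\alpha(q_y,q)$. The key structural fact is that $T(\cdot)$, as displayed in \eqref{eq:ks-stat}, depends on its argument only through the relative ordering of the coordinates, and more precisely only through which of the first $q_y$ coordinates are ``$Y$-labeled'': writing the sorted values of $\tilde S$ and recording, for each rank, whether it came from a position $\le q_y$ or $> q_y$, the statistic is a fixed function of that binary string of length $q$ with exactly $q_y$ ones. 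Since $\tilde S$ has distinct coordinates almost surely, this labeling is well defined, and as $\pi$ ranges over all $q!$ permutations, the induced binary string takes each of the $\binom{q}{q_y}$ patterns exactly $q_y!\, q_x!$ times, \emph{regardless of the actual values of $\tilde S$}. Hence the empirical distribution $\frac{1}{|\mathbf G|}\sum_{\pi} I\{T(\tilde S^\pi)\le x\}$ is a non-random function of $x$, equal to the uniform average of $I\{T(\cdot)\le x\}$ over the $\binom{q}{q_y}$ orderings.

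First I would make the ``depends only on the ordering'' claim precise: define for a vector $s$ with distinct entries the rank vector and the associated label pattern $L(s) \in \{0,1\}^q$, and verify from \eqref{eq:ks-stat}–\eqref{eq:emp-cdf} that $T(s) = \tau(L(s))$ for an explicit deterministic $\tau$. Second, I would observe that for $U$ i.i.d.\ continuous, $L(U)$ is uniformly distributed over the $\binom{q}{q_y}$ patterns (by exchangeability / equal likelihood of all orderings), so the law of $T(U)$ — and therefore $c_\alpha(q_y,q)$ as defined in \eqref{eq:our-cv} — equals the left-continuous inverse at level $1-\alpha$ of $x \mapsto \binom{q}{q_y}^{-1}\sum_{\ell} I\{\tau(\ell)\le x\}$. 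Third, I would compute the permutation c.d.f.\ for a fixed realization $\tilde s$ with distinct entries: grouping the $q!$ permutations by the pattern $L(\tilde s^\pi)$ they produce, each pattern arises $q_y! q_x!$ times, so $\frac{1}{q!}\sum_\pi I\{T(\tilde s^\pi)\le x\} = \binom{q}{q_y}^{-1}\sum_\ell I\{\tau(\ell)\le x\}$ — the identical function. Taking infima over $x$ at level $1-\alpha$ gives $c_\alpha^{\rm p}(\tilde s) = c_\alpha(q_y,q)$ for every such $\tilde s$, and since $P\{\tilde S_i \neq \tilde S_j, i\neq j\}=1$ this holds almost surely, establishing \eqref{eq:same-cv}.

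For the final ``moreover'' claim I would exhibit a concrete counterexample: take $q_y = q_x = 1$ (so $q=2$) and let $\tilde S = (\tilde S_1,\tilde S_2)$ with $P\{\tilde S_1 = \tilde S_2\} = p \in (0,1)$, e.g.\ $\tilde S_1, \tilde S_2$ i.i.d.\ on $\{0,1\}$. On the event $\tilde S_1 = \tilde S_2$ one has $T(\tilde S^\pi) = 0$ for both $\pi$, so $c_\alpha^{\rm p}(\tilde S) = 0$, whereas $c_\alpha(1,2) > 0$ (since $\sup_u \Delta(u)$ takes the value $1$ with positive probability and, for small $\alpha$, its $(1-\alpha)$-quantile is $1$); hence $P\{c_\alpha^{\rm p}(\tilde S) = c_\alpha(1,2)\} \le 1 - p < 1$. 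Spelling out the thresholds for a specific small $\alpha$ (say $\alpha$ such that $c_\alpha(1,2)=1$) makes the strict inequality rigorous. The main obstacle — and the step deserving the most care — is the bookkeeping in the third step: verifying cleanly that the map $\pi \mapsto L(\tilde s^\pi)$ is ``balanced'' (each label pattern hit exactly $q_y!\,q_x!$ times) and that $T$ genuinely factors through $L$ with no dependence on tie-breaking or on the numerical values; once that combinatorial identity is in hand, the equality of critical values is immediate. All other steps are routine.
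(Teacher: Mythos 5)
Your proposal is correct and follows essentially the same route as the paper's proof: both arguments hinge on (i) the fact that $T$ in \eqref{eq:ks-stat} is a rank statistic, so when $\tilde S$ has distinct coordinates the permutation distribution $\pi\mapsto T(\tilde S^\pi)$ is a deterministic object independent of the realized values, and (ii) the identification of that deterministic distribution with the law of $T(U)$ for i.i.d.\ uniform $U$, which is precisely what defines $c_\alpha(q_y,q)$. The only cosmetic difference is the bookkeeping: you work with the length-$q$ label pattern and count that each of the $\binom{q}{q_y}$ patterns is hit exactly $q_y!\,q_x!$ times, whereas the paper writes $T(s)=T^\ast(R(s))$ in terms of the full rank vector and invokes the group structure of $\mathbf G$ (so that $\{\pi\circ\bar\pi(s):\pi\in\mathbf G\}=\mathbf G$) to show $c_\alpha^{\rm p}(s)=c_\alpha^{\rm p}$ is constant; these are two parametrizations of the same combinatorial fact. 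Your counterexample for the ``moreover'' part (i.i.d.\ Bernoulli with $q_y=q_x=1$, ties with probability $p\in(0,1)$, on which $c_\alpha^{\rm p}(\tilde S)=0<c_\alpha(1,2)$) is valid and slightly more general in flavor than the paper's, which takes the degenerate constant vector so that ties occur with probability one and the failure of \eqref{eq:same-cv} is immediate; both suffice, since the claim only requires exhibiting one $\tilde S$ with $P\{\tilde S_i=\tilde S_j : i\neq j\}>0$ for which \eqref{eq:same-cv} fails.
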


Lemma \ref{lem:limit-permutation} shows that our data-independent critical value in \eqref{eq:our-cv} is equivalent to the critical value of a permutation test when the random variable $\tilde{S}$ has no ties. Lemma \ref{lem:limit-permutation} immediately implies when $S_n$ in \eqref{eq:Sn} is continuously distributed, we have
\begin{equation}\label{eq:ours-same-permutations}
\phi^{\rm p}(S_n) \stackrel{a.s}{=} \phi(S_n)~.
\end{equation}
It follows from \eqref{eq:ours-same-permutations} and Theorem \ref{thm:SD-AsySz} that, when $S$ is continuously distributed, a non-randomized permutation test controls the limiting rejection probability under the null hypothesis in \eqref{eq:H0}. Importantly, this result holds even though invariance does not hold for all $P \in \mathbf{P}_0$.

\begin{remark}\label{rem:xinran}
Lemmas \ref{lem:limit-exact} and \ref{lem:limit-permutation} establish the validity of permutation tests for testing stochastic dominance in finite-sample settings. This result illustrates an instance where permutation tests can provide finite-sample valid inference even in settings where the randomization hypothesis does not hold. The only similar result we are aware of is that of \cite{caughey/etal:2023}, who consider a design-based framework with the null hypothesis $\tau_i \le 0$, where $\tau_i$ represents a unit-level treatment effect. Like our work, their result is valid under the condition that the random variables are continuously distributed or that a random tie-breaking rule is applied to handle ties.
\end{remark}

The results in Lemmas \ref{lem:limit-exact} and \ref{lem:limit-permutation} reveal interesting and novel connections between our test and classical arguments involving finite-sample critical values and permutation tests. However, these results critically depend on the random variable $S$ being continuously distributed and do not extend to cases where $S$ is discretely distributed.

When \( S \) is discrete and ties occur with positive probability, i.e., \( P\{S_i = S_{j}\} > 0: i\not=j \), the finite-sample distribution of the KS test statistic \( T(S) \) depends on the number and location of these ties. A natural approach to handle ties is to redefine the test statistic to randomly break them, effectively making the test a randomized one. While this would allow us to establish an analog of Lemma \ref{lem:limit-permutation} for discrete data, we do not pursue such an extension, as randomized tests are rarely used in practice. Despite our best efforts, we were unable to demonstrate that a permutation test could control size under the null hypothesis of stochastic dominance in \eqref{eq:H0} when $S$ is discrete, without relying on random tie-breaking rules.

\section{Simulations}\label{sec:simulations}

In this section, we evaluate the finite-sample performance of the test in \eqref{eq:our-test} for $L=1$ or the test proposed in Section \ref{app:multiple-target} for $L>1$ through a simulation study. We present a variety of data-generating processes to illustrate both the strengths and potential limitations of our test.

We consider seven distinct designs with four cases, (a) to (d), in each design as follows: 
\begin{itemize}
    \item {\bf Case (a)}: the null hypothesis holds with equality and $L=1$
    \item {\bf Case (b)}: the null hypothesis holds with strict inequality and $L=1$
    \item {\bf Case (c)}: the null hypothesis holds with equality and $L=2$
    \item {\bf Case (d)}: the null hypothesis is violated and $L=1$
\end{itemize}

The first three designs are based on the following location–scale model:
\begin{equation}\label{eq:loc-scale}
Y = \mu_Y(Z) + \sigma_Y(Z) U \quad \text{and} \quad X = \mu_X(Z) + \sigma_X(Z) V~,
\end{equation}
where $U$ and $V$ are random variables with specified distributions, and the conditioning variable $Z$ follows a non-negative Beta$(2,2)$ distribution. Under this location–scale model, whenever $\sigma_Y(z_{\ell}) = \sigma_X(z_{\ell})$, the null hypothesis in \eqref{eq:H0} holds as long as
\begin{equation}\label{eq:mus-fsd}
    \mu_Y(z_{\ell}) \ge \mu_X(z_{\ell})~.
\end{equation}
Design 1 satisfies \eqref{eq:mus-fsd} for all $z\in(0,1)$. Design 2 satisfies \eqref{eq:mus-fsd} at $z_{\ell} = 0.5$, but violates the inequality for $z > 0.5$, which may affect the performance of our test in finite samples due to its reliance on induced order statistics. Design 3 is such that $U$ and $V$ are $U[0,1]$, which guarantees that \eqref{eq:H0} holds even when $\sigma_Y(z_{\ell}) < \sigma_X(z_{\ell})$, provided $\mu_Y(z_{\ell}) - \mu_X(z_{\ell}) \ge \sigma_X(z_{\ell}) - \sigma_Y(z_{\ell})$. 

Design 4 is a slight variation of the location-scale model to accommodate a regression discontinuity design (RDD): 
\begin{equation}\label{eq:rdd-design}
Y = \mu_Y(Z) + \sigma_Y(Z) U \quad \text{if } Z \geq 0, \quad \text{and} \quad X = \mu_X(Z) + \sigma_X(Z) V \quad \text{if } Z < 0~.
\end{equation}
Following \cite{shen/zhang:16}, we consider the case where $U$ and $V$ are independent $N(0,1)$ random variables, $Z \sim 2\mathrm{Beta}(2,2) - 1$, and both $\mu_Y(Z)$ and $\mu_X(Z)$ are defined in terms of the function
\begin{equation}\label{eq:SZ-mu}
\mu(z) = 0.61 - 0.02 z + 0.06 z^{2} + 0.17 z^{3}~.
\end{equation}
A key feature shared by Designs 1 through 4 is that both $X$ and $Y$ are continuously distributed; a feature not shared by the next three designs.

Design 5 is such that $U$ and $V$ follow log-normal distributions with the bottom 20\% censored. This setup reflects features of wage distributions, which often exhibit a point mass at the minimum wage. Design 6 defines conditional probabilities \( P\{X = k | Z\} \) and \( P\{Y = k | Z\} \) as
$$ P\{X = k | Z\}   = \frac{e^{\theta^x_k(\frac{3}{2}-Z)}}{\sum_{j=1}^3 e^{\theta^x_j(\frac{3}{2}-Z)}} ~~\text{and}~~ 
    P\{Y = k | Z\} = \frac{e^{\theta_k^y(\frac{3}{2}-Z)}}{\sum_{j=1}^3 e^{\theta^y_j(\frac{3}{2}-Z)}} ~\text{for }k=1,2,3, $$
where, for $\mu_Y(z) \in [-1,1]$,
\begin{equation*}
  \theta_1^y = \theta^x_1 - \mu_Y(z), \quad \theta_2^y = \theta^x_2 + \mu_Y(z), \quad\text{and}\quad  \theta_3^y = \theta^x_3~. 
\end{equation*}
The parameter  $ \theta^x_k = (-0.5, -1.5, -2)$ controls the baseline log-odds of each category for $X$, while the factor  $(3/2 - Z)$  introduces a monotonic dependence on  $Z$. Finally, Design 7 considers 
$$ X|Z \sim B\Big([25Z], \frac{1}{2}\Big)\quad \text{and}\quad Y|Z \sim B\Big([25Z] + \mu_Y(Z), \frac{1}{2}\Big)~,$$
where $B(\cdot, \cdot)$ denotes a binomial distribution and $[x]$ represents the nearest integer to $x$.

\begin{figure}[ht!]
\begin{center}
    \begin{adjustwidth}{-20mm}{-20mm} 
        \centering
        \includegraphics[width=0.9\textwidth]{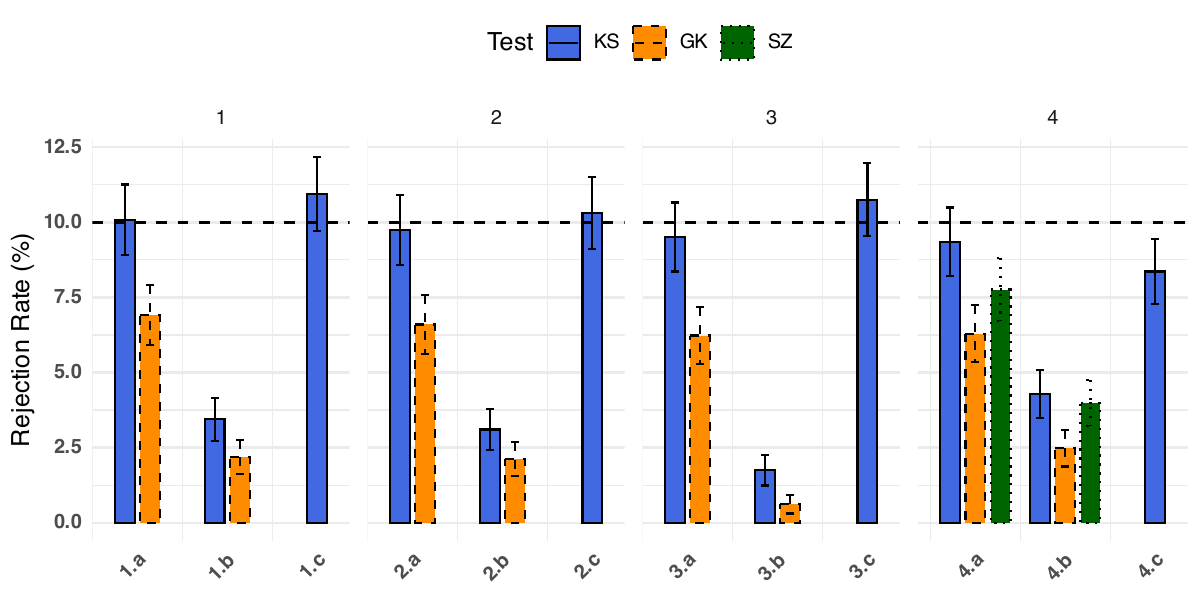} 
    \end{adjustwidth}
    \caption{\small Rejection rates under $H_0$ for Designs 1-4 in cases (a)-(c): $n=1,000$, $\alpha=10\%$, $MC=10,000$.}
    \label{fig:rejection-rates-14}
\end{center}
\end{figure}

\begin{figure}[ht!]
\begin{center}
    \begin{adjustwidth}{-20mm}{-20mm} 
        \centering
        \includegraphics[width=0.9\textwidth]{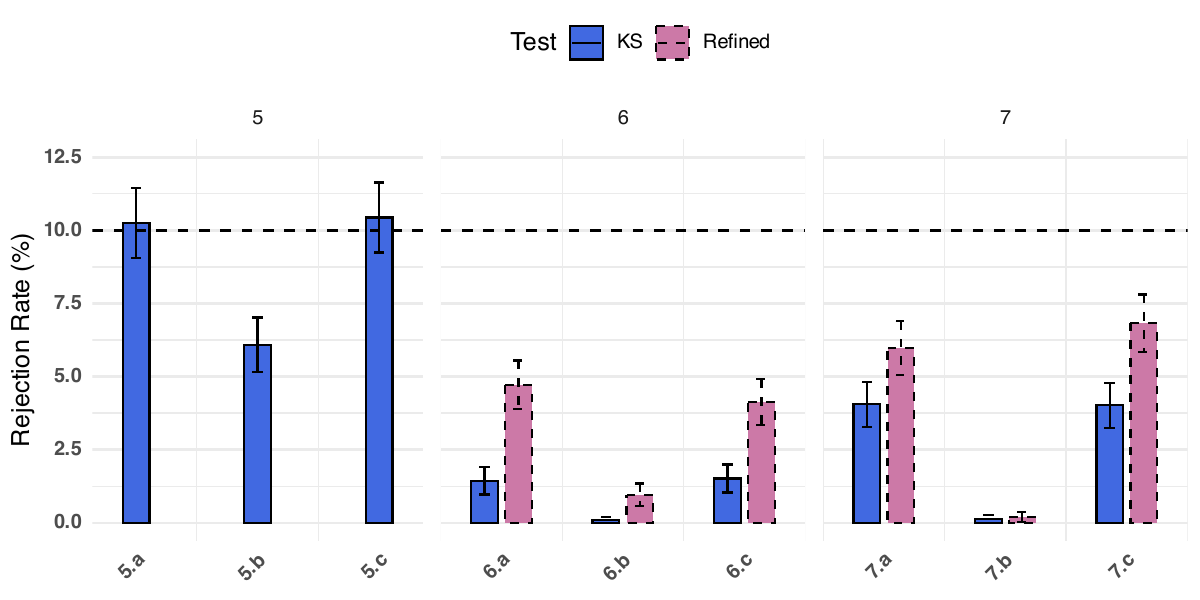} 
    \end{adjustwidth}
    \caption{\small Rejection rates under $H_0$ for Designs 5-7 in cases (a)-(c): $n=1,000$, $\alpha=10\%$, $MC=10,000$.}
    \label{fig:rejection-rates-57}
\end{center}
\end{figure}

The parameter values for all Designs are reported in Appendix \ref{app:simulations}, where we also report the mean values of \(q_y^*\) and \(q_x^*\) across simulations.

We report results for sample size $n = 1,000$ and nominal level $\alpha = 10\%$, based on $10,000$ Monte Carlo simulations to test the null hypothesis in \eqref{eq:H0}. To implement the test $\phi(\cdot)$ in \eqref{eq:our-test}, denoted by `KS' in the figures, we select the tuning parameters $(q_y, q_x)$ using the data-dependent rules in \eqref{eq:rot-y} and \eqref{eq:rot-x}. For Designs 1-3 with $L=1$, we compare our test with the one proposed in \citet[][GK]{goldman/kaplan:2018}. For the RDD Design 7 with $L=1$, we compare our test with the method in \citet[][SZ]{shen/zhang:16}.\footnote{SZ is implemented using the authors’ rule of thumb, whereas for GK we use our own rule of thumb, as the original paper does not provide guidance on how to select their tuning parameter. Note that neither of these tests is defined when $L>1$.} For the mixed and discrete designs, we present results for our test in \eqref{eq:our-test} as well as for its refined version in \eqref{eq:our-test-r}.

Figure \ref{fig:rejection-rates-14} reports rejection probabilities under the null hypothesis for the continuously distributed designs. When the data-generating process satisfies the null in \eqref{eq:H0} with equality (case (a)), the KS test’s rejection probabilities closely align with the nominal level and consistently outperform the GK and SZ tests. When the null holds with strict inequality (case (b)), rejection rates fall below \(\alpha\), consistent with our critical value serving as a valid upper bound for the true quantile. Case (c), where \(L = 2\), shows behavior similar to case (a), where \(L = 1\). Overall, the KS test exhibits excellent size control.

\begin{figure}[ht!]
\begin{center}
    \begin{adjustwidth}{-10mm}{-10mm} 
        \centering
        \includegraphics[width=0.9\textwidth]{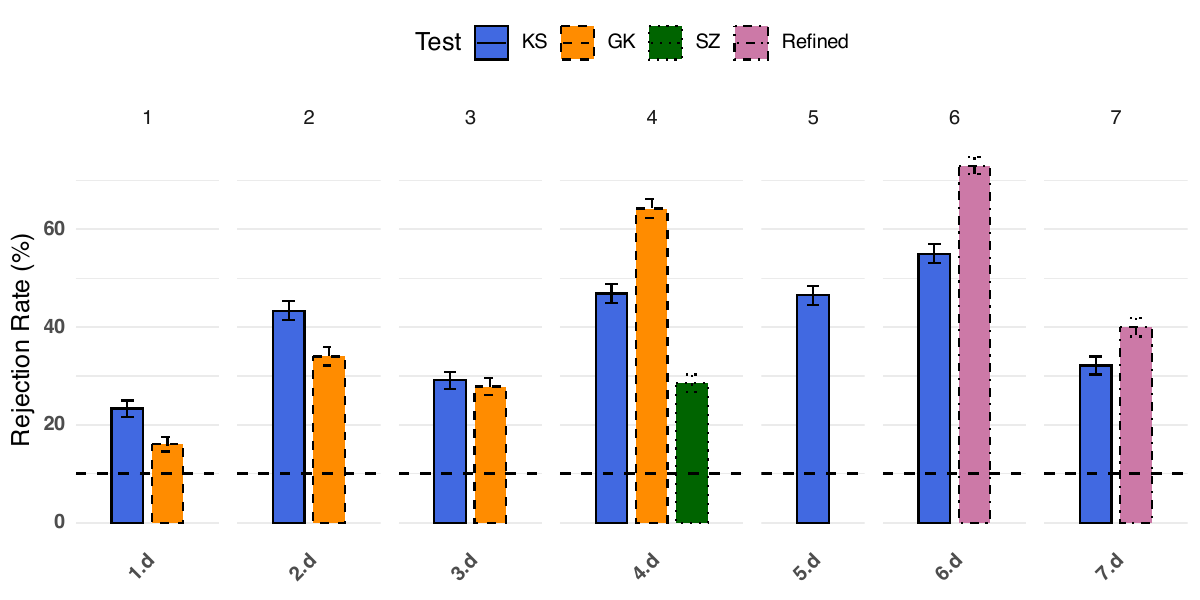} 
    \end{adjustwidth}
    \caption{\small Rejection rates under $H_1$ for Designs 1-7: $n=1,000$, $\alpha=10\%$, $MC=10,000$.}
        \label{fig:power-rates}
\end{center}
\end{figure}

Figure \ref{fig:rejection-rates-57} reports rejection probabilities under the null hypothesis for the mixed and discretely distributed designs. When the data-generating process satisfies the null in \eqref{eq:H0} with equality, the KS test we propose in \eqref{eq:our-test} works well when the data is mixed, but it is conservative when the data is discrete with few support points. Designs 6 and 7 illustrate that the refined critical value \(c^{\rm r}_{\alpha}(q, q_y)\) in \eqref{eq:our-cv-r} offers a more accurate approximation of the true quantile of the KS test statistic, though it may still be somewhat conservative. 

Figure \ref{fig:power-rates} reports rejection probabilities under the alternative hypothesis for all designs. The power of the KS test is similar to that of GK and could be above, below, or roughly the same. The power of the KS test is much higher than that of SZ for this design. This is worth noting given that the KS test uses about $90$ observations (for both $q_y$ and $q_x$) while the SZ test uses $276$ effective observations on each side of the threshold. The power results also demonstrate that the refined critical value \(c^{\rm r}_{\alpha}(q, q_y)\) in \eqref{eq:our-cv-r} enhances power in discrete cases. 

\section{Concluding remarks}\label{sec:conclusion}

This paper introduces a novel test for conditional stochastic dominance (CSD) at target points, offering a flexible, nonparametric approach that avoids kernel smoothing while ensuring computational efficiency. By leveraging induced order statistics, our method constructs empirical CDFs using observations closest to the target conditioning point. We establish the asymptotic properties of our test, demonstrating its validity under weak regularity conditions, and derive a critical value that eliminates the need for resampling techniques such as the bootstrap. Additionally, we extend our framework to better handle discrete data, proposing a refined critical value that enhances the power of the test with minimal additional information. Monte Carlo simulations align with our theoretical results and suggest that our test performs well in finite samples, making our test readily applicable to empirical research in economics, finance, and public policy.

An important feature of our test is its simplicity. Once the key tuning parameters are computed, the test only requires a standard test statistic with a deterministic critical value, without the need for kernels, local polynomials, bias correction, or bandwidth selection. Furthermore, our test admits a clear interpretation in the limit experiment, which allows us to connect it with classical analytical critical values and permutation-based tests. In this sense, our findings contribute to the broader literature on stochastic dominance testing by refining conditional inference methods and establishing new links between permutation-based and rank-based approaches. One open question we did not address in this paper concerns the validity of permutation-based tests for the hypothesis of stochastic dominance when both random variables, $Y$ and $X$, are discrete. Despite attempts to formalize this result, we were unable to prove or disprove it. Extensive Monte Carlo simulations (not reported here) suggest that the test may be valid, and this is an area we plan to explore further. 


\appendices
\renewcommand{\theequation}{\Alph{section}-\arabic{equation}}
\setcounter{equation}{0}
\small 

\section{Proof of the main results}\label{app:main-proofs}
\subsection{Proof of Theorem \ref{thm:SD-AsySz}}

By Theorem \ref{thm:limit-Sn}, 
\begin{equation*}
S_{n}\overset{d}{\to }S=(S_{1},\cdots ,S_{q})
\end{equation*}
where the elements of $S$ are independent, and $S_{j}\sim F_{Y}(\cdot |z_{0})$ for $j=1,\cdots ,q_{y}$ and $S_{j}\sim F_{X}(\cdot |z_{0})$ for $ j=q_{y}+1,\cdots
,q$. By the almost-sure representation theorem, we have a sequence of random vectors $\{\Tilde{S}_{n}: 1\le i\le \infty \}$ and a random vector $\Tilde{S}$ defined on a
common probability space $(\Omega ,\mathcal{A },\Tilde{P})$ such that
\begin{equation*}
\Tilde{S}_{n}\overset{d}{=}S_{n},~\Tilde{S}\overset{d}{=}S,\text{ and } \Tilde{S}_{n}\overset{a.s.}{\to }\Tilde{S}~.
\end{equation*}

Let $R(s)$ denote the rank of $s$, which maps $s$ to a permutation of $\{1,2,\dots ,q\}$. Define the event that the rank of the two vectors coincides as follows,
\begin{equation*}
    F_{n}=\{R(\tilde{S}_{n})=R(\tilde{S})\}~.
\end{equation*}
To reach the conclusion, it suffices to show that
\begin{equation}\label{eq:rank-equival_2}
\tilde{P}\{F_{n}\}\to 1~.  
\end{equation}
To see this, consider the following argument,
\begin{align*}
E_{P}[\phi (S_{n})]& \overset{(1)}{=} E_{\tilde{P}}[\phi (\tilde{S}_{n})]
 \overset{(2)}{=} E_{\tilde{P}}[\phi (\tilde{S})I\{F_{n}\}+\phi (\tilde{S} _{n})I\{F_{n}^{c}\}] 
 \overset{(3)}{\to} E_{\tilde{P}}[\phi (\tilde{S})]  \overset{(4)}{\leq} \alpha~,
\end{align*}
where (1) holds by $\Tilde{S}_{n}\overset{d}{=}S_{n}$, (2) by the fact that $T$ is invariant to rank-preserving transformations, (3) by \eqref{eq:rank-equival_2} and $\phi(\cdot)\in \{0,1\}$, and (4) by $P \in {\bf P}_0$ and Theorem \ref{thm:limit-cont}.

We devote the remainder of the proof to establishing \eqref{eq:rank-equival_2}. Define \(\mathcal{D} = \mathcal{D}_{X}(z_{0}) \cup \mathcal{D}_{Y}(z_{0})\), where \(\mathcal{D}_{X}(z_{0})\) and \(\mathcal{D}_{Y}(z_{0})\) denote the sets of discontinuity points as specified in Assumption \ref{ass:FiniteDisc}. For any \(\varepsilon > 0\), let  
\begin{align*}
    E_{1}(\varepsilon) &:= \big\{ \{ |\tilde{S}_{i} - \tilde{S}_{j}| > \varepsilon \} \cup \{ \tilde{S}_{i} = \tilde{S}_{j} \in \mathcal{D} \} : i \neq j = 1, \dots, q \big\}, \\
    E_{n,2}(\varepsilon) & := \big\{ |\tilde{S}_{n,k} - \tilde{S}_{k}| < \varepsilon /2 : k = 1, \dots, q \big\}, \\
    E_{n,3} & := \big\{ \{ \tilde{S}_{k} \in \mathcal{D} \} \subseteq \{ \tilde{S}_{k} = \tilde{S}_{n,k} \} : k = 1, \dots, q \big\}~.
\end{align*}
Observe that  
\begin{align}\label{eq:rank-equival_3}
    E_{1}(\varepsilon) \cap E_{n,2}(\varepsilon) \cap E_{n,3} \subseteq F_{n}. 
\end{align}
To establish this, consider the following argument. For any \(i,j = 1, \dots, q\), there are three possible cases:  
(i) \(\tilde{S}_{i} < \tilde{S}_{j}\),  
(ii) \(\tilde{S}_{i} > \tilde{S}_{j}\), or  
(iii) \(\tilde{S}_{i} = \tilde{S}_{j}\). 
First, consider case (i), where \(\tilde{S}_{i} < \tilde{S}_{j}\). Under \(E_{1}(\varepsilon)\), this implies \(\tilde{S}_{i} < \tilde{S}_{j} - \varepsilon\). Under \(E_{n,2}(\varepsilon)\), we have \(\tilde{S}_{n,i} - \varepsilon/2 < \tilde{S}_{i}\) and \(\tilde{S}_{j} < \tilde{S}_{n,j} + \varepsilon/2\). Combining these inequalities yields \(\tilde{S}_{n,i} < \tilde{S}_{n,j}\), as required. Case (ii) follows identically by reversing the roles of \(i\) and \(j\). Finally, consider case (iii), where \(\tilde{S}_{i} = \tilde{S}_{j}\). Under \(E_{1}(\varepsilon)\), this implies \(\tilde{S}_{i} = \tilde{S}_{j} \in \mathcal{D}\). By \(E_{n,3}\), it follows that \(\tilde{S}_{n,i} = \tilde{S}_{n,j} \in \mathcal{D}\). Since this argument holds for all \(i, j = 1, \dots, q\), we conclude that \(F_{n}\) follows, as desired.

By \eqref{eq:rank-equival_3}, \eqref{eq:rank-equival_2} follows that there exits $\varepsilon >0$ such that
\begin{equation}
\tilde{P}\{E_{1}(\varepsilon) \cap E_{n,2}(\varepsilon) \cap E_{n,3}\}\to 1. \label{eq:rank-equival_4}
\end{equation}

For arbitrary $\delta >0$, \eqref{eq:rank-equival_4} follows from finding $ \varepsilon =\varepsilon (\delta )>0$ and $N(\delta) $ such that $\tilde{P}\{ E_{1}(\varepsilon )\cap E_{n,2}(\varepsilon )\cap E_{n,3}\} \geq 1-\delta $ for all $n\geq N(\delta) $. Let $\varepsilon_{1}=\inf \{ \Vert \tilde{d}-d\Vert /2:d<\tilde{ d}\in \mathcal{D}\} >0$. By Lemma \ref{lem:aux_lemma4}, $\exists \varepsilon _{2}>0$ such that, for $i\not=j=1,\ldots ,q$, 
\begin{align*}
    \tilde{P}\{\{ |{ \tilde{S}}_{i}-{\tilde{S}}_{j}|<\varepsilon _{2}\} \cap \{ {\tilde{S}}_{i},{\tilde{S}}_{j}\in \mathcal{D} ^{c}\} \} &<\delta /(9q(q-1))~,\\
    \tilde{P}\{\cup _{d\in \mathcal{D}}\{|{\tilde{S}}_{i}-{d}|< \varepsilon _{2}\}\cap \{{ \tilde{S}}_{i}\in \mathcal{D}^{c}\mathcal{\}}\}&<\delta /(9q(q-1))~.
\end{align*}
Finally, set $\varepsilon =\min \{ \varepsilon _{1},\varepsilon _{2}\} >0$ for the remainder of the proof. By elementary arguments, it suffices to show that: (i) $\tilde{P}\{ E_{1}(\varepsilon) ^{c}\} \leq \delta /3$, (ii) $\exists N_{2}(\delta) \in \mathbf{N}$ s.t. $ \tilde{P}\{ E_{n,2}(\varepsilon) ^{c}\} \leq \delta /3$ for all $n\geq N_{2}(\delta) $, and (iii)\ $\exists N_{3}(\delta) \in \mathbf{N}$ s.t. $\tilde{P}\{ E_{n,3}(\varepsilon) ^{c}\} \leq \delta /3$ for all $ n\geq N_{3}( \delta
) $. We divide the rest of the proof into three results.

First, we show that $\tilde{P}\{ E_{1}(\varepsilon) ^{c}\} \leq \delta /3$. To this end, pick $ i\not=j=1,\ldots ,q$ arbitrarily. Note that
\begin{equation*}
\tilde{P}\{ \{ |{\tilde{S}}_{i}-{\tilde{S}}_{j}|< \varepsilon \} \cap \{ {\tilde{S}}_{i}={\tilde{S}} _{j}\in \mathcal{D}\} ^{c}\}
\stackrel{(1)}{\leq}\delta /(3q(q-1))~,
\end{equation*}
where (1) holds by $i\not=j=1,\ldots,q$, ${\tilde{S}}_{i}$ and ${ \tilde{S}}_{j}$ being identically distributed, and $\varepsilon =\min \{ \varepsilon _{1},\varepsilon _{2}\} $. From here, we conclude that
\begin{equation*}
\tilde{P}\{ E_{1}(\varepsilon) ^{c}\} \leq \sum_{i\not=j}\tilde{P}\{ \{ |{\tilde{S}}_{i}-{\tilde{S}} _{j}|<\varepsilon \} \cap \{ {\tilde{S}}_{i}={ \tilde{S}}_{j}\in \mathcal{D}\} ^{c}\} \leq \delta /3~,
\end{equation*}
as desired. Second, we show that $\exists N_{2}(\delta) \in \mathbf{N}$ such that $\tilde{P}\{ E_{n,2}(\varepsilon) ^{c}\} \leq \delta /3$ for all $n\geq N_{2}(\delta) $. To see this, note that
\begin{equation}
\tilde{P}\{ E_{n,2}(\varepsilon) ^{c}\} \leq \sum_{k=1}^{q}P\{|\tilde{S}_{n,k}-\tilde{S}_{k}|>\varepsilon /2\}~.
\end{equation}
By $\Tilde{S}_{n}\overset{a.s.}{\to }\Tilde{S}$, $\exists N_{2}(\delta) $ such that the right-hand side is less than $ \delta /3$, as desired. Finally, we show that $\exists N_{3}(\delta) \in \mathbf{N}$ such that $\tilde{P}\{ E_{n,3}(\varepsilon) ^{c}\} \leq \delta /3$ for all $n\geq N_{3}(\delta) $. To see this, note that
\begin{align*}
\tilde{P}\{ E_{n,3}(\varepsilon) ^{c}\} 
&=\tilde{P} \{ \exists k=1,\ldots ,q:\{ \{ \tilde{S}_{k}\in \mathcal{D}\} \subseteq \{ \tilde{S}_{k}=\tilde{S} _{n,k}\} \} ^{c}\} \\
&\leq \sum_{k=1}^{q}\tilde{P}\{ \{ \tilde{S}_{k}\in \mathcal{D}\} \cap\{ \tilde{S}_{k}\not=\tilde{S}_{n,k}\} \}~.
\end{align*}
By Lemma \ref{lemma:equal-value-at-limit-S}, $\exists N_{3}(\delta) $ such that the right-hand side is less than $\delta /3$, as desired. This completes the proof of \eqref{eq:rank-equival_2} and the theorem. 

\subsection{Proof of Lemma \ref{lem:limit-exact}}

Note that
\begin{equation*}
E_{P^{\ast }}[\phi (S)]~\overset{(1)}{\leq }~\sup_{P\in \mathbf{P}_{0}}E_{P}[\phi (S)]~\overset{(2)}{\leq }~\bar{\alpha},
\end{equation*}
where (1) holds by $P^{\ast }\in \mathbf{P}_{0}$ and (2) by Theorem \ref{thm:limit-cont}. To complete the proof, it suffices to show that $E_{P^{\ast }}[\phi (S)]=\bar{\alpha}$. To this end, consider the following argument: 
\begin{align}
 &E_{P^{\ast }}[\phi (S)]\notag\\
 &~\overset{(1)}{=}~P^*\left\{ \sup_{t\in \mathbf{R} }\left( \frac{1}{q_{y}}\sum_{j=1}^{q_{y}}I\{S_{j}\leq t\}-\frac{1}{q_{x}} \sum_{j=q_{y}+1}^{q}I\{S_{j}\leq t\}\right) >c_{\alpha }(q_{y},q)\right\}  \notag\\
&~\overset{(2)}{=}~P^*\left\{ \sup_{t\in \mathbf{R}}\left( \frac{1}{q_{y}} \sum_{j=1}^{q_{y}}I\{U_{j}\leq F(t)\}-\frac{1}{q_{x}}\sum_{j=q_{y}+1}^{q}\{U_{j}\leq F(t)\}\right) >c_{\alpha }(q_{y},q)\right\}   \notag\\
&~\overset{(3)}{=}~P^*\left\{ \sup_{u\in (0,1)}\left( \frac{1}{q_{y}} \sum_{j=1}^{q_{y}}I\{U_{j}\leq u\}-\frac{1}{q_{x}}\sum_{j=q_{y}+1}^{q}I \{U_{j}\leq u\}\right) >c_{\alpha }(q_{y},q)\right\}  \notag\\
&~ \overset{(4)}{=}~\bar{\alpha}~,\label{eq:limit-exact_0}
\end{align}
where (1) holds by \eqref{eq:ks-stat}, (2) holds by \citet[][Eq.\ 36]{pollard:02} and the same arguments used in the proof of Theorem \ref{thm:limit-cont}, (3) follows from the continuity of $F$ guaranteeing that 
\begin{equation*}
 \sup_{t\in \mathbf{R}} \Delta\left(F(t)\right) = \sup_{u\in (0,1)} \Delta(u)
\end{equation*} 
for $ \Delta(u) := \frac{1}{q_y}\sum_{j=1}^{q_y} I\{U_{j} \le u\}-\frac{1}{q_x}\sum_{j=q_y+1}^{q} I\{U_{j} \le u\}$ as defined in \eqref{eq:Delta-u}, and (4) by definition of $\bar{\alpha}$ in \eqref{eq:bar-alpha}. 

\subsection{Proof of Lemma \ref{lem:limit-permutation}}

Let $Q:=\{Q_{1},\dots ,Q_{q}\}=\{1,2,\dots ,q\}$ and denote by $Q^{\pi }:=\{Q_{\pi (1)},Q_{\pi (2)}\dots ,Q_{\pi (q)}\}$ the permutation $\pi =(\pi (1),\pi (2),...,\pi \left( q\right) )$ of $Q$. Let $R(s)$ denote the rank of $s$, which maps $s$ to a permutation of $Q$. Since the KS statistic $T(\cdot )$ in \eqref{eq:ks-stat} is a rank statistic, it follows that for any $s$
\begin{equation}
T(s)~=~T^{\ast }(R(s))~,  \label{eq:limit-permutation1}
\end{equation}%
where $T^{\ast }$ is a known function; see \citet[page 99]{hajek/sidak/sen:99}. That is, the KS test statistic depends on $S$ only through $R(S)$. Define 
\begin{equation}
c_{\alpha }^{\mathrm{p}}~:=~\inf_{x\in \mathbf{R}}\left\{ \frac{1}{ \left\vert \mathbf{G}\right\vert }\sum_{\pi \in \mathbf{G}}I\left\{ T(Q^{\pi })\leq x\right\} \geq 1-\alpha \right\} ~.  \label{eq:limit-permutation2}
\end{equation}

We divide the rest of the argument into four steps.

\noindent \underline{Step 1.} For any $s\in \mathbf{R}^{q}$ with $s_{i}\not=s_{j} $ for $i\not=j$, and $c_{\alpha }^{\mathrm{p}}(s)$ as in \eqref{eq:perm-cv},
\begin{equation*}
c_{\alpha }^{\mathrm{p}}(s)~=~c_{\alpha }^{\mathrm{p}}~.
\end{equation*}
To establish this, consider the following derivation,
\begin{align}
c_{\alpha }^{\mathrm{p}}(s)& ~=~\inf_{x\in \mathbf{R}}\left\{ \frac{1}{ \left\vert \mathbf{G}\right\vert }\sum_{\pi \in \mathbf{G}}I\left\{ T(s^{\pi })\leq x\right\} \geq 1-\alpha \right\}   \notag \\
& ~\overset{(1)}{=}~\inf_{x\in \mathbf{R}}\left\{ \frac{1}{\left\vert \mathbf{G}\right\vert }\sum_{\pi \in \mathbf{G}}I\left\{ T^{\ast }(R(s^{\pi }))\leq x\right\} \geq 1-\alpha \right\}   \notag \\
& ~\overset{(2)}{=}~\inf_{x\in \mathbf{R}}\left\{ \frac{1}{\left\vert \mathbf{G}\right\vert }\sum_{\pi \in \mathbf{G}}I\left\{ T^{\ast }((Q^{\bar{ \pi}})^{\pi })\leq x\right\} \geq 1-\alpha \right\}   \notag \\
& ~\overset{(3)}{=}~\inf_{x\in \mathbf{R}}\left\{ \frac{1}{\left\vert \mathbf{G}\right\vert }\sum_{{\pi }\in \mathbf{G}}I\left\{ T^{\ast }(Q^{{\pi }})\leq x\right\} \geq 1-\alpha \right\}\notag \\
&~\overset{(4)}{=}~c_{\alpha }^{\mathrm{p}}~.  \label{eq:limit-permutation3}
\end{align}
Here, (1) follows by \eqref{eq:limit-permutation1}, (2) follows since $ s_{i}\not=s_{j}$ for $i\not=j$ implies that $R(s^{\pi })=(R(s))^{\pi }$ and $ R(s)=Q^{\bar{\pi}(s)}$ for some $\bar{\pi}\left( s\right) \in \mathbf{G}$, (3) by $\mathbf{G}=\tilde{\mathbf{G}}:=\{\pi \circ \bar{\pi}(s):\pi \in \mathbf{G}\}$, which follows from the fact that $\mathbf{G}$ is a group, and (4) by \eqref{eq:limit-permutation2}.

\noindent \underline{Step 2.} $c_{\alpha }^{\mathrm{p}}=c_{\alpha }\left( q_{y},q\right) $, where $c_{\alpha }\left( q_{y},q\right)$ is defined in \eqref{eq:our-cv}.

Let $\{U_i : 1\le i\le q\}$ be i.i.d.\ with $U_{i}\sim U\left( 0,1\right) $ and let $\hat{\pi}$ be a uniformly chosen permutation from $ \mathbf{G}$, independent of $U$. Note that $c_{\alpha }\left( q_{y},q\right) $ is the $\left( 1-\alpha \right) $-quantile of $T(U)$ and, by \eqref{eq:limit-permutation2}, $ c_{\alpha }^{\mathrm{p}}$ is the $\left( 1-\alpha \right) $-quantile of the CDF $\frac{1}{\left\vert \mathbf{G}\right\vert }\sum_{\pi \in \mathbf{G} }I\left\{ T^{\ast }(Q^{\pi })\leq x\right\} $. The desired result then follows from noting that $\frac{1}{\left\vert \mathbf{G}\right\vert }\sum_{\pi \in \mathbf{G}}I\left\{ T^{\ast }(Q^{\pi })\leq x\right\} $ is the CDF of $T(U)$, as we show next.

Let $E:=\left\{ U_{i}\not=U_{j}\text{ for }i\not=j\right\} $. For any $x\in \mathbf{R}$, our desired result follows from this derivation:
\begin{align*}
P\left\{ T(U)\leq x\right\} & ~\overset{(1)}{=}~P\{T(U^{\hat{\pi}})\leq x\} \\
& ~\overset{(2)}{=}~P\left\{ \left\{ T(U^{\hat{\pi}})\leq x\right\} \cap E\right\}  \\
& ~\overset{(3)}{=}~\int_{s\in E}\frac{1}{\left\vert \mathbf{G}\right\vert } \sum_{\pi \in \mathbf{G}}I\left\{ T(u^{\pi })\leq x\right\} dP_{U}^{\ast }(u) \\
& ~\overset{(4)}{=}~\int_{u\in E}\frac{1}{\left\vert \mathbf{G}\right\vert } \sum_{\pi \in \mathbf{G}}I\left\{ T^{\ast }(Q^{\pi })\leq x\right\} dP_{U}^{\ast }(u) \\
& ~\overset{(5)}{=}~\frac{1}{\left\vert \mathbf{G}\right\vert }\sum_{\pi \in \mathbf{G}}I\left\{ T^{\ast }(Q^{\pi })\leq x\right\} ~.
\end{align*}%
Here, (1) holds by $U\overset{d}{=}U^{\hat{\pi}}$, (2) and (5) by $P\{E\}=1$, (3) by $\hat{\pi}\perp U$ and $\hat{\pi}$ uniformly chosen in $\mathbf{G}$, and (4) by repeating the arguments used to derive \eqref{eq:limit-permutation3}.

\noindent \underline{Step 3.} By Step 1, $\{S_{i}\not=S_{j}$ for $ i\not=j\}\subseteq \left\{ c_{\alpha }^{\mathrm{p}}(S)=c_{\alpha }^{\mathrm{p}}\right\} $, and so $ P\left\{ c_{\alpha }^{\mathrm{p}}(S)=c_{\alpha }^{\mathrm{p}}\right\} =1$ holds by our assumption. By Step 2, $c_{\alpha }^{\mathrm{p}}=c_{\alpha }\left( q_{y},q\right) $. The desired result follows from combining these points.

\noindent \underline{Step 4.} To show the last statement, consider $S=\{1 :1\le j \le q\}$. Then, $T(S^{\pi })=T(S)=0$ for all $\pi \in \mathbf{G}$, and so $c_{\alpha }^{\mathrm{p}}(S)=0$. On the
other hand, Step 2 implies $c_{\alpha }^{\mathrm{p}}=c_{\alpha }\left( q_{y},q\right) $, which are positive for typical values of $\left( q_{y},q,\alpha \right) $. For example, $q_{y}=1$, $q=2$, and $\alpha =0.1$ yield $c_{\alpha }^{\mathrm{p}}=c_{\alpha }\left( q_{y},q\right) =0.5.$

\section{Supporting technical results}\label{app:auxiliary}
For any $\varepsilon>0$, we use $o_{\varepsilon}(1)$ to denote an expression that converges to zero as $\varepsilon \to 0$. Analogously, for any $n \in \mathbf{N}$, we use $o_{n}(1)$ to denote an expression that converges to zero as $n \to \infty$.

\subsection{Auxiliary theorems}

\begin{theorem}\label{thm:limit-Sn-general}
Let Assumptions \ref{ass:Z} and \ref{ass:cond-continous} hold with $\mathcal{Z}=\{z_{1},z_{2},\dots ,z_{L}\}$, and let $S^{\ell}_n$ be defined as in \eqref{eq:Sn_ell}. Then,
\begin{equation}
({S_{n}^{1}}',\ldots ,{S_{n}^{L}}')\overset{d}{\rightarrow }({S^{1}}',\ldots ,{S^{L}}')~,
\label{eq:limit-Sn_L}
\end{equation}
where, for any $(s_{1}^{\ell},\dots,s_{q_y^\ell}^{\ell},s_{q_y^\ell + 1}^{\ell},\dots,s_{q^\ell}^{\ell})\in \mathbf{R} ^{q^\ell}$ for each $\ell =1,\dots,L$ with $q^\ell = q_y^\ell+q_x^\ell$, $({S^{1}}',\ldots ,{S^{L}}')$ has the following distribution:
\begin{equation*}
P\left\{ \bigcap_{\ell =1}^{L}\bigcap_{i=1}^{q^{\ell }}\left\{ S_{i}^{\ell }\leq s_{i}^{\ell }\right\} \right\} ~=~\prod_{\ell =1}^{L}\prod_{i=1}^{q_{y}^{\ell }}F_{Y}(s_{i}^{\ell }|z_{\ell })\prod_{j=1}^{q_{x}^{\ell }}F_{X}(s_{j+q_{y}^{\ell }}^{\ell }|z_{\ell })~,
\end{equation*}
\end{theorem}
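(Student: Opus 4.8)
## Proof proposal for Theorem \ref{thm:limit-Sn-general}

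My plan is to extend the single-target-point argument (Theorem \ref{thm:limit-Sn}, itself a version of \citet[Theorem 4.1]{canay/kamat:18}) by exploiting the product structure of the limiting distribution, using a conditioning argument on the covariates. The key structural fact is that, conditional on the full covariate vectors $(Z_1,\dots,Z_{n_y})$ in the $Y$-sample and $(Z_1,\dots,Z_{n_x})$ in the $X$-sample, the induced order statistics are \emph{mutually independent} across both the index $i$ within a sample and — because the $Y$-sample and $X$-sample are independent by assumption — across the two samples. Concretely, $Y^{\ell}_{n_y,[j]}$ has conditional cdf $F_Y(\cdot \mid Z_{\ell,(j)})$ where $Z_{\ell,(j)}$ is the $j$-th closest covariate value to $z_\ell$ in the $Y$-sample, and similarly for $X$. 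So the joint conditional cdf of the entire stacked vector $({S_n^1}',\dots,{S_n^L}')$ factors as a product over $\ell$, over the $q_y^\ell$ $Y$-coordinates, and over the $q_x^\ell$ $X$-coordinates, each factor being $F_Y(s_i^\ell \mid Z^{(y)}_{\ell,(i)})$ or $F_X(s_j^\ell \mid Z^{(x)}_{\ell,(j)})$.

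The steps, in order, would be: (i) Write down the conditional joint cdf of the full stacked vector given the covariates, invoking the product factorization just described; this is a finite product of terms of the form $F_Y(s_i^\ell \mid Z^{(y)}_{\ell,(i)})$ and $F_X(s_j^\ell \mid Z^{(x)}_{\ell,(j)})$. (ii) Show that for each fixed $\ell \in \{1,\dots,L\}$ and each fixed $j \le q^\ell$, the relevant $g_\ell$-order statistic of the covariate converges in probability to the target: $Z^{(y)}_{\ell,(j)} \overset{p}{\to} z_\ell$ and $Z^{(x)}_{\ell,(j)} \overset{p}{\to} z_\ell$ as $\min\{n_y,n_x\}\to\infty$. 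This is exactly the argument already used in Theorem \ref{thm:limit-Sn} and follows from Assumption \ref{ass:Z} (local denseness guarantees enough points pile up near $z_\ell$, so the $j$-th nearest one cannot stay bounded away from $z_\ell$ for fixed $j$). A union bound over the finitely many pairs $(\ell,j)$ — there are $\sum_\ell q^\ell$ of them, a fixed finite number — gives joint convergence of all these order statistics. (iii) Combine (i) and (ii): by Assumption \ref{ass:cond-continous}, $F_Y(s \mid z_k) \to F_Y(s\mid z_\ell)$ uniformly in $s$ whenever $z_k \to z_\ell$, hence $F_Y(s_i^\ell \mid Z^{(y)}_{\ell,(i)}) \overset{p}{\to} F_Y(s_i^\ell \mid z_\ell)$, and likewise for $X$. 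Since the conditional joint cdf is a finite product of bounded terms, the product converges in probability to $\prod_{\ell}\prod_{i} F_Y(s_i^\ell\mid z_\ell)\prod_j F_X(s_{j+q_y^\ell}^\ell\mid z_\ell)$. (iv) Take expectations and apply bounded convergence to pass from the conditional cdf to the unconditional one, yielding pointwise convergence of $P\{({S_n^1}',\dots,{S_n^L}')\le s\}$ to the stated product; since this is a genuine cdf (the product of $L$ independent blocks, each itself a product of independent marginals), convergence at every continuity point gives $\overset{d}{\to}$.

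The main obstacle — and the place where the multi-target case genuinely differs from $L=1$ — is step (ii)'s \emph{joint} handling of all $L$ points simultaneously together with the fact that the same covariate observations are reused across different $\ell$: a single observation $Z_i$ in the $Y$-sample can be among the $q_y^\ell$ closest to $z_\ell$ for one $\ell$ and also among the closest to $z_{\ell'}$ for another. One must check this creates no dependence problem in the limit. The resolution is that for each fixed $\ell$ the $j$-th nearest covariate converges to $z_\ell$, and since $z_1,\dots,z_L$ are distinct, for large $n$ the blocks of observations used for distinct target points are disjoint with probability tending to one (the $q_y^\ell$ nearest points to $z_\ell$ all lie in a shrinking neighborhood of $z_\ell$, and these neighborhoods are eventually disjoint). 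On that event the $L$ blocks are functions of disjoint sets of i.i.d.\ draws, hence independent, which delivers the product-over-$\ell$ factorization in the limit; off that event, which has vanishing probability, one argues as in step (ii) that it does not affect the limiting cdf. This "eventual disjointness of neighborhoods" observation, combined with the conditional-independence factorization and the uniform continuity in Assumption \ref{ass:cond-continous}, is the crux; everything else is bookkeeping over a finite index set.
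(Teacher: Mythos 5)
Your proposal is correct and takes essentially the same approach as the paper: condition on the covariates to factorize the joint cdf, prove that the relevant $g$-order statistics of $Z$ converge in probability to their target points, invoke Assumption~\ref{ass:cond-continous} plus bounded convergence, and handle the product-over-$\ell$ structure by arguing that the $L$ blocks of induced order statistics use pairwise disjoint subsets of the data with probability tending to one. The only cosmetic difference is that the paper establishes the eventual-disjointness event $E_n$ directly by a binomial counting argument (each $\varepsilon$-neighborhood of $z_\ell$ eventually contains at least $q^\ell$ observations), then does the law-of-iterated-expectations computation with the indicator $I\{E_n\}$ inserted, whereas you reach the same conclusion by first proving $Z^{(\cdot)}_{\ell,(j)}\overset{p}{\to} z_\ell$ and then inferring disjointness from the distinctness of the targets; both routes are valid and the paper in fact also proves the order-statistic convergence (by induction on $j$) as a separate ingredient needed for the continuity step.
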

\begin{proof}
For each $\ell =1,\dots ,L$, let $M_y^{\ell}$ denote the subset of the indices $i=1,\dots,n_y$ corresponding to the $q_y^{\ell}$ first $g$-order statistics $(Z_{\ell,(1)},\dots,Z_{\ell,(q_y^{\ell})})$, and let $M_x^{\ell}$ denote the subset of the indices $j=1,\dots,n_x$ corresponding to the $q_x^{\ell}$ first $g$-order statistics $(Z_{\ell,(1)},\dots,Z_{\ell,(q_x^{\ell})})$. Let $E_n$ denote the following event:
\begin{equation*}
    E_{n}~=E_{n_y,y} \cap E_{n_x,x}~\text{where}~E_{n_y,y}:=\left\{\bigcap_{\ell=1}^{L}M_y^{\ell} =\emptyset \right\} ~\text{and}~E_{n_x,x}:=\left\{\bigcap_{\ell=1}^{L}M_x^{\ell} =\emptyset \right\}~.
\end{equation*}
In words, $E_n$ means that the subsets of the data used in each of the $L$ tests have no observations in common. We begin by showing that
\begin{equation}
P\{E_{n}\}  \rightarrow 1~. \label{eq:limit_L1}
\end{equation}
Since the two sample are independent, $P\{E_{n}\} = P\{E_{n_y,y}\}P\{E_{n_x,x}\}$ and so we only prove $P\{E_{n_y,y}\}\to 1$ as the other case is analogous. 

Let $\varepsilon :=\frac{1}{2}\min \left\{ \left\vert z_{\ell }-z_{\ell'}\right\vert :\ell \not=\ell',~\ell ,\ell'=1,\ldots ,L\right\}>0$. For each $\ell =1,\ldots ,L $, let $B_{\ell,y}=\sum_{i=1}^{n_y}I\{|Z_{i}-z_{\ell}|\le  \varepsilon \}$, and note that
\begin{equation*}
\bigcap_{\ell =1}^{L}\left\{ B_{\ell,y}\geq q^{\ell }_y \right\} ~\subseteq~ E_{n}~.
\end{equation*}
From here, we have that \eqref{eq:limit_L1} follows if we show that
\begin{equation}\label{eq:limit_L2}
    P\{B_{\ell,y}<q^{\ell}_y\}\rightarrow 0\text{ for all }\ell =1,\ldots ,L~.
\end{equation}

By Assumption \ref{ass:Z} and $\max_{\ell=1,\ldots ,L}q^{\ell}$ being bounded, $\exists N(\varepsilon )$ s.t.\ for all $n_y\geq N\left( \varepsilon \right) $,
\begin{equation}
    0<\frac{1}{2} P\{ |Z_{i}-z_{\ell}| \le \varepsilon \}\leq P\{ |Z_{i}-z_{\ell}|\le \varepsilon \} -\max_{\ell=1,\ldots ,L}\frac{q^{\ell}_y}{n_y}~.
\label{eq:limit_L3}
\end{equation}%
Then, for all $n_y\geq N( \varepsilon ) $, we have 
\begin{align*}
P\{ B_{\ell,y}<q^{\ell}_y\}  &=P\left\{ {B_{\ell,y}}/{n_y} -P\{ |Z_{i}-z_{\ell}|\le \varepsilon \} <{q^{\ell}_y}/{n_y} - P\{|Z_{i}-z_{\ell}|\le \varepsilon \} \right\}  \\
&\overset{(1)}{\geq }P\{ \vert {B_{\ell,y}}/{n_y}-P\{|Z_{i}-z_{\ell}| \le \varepsilon\} \vert > \frac{1}{2} P\{ |Z_{i}-z_{\ell}| \le \varepsilon \}\} \overset{(2)}{\rightarrow }0~,
\end{align*}%
as desired, where (1) holds by \eqref{eq:limit_L3} and (2) holds by the LLN as $n_y \to \infty$, as $B_{\ell,y}=\sum_{i=1}^{n_y}I\{|Z_{i}-z_{\ell}|\le \varepsilon \} \sim Bi(n_y,P\{ |Z_{i}-z_{\ell}|\le  \varepsilon \} ) $.

We are now ready to prove the desired result. For any $(s_{1}^{\ell},\dots,s_{q_y^\ell}^{\ell},s_{q_y^\ell + 1}^{\ell},\dots,s_{q^\ell}^{\ell})\in \mathbf{R} ^{q^\ell}$ for each $\ell =1,\dots,L$ with $q^\ell = q_y^\ell+q_x^\ell$, we have
\begin{align}
&P\left\{ \cap_{\ell =1}^{L}  \right.  \left. \cap_{i=1}^{q^{\ell }} \left\{ S_{n,i}^{\ell }\leq s_{i}^{\ell }\right\} \right\}  \notag\\
&\overset{(1)}{=}E\left[ \left. P\left\{ \bigcap_{\ell =1}^{L}\left\{ \bigcap_{i=1}^{q_{y}^{\ell }}\left\{ Y_{n,\left[ i\right] }^{\ell }\leq s_{i}^{\ell }\right\} \bigcap  \bigcap_{j=1}^{q_{x}^{\ell }}\left\{ X_{n,\left[ j\right] }^{\ell }\leq s_{j+q_{y}^{\ell }}^{\ell }\right\} \right\}  \right\vert \mathcal{A}\right\} \right]   \notag \\
&\overset{(2)}{=}E\left[ \prod_{\ell =1}^{L}\left. P\left\{ \left\{ \bigcap_{i=1}^{q_{y}^{\ell }}\left\{ Y_{n,\left[ i\right] }^{\ell }\leq s_{i}^{\ell }\right\} \bigcap \bigcap_{j=1}^{q_{x}^{\ell }}\left\{ X_{n,\left[ j\right] }^{\ell }\leq s_{j+q_{y}^{\ell }}^{\ell }\right\} \right\}  \right\vert \mathcal{A}\right\} I\{E_{n}\} \right] +o_{n}\left( 1\right)   \notag \\
&\overset{(3)}{=}E\left[ \prod_{\ell =1}^{L}\left. P\left\{ \left\{ \bigcap_{i=1}^{q_{y}^{\ell }}\left\{ Y_{n,\left[ i\right] }^{\ell }\leq s_{i}^{\ell }\right\} \bigcap  \bigcap_{j=1}^{q_{x}^{\ell }}\left\{ X_{n,\left[ j\right] }^{\ell }\leq s_{j+q_{y}^{\ell }}^{\ell }\right\} \right\}  \right\vert \mathcal{A}\right\} \right] +o_{n}\left( 1\right)   \notag \\
&\overset{(4)}{=}E\left[ \prod_{\ell =1}^{L}\prod_{i=1}^{q_{y}^{\ell }}F_{Y}(s_{i}^{\ell }|Z_{n_{y},\left( i\right) }^{\ell })\prod_{j=1}^{q_{x}^{\ell }}F_{X}(s_{j+q_{y}^{\ell }}^{\ell }|Z_{n_{x},\left( j\right) }^{\ell })\right] , \label{eq:limit_L4} 
\end{align}%
where (1) holds by the LIE with $\mathcal{A}$ equal to the sigma-algebra generated by the $Z$ observations from both samples, (2) by \eqref{eq:limit_L1} and the fact $E_{n}$ implies that the subsets of the data used in each of the $L$ tests have no observations in common, so they are independent conditional on $\mathcal{A}$, (3) by \eqref{eq:limit_L1}, and (4) by repeating the arguments in the proof of Theorem \ref{thm:limit-Sn}. 

Next, we show that
\begin{align}
    Z_{n_{y},\left( i\right) }^{\ell }\overset{p}{\rightarrow }z_\ell~\text{ for all }~i=1,\dots,q_{y}\text{ and }~\ell =1,\ldots ,L~,\label{eq:limit_L5}\\
    Z_{n_{x},\left( j\right) }^{\ell }\overset{p}{\rightarrow }z_\ell~\text{ for all }~j=1,\dots,q_{x}\text{ and }~\ell =1,\ldots ,L~. \label{eq:limit_L5b}
\end{align}
We only show \eqref{eq:limit_L5}, as \eqref{eq:limit_L5b} can be shown analogously. To this end, fix $\ell =1,\ldots ,L$ arbitrarily. We prove the result by complete induction on $i=1,\dots,q_{y}$. Take $i=1$ and fix $\epsilon>0$ arbitrarily. Then,
\begin{align}
P\{|Z_{n_{y},(1)}^{\ell }-z_{\ell }| <\varepsilon \}
&=P\{\text{at least }1 \text{ of }\{Z_{i}:1\leq i\leq n_{y}\}\text{ is s.t }\{|Z_{i}-z_{\ell }|<\varepsilon \}\}  \nonumber \\
&\overset{(1)}{=}\sum_{u=1}^{n_{y}}\binom{n_{y}}{u}P\{|Z-z_{\ell }| <\varepsilon \}^{u}[1-P\{|Z-z_{\ell }|<\varepsilon \}]^{n_{y}-u}  \nonumber\\
& \overset{(2)}{=}1-P\{|Z-z_{\ell }|\geq \varepsilon \}^{n_{y}}  \overset{(3)}{\rightarrow }1~,\label{eq:limit_L6} 
\end{align}
as desired, where (1) holds by the fact that $\{Z_{i}:1\leq i\leq n_{y}\}$ are identically distributed, (2) by the Binomial Theorem, and (3) by Assumption \ref{ass:Z}. For the inductive step, we assume $Z_{n_{y},(j)}^{\ell }-z_{\ell }=o_{p}(1)$ for $j\in \{1,\ldots ,q_{y}-1\}$, and prove that $ Z_{n_{y},(j+1)}^{\ell }-z_{\ell }=o_{p}(1)$. For this, consider the following derivation,
\begin{align*}
&P\{|Z_{n_{y},(j)}^{\ell }-z_{\ell }| <\varepsilon \}\\
&=P\{\text{at least }j \text{ of the }\{Z_{i}:1\leq i\leq n_{y}\}\text{ are s.t. }\{|Z_{i}-z_{\ell }|<\varepsilon \}\} \\
&\overset{(1)}{=}\sum_{u=j}^{n_{y}}\binom{n_{y}}{u}P\{|Z-z_{\ell }| <\varepsilon \}^{u}P\{|Z-z_{\ell }|\geq \varepsilon \}^{n_{y}-u} \\
& \overset{(2)}{=}P\{|Z_{n_{y},(j+1)}^{\ell }-z_{\ell }|<\varepsilon \}+ \binom{n_{y}}{j}P\{|Z-z_{\ell }|< \varepsilon \}^{j}P\{|Z-z_{\ell }|\geq \varepsilon \}^{n_{y}-j}~, 
\end{align*}
where (1) holds by the fact that $\{Z_{i}:1\leq i\leq n_{y}\}$ are identically distributed and (2) by the Binomial Theorem. The desired result then follows from assumption that $Z_{n_{y},(j)}^{\ell }-z_{\ell }=o_{p}(1)$ and the following derivation
\begin{align*}
    &\binom{n_{y}}{j}P\{|Z-z_{\ell }|<\varepsilon \}^{j}[1-P\{|Z-z_{\ell }|<\varepsilon \}]^{n_{y}-j}\\
    &\leq     n_{y}^{j}P\{|Z-z_{\ell }|\geq \varepsilon \}^{n_{y}-j}=\left[ exp^{\frac{j\ln n_{y}}{n_{y}-j}}P\{|Z-z_{\ell }|\geq \varepsilon \}\right] ^{n_{y}-j}\overset{(1)}{\rightarrow} 0~,
\end{align*}
where (1) follows from Assumption \ref{ass:Z} and noticing that $\exists N$ s.t.\ $exp^{\frac{ j\ln n_{y}}{n_{y}-j}}P\{|Z-z_{\ell }|\geq \varepsilon \}\leq P\{|Z-z_{\ell }|\geq \varepsilon \}<1$ for all $n_{y}>N$ and any $j\in \{1,\dots ,q_{y}-1\}$.

By \eqref{eq:limit_L4}, \eqref{eq:limit_L5}, and Assumption \ref{ass:cond-continous}, we have%
\begin{equation}
\lim_{n\rightarrow \infty }P\left\{ \bigcap_{\ell =1}^{L}\bigcap_{i=1}^{q_{\ell }}\left\{ S_{n,i}^{\ell }\leq s_{i}^{\ell }\right\} \right\} ~=~\prod_{\ell =1}^{L}\prod_{i=1}^{q_{y}^{\ell }}F_{Y}(s_{i}^{\ell }|z_{\ell })\prod_{j=1}^{q_{x}^{\ell }}F_{X}(s_{j+q_{y}^{\ell }}^{\ell }|z_{\ell })~.
\label{eq:limit_L7}
\end{equation}
By definition of convergence in distribution, \eqref{eq:limit_L7} implies \eqref{eq:limit-Sn_L}, as desired. 
\end{proof}

\begin{theorem}\label{thm:limit-cont}
Let $S$ be the random variable in Theorem \ref{thm:limit-Sn}. Then, for any $P\in \mathbf P_0$ and $\alpha\in(0,1)$, we obtain 
\[
E_P[ \phi (S)] \leq \bar{\alpha}\le \alpha ~,
\]
where 
\begin{equation}
    \bar{\alpha} := P\left\lbrace \sup_{u\in\mathbf (0,1)} \left(\frac{1}{q_y}\sum_{j=1}^{q_y} I\{U_{j} \le u\}-\frac{1}{q_x}\sum_{j=q_y+1}^{q} I\{U_{j} \le u\} \right) > c_{\alpha}(q_y,q) \right\rbrace ~.
\end{equation}
Moreover, the first inequality becomes an equality under $P$ such that  $F_Y(t|z_{0})=F_X(t|z_{0})$ for all $t\in \mathbf R$ and these are continuous functions of $t\in \mathbf R$.
\end{theorem}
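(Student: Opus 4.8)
\emph{Proof strategy.} The plan is to couple $S$ with the i.i.d.\ uniform sequence $\{U_j:1\le j\le q\}$ that defines $c_{\alpha}(q_y,q)$ and $\bar\alpha$ through the quantile transformation, and to bound $T(S)$ by $\sup_{u\in(0,1)}\Delta(u)$ \emph{pathwise} under this coupling. Concretely, I would work with the version of $S$ given by $S_j=F_Y^{-1}(U_j|z_{0})$ for $j\le q_y$ and $S_j=F_X^{-1}(U_j|z_{0})$ for $j>q_y$, where $F_Y^{-1}(\cdot|z_{0})$ and $F_X^{-1}(\cdot|z_{0})$ are the generalized inverses; this has the law of $S$ from Theorem \ref{thm:limit-Sn}. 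The identity $F^{-1}(u)\le t\iff u\le F(t)$ (see \citet[][Eq.\ 36]{pollard:02}) then gives $\hat F_{n,Y}(t)=q_y^{-1}\sum_{j=1}^{q_y}I\{U_j\le F_Y(t|z_{0})\}$ and $\hat F_{n,X}(t)=q_x^{-1}\sum_{j=q_y+1}^{q}I\{U_j\le F_X(t|z_{0})\}$.

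Second, I would use the null hypothesis. For $P\in\mathbf P_0$ we have $F_Y(t|z_{0})\le F_X(t|z_{0})$ for every $t$, so by monotonicity of the indicator in its threshold, $\hat F_{n,X}(t)\ge q_x^{-1}\sum_{j=q_y+1}^{q}I\{U_j\le F_Y(t|z_{0})\}$, and subtracting yields $\hat F_{n,Y}(t)-\hat F_{n,X}(t)\le\Delta\!\big(F_Y(t|z_{0})\big)$ for all $t$, with $\Delta$ as in \eqref{eq:Delta-u}. Taking the supremum over $t$, and using that $F_Y(\cdot|z_{0})$ takes values in $[0,1]$, that $\Delta(0)=\Delta(1)=0$ a.s.\ (since $U_j\in(0,1)$ a.s.), and that $\sup_{u\in(0,1)}\Delta(u)\ge 0$ a.s.\ (because $\Delta\equiv 0$ on the a.s.\ nonempty interval $(0,\min_j U_j)$), I get $T(S)\le\sup_{u\in(0,1)}\Delta(u)$ a.s. Hence $E_P[\phi(S)]=P\{T(S)>c_{\alpha}(q_y,q)\}\le P\{\sup_{u\in(0,1)}\Delta(u)>c_{\alpha}(q_y,q)\}=\bar\alpha$, while $\bar\alpha\le\alpha$ is immediate from the definition of $c_{\alpha}(q_y,q)$ in \eqref{eq:our-cv} and right-continuity of the cdf of $\sup_{u\in(0,1)}\Delta(u)$.

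Third, for the equality claim, if $F_Y(\cdot|z_{0})=F_X(\cdot|z_{0})=:F$ is continuous, the same coupling gives $S_j=F^{-1}(U_j)$ for \emph{all} $j$, so the inequality above becomes $\hat F_{n,Y}(t)-\hat F_{n,X}(t)=\Delta(F(t))$ for every $t$ and thus $T(S)=\sup_{t\in\mathbf R}\Delta(F(t))$. Continuity of $F$ with $\lim_{t\to-\infty}F(t)=0$ and $\lim_{t\to\infty}F(t)=1$ implies, by the intermediate value theorem, that the range of $F$ contains $(0,1)$; since $\Delta$ vanishes at any endpoint in $\{0,1\}$ that the range might also contain, it follows that $T(S)=\sup_{u\in(0,1)}\Delta(u)$ a.s., whence $E_P[\phi(S)]=\bar\alpha$.

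I expect the substantive content to be a short pathwise comparison after the quantile coupling; the points that require care are purely bookkeeping --- orienting the quantile-transform inequality so that it is $H_0$ (and not its reverse) that makes the bound work, and the endpoint/range manipulations needed to move between $\sup_{u\in[0,1]}$ and $\sup_{u\in(0,1)}$ and to justify $\sup_{u\in(0,1)}\Delta(u)\ge 0$ almost surely.
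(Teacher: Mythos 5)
Your proof is correct and takes essentially the same route as the paper's: both use the quantile coupling via \citet[][Eq.\ 36]{pollard:02} to write the empirical cdfs in terms of the $U_j$'s, both invoke $F_Y(\cdot|z_0)\le F_X(\cdot|z_0)$ under $H_0$ to obtain a pathwise bound by $\Delta(\cdot)$ evaluated along a conditional cdf, and both appeal to the intermediate value theorem and continuity of $F$ for the equality claim. The only differences are bookkeeping: the paper bounds the first sum (replacing $F_Y$ with $F_X$, landing on $\Delta(F_X(t|z_0))$ and then a change of variables to $\mathcal{U}=\cup_{t\in\mathbf Y}\{u=F_X(t|z_0)\}$ followed by $\mathcal{U}\subseteq(0,1)$), whereas you bound the second sum (replacing $F_X$ with $F_Y$, landing on $\Delta(F_Y(t|z_0))$) and then pass to $\sup_{u\in(0,1)}$ directly by observing $\Delta(0)=\Delta(1)=0$ and $\sup_{u\in(0,1)}\Delta(u)\ge 0$ a.s.; if anything your endpoint handling is slightly more careful than the paper's claim that $\mathcal{U}\subseteq(0,1)$, which need not literally hold but is harmless for the same reason you give.
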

\begin{proof}
Recall that $S = (S_1, \dots,S_{q_y},S_{q_y+1},\dots,S_{q})$ are independent, and such that $S_j \sim F_Y(t|z_{0})$ for all $j =1,\dots, q_y$ and $S_{q_y+j} \sim F_X(t|z_{0})$ for all $j=1,\dots, q_x$. Denote by $Q_Y(\cdot|z_{0})$ and $Q_X(\cdot|z_{0})$ the quantile functions of $F_Y(t|z_{0})$ and $ F_X(t|z_{0})$.
Consider the following argument:
{\small \begin{align*}
&E_P[ \phi (S)]  \\
&= P\left\lbrace \max_{k \in \{1,\dots,q_y\}} \left(\frac{1}{q_y}\sum_{j=1}^{q_y} I\{S_j \le S_k\}-\frac{1}{q_x}\sum_{j=q_y+1}^{q} I\{S_j \le  S_k\} \right) > c_{\alpha}(q_y,q) \right\rbrace  \\
&\overset{(1)}{=} P\left\lbrace \sup_{t\in\mathbf Y} \left(\frac{1}{q_y}\sum_{j=1}^{q_y} I\{Q_Y(U_j|z_{0}) \le t\}-\frac{1}{q_x}\sum_{j=q_y+1}^{q} I\{Q_X(U_j|z_{0}) \le t\} \right) > c_{\alpha}(q_y,q) \right\rbrace  \\
&\overset{(2)}{=} P\left\lbrace \sup_{t\in\mathbf Y} \left(\frac{1}{q_y}\sum_{j=1}^{q_y} I\{U_{j} \le F_Y(t|z_{0})\}-\frac{1}{q_x}\sum_{j=q_y+1}^{q} I\{U_{j} \le F_X(t|z_{0})\} \right) > c_{\alpha}(q_y,q) \right\rbrace \\
&\overset{(3)}{\leq}  P\left\lbrace \sup_{t\in\mathbf Y} \left(\frac{1}{q_y}\sum_{j=1}^{q_y} I\{U_{j} \le F_X(t|z_{0})\}-\frac{1}{q_x}\sum_{j=q_y+1}^{q} I\{U_{j} \le F_X(t|z_{0})\} \right) > c_{\alpha}(q_y,q) \right\rbrace  \\
&\overset{(4)}{=} P\left\lbrace \sup_{u\in \mathcal{U}} \left(\frac{1}{q_y}\sum_{j=1}^{q_y} I\{U_{j} \le u\}-\frac{1}{q_x}\sum_{j=q_y+1}^{q} I\{U_{j} \le u\} \right) > c_{\alpha}(q_y,q) \right\rbrace \\
&\overset{(5)}{\leq} P\left\lbrace \sup_{u\in\mathbf (0,1)} \left(\frac{1}{q_y}\sum_{j=1}^{q_y} I\{U_{j} \le u\}-\frac{1}{q_x}\sum_{j=q_y+1}^{q} I\{U_{j} \le u\} \right) > c_{\alpha}(q_y,q) \right\rbrace \\
&\overset{(6)}{=}\bar{\alpha}~,
\end{align*}}
where (1) follows from the quantile transformation and the fact that replacing $\max_{k \in \{1,\dots,q_y\}}$ with $\sup_{t\in \mathbf Y}$ does not affect the magnitude of the test statistic, (2) follows from  \citet[][Eq.\ 36]{pollard:02}, (3) from $P\in \mathbf P_0$, (4) from a simple change of variables and $\mathcal{U} := \cup_{t\in \mathbf Y}\{u =  F_X(t|z_{0}) \}$, (5) from $ \mathcal{U}\subseteq (0,1)$, and (6) from the definition of $c_{\alpha}(q_y,q)$ and $\bar{\alpha}$, and the fact that $\{U_{j}:j=1,\dots,q\}$ are i.i.d.\ distributed as $U( 0,1) $. 

To conclude the proof, it suffices to show that:  
(i) inequality (3) holds as an equality under the condition \( F_Y(t \mid z_{0}) = F_X(t \mid z_{0}) \) for all \( t \in \mathbf{R} \), and  
(ii) inequality (5) holds as an equality when these functions are continuous in \( t \).  
The first claim is immediate. For the second, it follows from the fact that the continuity of the CDFs implies \( \mathcal{U} = (0,1) \). By elementary properties of CDFs, $
\lim_{t \to -\infty} F_X(t \mid z_{0}) = 0$ and $\lim_{t \to \infty} F_X(t \mid z_{0}) = 1$. By the intermediate value theorem, for any \( u \in (0,1) \), there exists \( t \in \mathbf{R} \) such that \( u = F_X(t \mid z_{0}) \), implying \( u \in \mathcal{U} \), as desired.
\end{proof}

\subsection{Auxiliary lemmas}

\begin{lemma}
\label{lemma:equal-value-at-limit-S} 
Suppose that $S_n \sim P$ with $P$ satisfying Assumptions \ref{ass:Z}, \ref{ass:cond-continous}, and \ref{ass:FiniteDisc} hold. Consider a sequence of random vectors $\{\tilde{S}_{n}:1\le n< \infty \}$ and a random vector $\tilde{S}$ defined on a common probability space $ (\Omega ,\mathcal{A},\tilde{P})$ such that
\begin{equation}
\Tilde{S}_{n}\overset{d}{=}S_{n},\quad \Tilde{S}\overset{d}{=}S,\quad \text{ and } \quad \Tilde{S}_{n}\overset{a.s.}{\to }\Tilde{S}~.
\label{eq:DCTimplication}
\end{equation}
Then, for any $j\in \{1,\dots ,q\}$,
\begin{equation*}
\tilde{P}\{\tilde{S}_{n,j}\neq \tilde{S}_{j},~\tilde{S}_{j}\in \mathcal{D} \}=o_{n}(1)~. 
\end{equation*}
\end{lemma}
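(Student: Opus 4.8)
The plan is to reduce the statement to a single discontinuity point and then split the relevant event into an ``escape'' part, controlled by almost sure convergence, and a ``lands just next to $d$'' part, controlled by a distributional estimate. Since $\mathcal{D}=\mathcal{D}_X(z_0)\cup\mathcal{D}_Y(z_0)$ is finite by Assumption \ref{ass:FiniteDisc}, it suffices to show $\tilde{P}\{\tilde{S}_{n,j}\neq d,\ \tilde{S}_j=d\}=o_n(1)$ for each fixed $d\in\mathcal{D}$ and then sum over $d$. Fixing such a $d$ and any small $\delta>0$, I would use the inclusion
\[
\{\tilde{S}_{n,j}\neq d,\ \tilde{S}_j=d\}\subseteq\{\tilde{S}_j=d,\ |\tilde{S}_{n,j}-\tilde{S}_j|\ge\delta\}\cup\{\tilde{S}_{n,j}\in(d-\delta,d+\delta)\setminus\{d\}\}~.
\]
The first event on the right has $\tilde{P}$-probability $o_n(1)$ for each fixed $\delta$, directly from $\tilde{S}_{n,j}\overset{a.s.}{\to}\tilde{S}_j$. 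So everything reduces to controlling $\limsup_n\tilde{P}\{\tilde{S}_{n,j}\in(d-\delta,d+\delta)\setminus\{d\}\}$ and then letting $\delta\to0$.

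For the second event I would exploit $\tilde{S}_{n,j}\overset{d}{=}S_{n,j}$ together with the conditional structure used in the proof of Theorem \ref{thm:limit-Sn}: conditionally on the $Z$-observations, $S_{n,j}$ has cdf $F_Y(\cdot|\zeta_{n,j})$ for $j\le q_y$ (and $F_X(\cdot|\zeta_{n,j})$ for $j>q_y$), where $\zeta_{n,j}$ is one of the $g$-order statistics $Z_{n_y,(1)},\dots,Z_{n_y,(q_y)}$ and therefore satisfies $\zeta_{n,j}\overset{p}{\to}z_0$ by \eqref{eq:limit_L5}. Writing $F_Y(t^-|z):=\lim_{s\uparrow t}F_Y(s|z)$ and $\Delta F_Y(t|z):=F_Y(t|z)-F_Y(t^-|z)$, one has, for $j\le q_y$,
\[
\tilde{P}\{\tilde{S}_{n,j}\in(d-\delta,d+\delta)\setminus\{d\}\}=E\!\left[F_Y((d+\delta)^-|\zeta_{n,j})-F_Y(d-\delta|\zeta_{n,j})-\Delta F_Y(d|\zeta_{n,j})\right]~.
\]
The crux is to pass to the limit inside this expectation. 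Assumption \ref{ass:cond-continous} gives $\sup_t|F_Y(t|z_k)-F_Y(t|z_0)|\to0$ along any $z_k\to z_0$; since left limits are suprema of the cdf over half lines, this also yields $\sup_t|F_Y(t^-|z_k)-F_Y(t^-|z_0)|\to0$ and hence $\sup_t|\Delta F_Y(t|z_k)-\Delta F_Y(t|z_0)|\to0$. In particular each of the maps $z\mapsto F_Y((d+\delta)^-|z)$, $z\mapsto F_Y(d-\delta|z)$, $z\mapsto\Delta F_Y(d|z)$ is continuous at $z_0$; combining this with $\zeta_{n,j}\overset{p}{\to}z_0$, boundedness in $[0,1]$, and the bounded convergence theorem, the displayed expectation converges to $F_Y((d+\delta)^-|z_0)-F_Y(d-\delta|z_0)-\Delta F_Y(d|z_0)$. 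Finally, as $\delta\downarrow0$ this quantity tends to $F_Y(d|z_0)-F_Y(d^-|z_0)-\Delta F_Y(d|z_0)=0$.

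Putting the two pieces together finishes the argument: given $\eta>0$, choose $\delta$ small enough that the last limit is below $\eta$; then $\limsup_n\tilde{P}\{\tilde{S}_{n,j}\neq d,\ \tilde{S}_j=d\}\le 0+\eta$, and letting $\eta\downarrow0$ gives $o_n(1)$. Summing over the finitely many $d\in\mathcal{D}$, and repeating the argument verbatim with $F_X$ in place of $F_Y$ for $j>q_y$, yields the stated conclusion. I expect the only delicate point to be the one flagged above, namely that the \emph{uniform-in-$t$} convergence of the conditional cdfs supplied by Assumption \ref{ass:cond-continous} is in fact strong enough to force convergence of the atom $\Delta F_Y(d|\cdot)$ (and of the trimmed mass near $d$) at $z_0$ — this is not automatic for a generic sequence of cdfs, but holds here precisely because sup-norm closeness to a cdf with a jump of size $p$ pins the corresponding increment down to within a vanishing error; everything else is bounded convergence plus the almost sure coupling.
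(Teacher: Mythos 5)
Your proof is correct, and the core mechanism is the same as the paper's: reduce to a single discontinuity point $d$, split the bad event using a small window $(d-\delta,d+\delta)$, dispose of the ``escape'' piece via the almost-sure coupling, bound the ``lands near $d$ but misses the atom'' piece by a distributional quantity that goes to zero, and finally send $\delta\downarrow 0$. Where you differ is in organization: the paper factors the second piece through a standalone abstract lemma (Lemma~\ref{lemma:equal-value-at-limit}), which says that if $V_n\overset{p}{\to}V$, the cdfs converge \emph{uniformly in $t$}, and $V$ has finitely many atoms, then $P\{V\in\mathcal D,\,V_n\neq V\}\to 0$; it then separately verifies $\sup_t\lvert F_{\tilde S_{n,j}}(t)-F_{\tilde S_j}(t)\rvert=o_n(1)$ by truncating $E[F_Y(t\,|\,Z_{n_y,(j)})]$ around $z_0$ and invoking Assumption~\ref{ass:cond-continous}. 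You instead evaluate the conditional cdfs directly at the three relevant points $d-\delta$, $(d+\delta)^-$, and $d$, and use pointwise continuity in $z$ plus bounded convergence — i.e.\ you inline the content of the paper's auxiliary lemma rather than stating it. Both routes are valid; the paper's buys a reusable, cleanly stated intermediate result, while yours is more self-contained and makes the key structural observation (which you correctly flag) explicit: sup-norm convergence of conditional cdfs is strong enough to pin down the jump heights $\Delta F_Y(d|\cdot)$, not just pointwise values, which is exactly what makes the trimmed mass near $d$ vanish in the double limit.
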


\begin{proof}
We focus on an arbitrary \( j \in \{1, \dots, q_{y}\} \). The argument for \( j \in \{q_{y} + 1, \dots, q\} \) follows analogously by replacing \( Y \) with \( X \). By Lemma \ref{lemma:equal-value-at-limit}, it suffices to show that  
\begin{equation}\label{eq:equallimit_1}
    \sup_{t \in \mathbb{R}} \big| F_{\tilde{S}_{n,j}}(t) - F_{\tilde{S}_{j}}(t) \big| = o_{n}(1)~, 
\end{equation}  
where \( F_{\tilde{S}_{n,j}} \) and \( F_{\tilde{S}_{j}} \) denote the distribution functions of \( \tilde{S}_{n,j} \) and \( \tilde{S}_{j} \) in \( (\Omega, \mathcal{A}, \tilde{P}) \).

By the proof in Theorem \ref{thm:limit-Sn-general} (with $L=1$ and $\mathcal{Z}=\{z_{0}\}$), we have $ F_{S_{j}}(t)=F_{Y}(t|z_{0})$ and $ F_{S_{n,j}}(t)=E[F_{Y}(t|Z_{n_{y},(j)})]$ where $ Z_{n_{y},(j)}\overset{p}{\to }z_{0}$. By these and \eqref{eq:DCTimplication}, \eqref{eq:equallimit_1} follows from
\begin{equation}
\sup_{t\in \mathbf{R}}\vert E[F_{Y}(t|Z_{n_{y},(j)})]-F_{Y}(t|z_{0})\vert =o_{n}(1)~. \label{eq:equallimit_2}
\end{equation}

For any $\varepsilon >0$,
\begin{align*}
E[F_{Y}(t|Z_{n_{y},(j)})] &=\int_{\vert z-z_{0}\vert \leq \varepsilon }F_{Y}(t|z)dP_{Z_{n_{y},(j)}}( z) +\int_{\vert z-z_{0}\vert >\varepsilon }F_{Y}(t|z)dP_{Z_{n_{y},(j)}}( z) \\
&\leq \int_{\vert z-z_{0}\vert \leq \varepsilon }F_{Y}(t|z)dP_{Z_{n_{y},(j)}}( z) +P( \vert Z_{n_{y},(j)}-z_{0}\vert >\varepsilon ) \\
&\overset{(1)}{=} \int_{\vert z-z_{0}\vert \leq \varepsilon }F_{Y}(t|z)dP_{Z_{n_{y},(j)}}( z) +o_{n}(1)~,
\end{align*}
where (1) holds by $Z_{n_{y},(j)}\overset{p}{\to }z_{0}$. Then,
\begin{equation*}
\sup_{t\in \mathbf{R}}\vert E[F_{Y}(t|Z_{n_{y},(j)})]-F_{Y}(t|z_{0})\vert \leq \sup_{t\in \mathbf{R} }\sup_{\vert z-z_{0}\vert \leq \varepsilon }\vert F_{Y}(t|z)-F_{Y}(t|z_{0})\vert +o_{n}(1)~.
\end{equation*}
Fix $\delta >0$ arbitrarily. By Assumption \ref{ass:cond-continous}, $\exists \varepsilon >0$ such that $ \sup_{t\in \mathbf{R}}\sup_{\vert z-z_{0}\vert \leq \varepsilon }\vert F_{Y}(t|z)-F_{Y}(t|z_{0})\vert <\delta /2$. For all large enough $n_y$, the right-hand side is bounded above by $\delta $. Since the choice of $\delta $ was arbitrary, \eqref{eq:equallimit_2} follows.
\end{proof}

\begin{lemma}\label{lem:aux_lemma4}
Let $V_{1}$ and $V_{2}$ be independent random variables that are discontinuous at a finite set of points $\mathcal{D}_{1}$ and $\mathcal{D}_{2}$, respectively. Then, for any $\delta >0$, $\exists \varepsilon >0$ small enough s.t.
\begin{align*}
P\{ \cup _{d\in \mathcal{D}_{1}}\{ |V_{1}-d|<\varepsilon \} \cap \{V_{1}\in \mathcal{D}_{1}^{c}\}\}  &<\delta ~, \\
P\{ \{ |V_{1}-V_{2}|<\varepsilon \} \cap \{V_{1}\in \mathcal{D}_{1}^{c}\}\cap \{ V_{2}\in \mathcal{D}_{2}^{c}\} \} &<\delta~ .
\end{align*}
\end{lemma}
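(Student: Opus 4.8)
The plan is to obtain the first inequality from elementary continuity-of-measure arguments, and then to leverage it, together with the independence of $V_1$ and $V_2$, to establish the second.

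For the first inequality, observe that on $\{V_1\in\mathcal{D}_1^c\}$ we have $V_1\neq d$ for every $d\in\mathcal{D}_1$, so
\[
\Big(\bigcup_{d\in\mathcal{D}_1}\{|V_1-d|<\varepsilon\}\Big)\cap\{V_1\in\mathcal{D}_1^c\}\ \subseteq\ \bigcup_{d\in\mathcal{D}_1}\{0<|V_1-d|<\varepsilon\}~,
\]
whence its probability is at most $\sum_{d\in\mathcal{D}_1}P\{0<|V_1-d|<\varepsilon\}$. For each fixed $d$ the events $\{0<|V_1-d|<\varepsilon\}$ decrease to $\emptyset$ as $\varepsilon\downarrow 0$ (a point in the intersection would satisfy both $|V_1-d|=0$ and $|V_1-d|>0$), so continuity of $P$ from above gives $P\{0<|V_1-d|<\varepsilon\}=o_\varepsilon(1)$. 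Since $\mathcal{D}_1$ is finite, the sum is $o_\varepsilon(1)$, and choosing $\varepsilon$ small makes it smaller than $\delta$ (or $\delta/2$, for reuse below).

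For the second inequality I would condition on $V_2$ and split according to whether $V_2$ lands on a discontinuity point of $V_1$. Write $A_\varepsilon:=\{|V_1-V_2|<\varepsilon\}\cap\{V_1\in\mathcal{D}_1^c\}\cap\{V_2\in\mathcal{D}_2^c\}$ and decompose $A_\varepsilon=(A_\varepsilon\cap\{V_2\in\mathcal{D}_1\})\cup(A_\varepsilon\cap\{V_2\in\mathcal{D}_1^c\})$. On $A_\varepsilon\cap\{V_2\in\mathcal{D}_1\}$ we have $V_2=d$ for some $d\in\mathcal{D}_1$, together with $V_1\in\mathcal{D}_1^c$ and $|V_1-d|<\varepsilon$; hence this event is contained in the event handled by the first inequality, so its probability is below $\delta/2$ for $\varepsilon$ small. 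On $A_\varepsilon\cap\{V_2\in\mathcal{D}_1^c\}$, drop the now-superfluous constraint $\{V_1\in\mathcal{D}_1^c\}$ and use independence: with $g_\varepsilon(v):=P\{|V_1-v|<\varepsilon\}$,
\[
P\big(A_\varepsilon\cap\{V_2\in\mathcal{D}_1^c\}\big)\ \le\ E\big[\,g_\varepsilon(V_2)\,I\{V_2\in\mathcal{D}_1^c\cap\mathcal{D}_2^c\}\,\big]~.
\]
Along any sequence $\varepsilon_n\downarrow 0$, $g_{\varepsilon_n}(v)$ decreases to $P\{V_1=v\}$, which vanishes for every $v\in\mathcal{D}_1^c$ since such $v$ is a continuity point of the cdf of $V_1$; dominated convergence (the integrand is bounded by $1$) then yields $E[g_{\varepsilon_n}(V_2)\,I\{V_2\in\mathcal{D}_1^c\cap\mathcal{D}_2^c\}]\to 0$, so this term is below $\delta/2$ for $\varepsilon$ small. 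Adding the two bounds gives $P(A_\varepsilon)<\delta$.

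The only genuine subtlety — and the reason the decomposition is needed — is that $g_\varepsilon(v)\to 0$ only at continuity points of the cdf of $V_1$: if $\mathcal{D}_1$ and $\mathcal{D}_2$ overlap, $V_2$ may land on a jump of $V_1$ while still lying in $\mathcal{D}_2^c$, and there $g_\varepsilon(V_2)$ does not vanish. Intersecting with $\{V_1\in\mathcal{D}_1^c\}$ is precisely what allows that residual case to be absorbed into the first inequality. Everything else is routine bookkeeping: finiteness of $\mathcal{D}_1,\mathcal{D}_2$, continuity of probability from above, and dominated convergence.
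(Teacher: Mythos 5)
Your proposal is correct. It follows the same high-level strategy as the paper (the first inequality via elementary continuity-of-measure at the finitely many jumps, the second via independence to condition on $V_2$, pointwise shrinking of the conditional probability, and dominated convergence), but it organizes the second inequality by a different decomposition. The paper integrates the intersected event $P\{|V_1-v_2|<\varepsilon\}\cap\{V_1\in\mathcal{D}_1^c\}$ over $v_2\in\mathcal{D}_2^c$ and invokes an auxiliary pointwise bound (Lemma \ref{lem:aux_lemma3}) that keeps the $\{V_1\in\mathcal{D}_1^c\}$ constraint and therefore holds for \emph{every} $v_2\in\mathbf{R}$, including $v_2\in\mathcal{D}_1$; you instead partition on whether $V_2$ lands in $\mathcal{D}_1$, absorb that case into the first inequality, and only then drop the constraint $\{V_1\in\mathcal{D}_1^c\}$ so that the remaining term becomes the cleaner expectation $E[g_\varepsilon(V_2)I\{V_2\in\mathcal{D}_1^c\cap\mathcal{D}_2^c\}]$ with $g_\varepsilon(v)\downarrow P\{V_1=v\}=0$ at continuity points. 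The paper's version factors the pointwise estimate into a standalone lemma that is reused elsewhere, which keeps the present proof short; your version is self-contained, makes it visible exactly why the $\{V_1\in\mathcal{D}_1^c\}$ clause is indispensable (it catches the case $V_2\in\mathcal{D}_1\setminus\mathcal{D}_2$), and reduces the pointwise step to the elementary observation that $P\{|V_1-v|<\varepsilon\}\to P\{V_1=v\}$ as $\varepsilon\downarrow 0$. One small remark: in your first-inequality argument, since $g_\varepsilon(v)$ is monotone in $\varepsilon$ you could equally invoke monotone convergence rather than dominated convergence, though either is fine here.
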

\begin{proof}
Set $\bar{\varepsilon}:=\{ \min \vert d-\tilde{d}\vert :d,\tilde{d}\in \mathcal{D}_{1}\cap d\not=\tilde{d}\} >0$. For any $\varepsilon \in (0,\bar{\varepsilon}/2),$
\begin{align*}
&P\{ \cup _{d\in \mathcal{D}_{1}}\{ |V_{1}-d|<\varepsilon \}
\cap \{V_{1}\in \mathcal{D}_{1}^{c}\}\} \\
&\leq \sum_{d\in \mathcal{D}_{1}}P\{ \{|V_{1}-d|<\varepsilon \}\cap \{V_{1}\in \mathcal{D}_{1}^{c}\}\}  \\
&\leq _{(1)}\sum_{d\in \mathcal{D}_{1}}P\{ V_{1}\in (d-\varepsilon ,d)\cup (d,d+\varepsilon )\} \overset{(2)}{=} o_{\varepsilon }(1)~,
\end{align*}%
where (1) holds by $\varepsilon \in (0,\bar{\varepsilon}/2)$ and (2) by the fact that $(d-\varepsilon ,d)\cap (d,d+\varepsilon )$ has no discontinuities in the CDF. The right-hand side is less than $\delta $ by making $\varepsilon $
arbitrarily small, as desired. Also, for any $\varepsilon \in (0,\bar{\varepsilon}/2)$,
\begin{align*}
&P\{ \{ |V_{1}-V_{2}|<\varepsilon \} \cap \{V_{1}\in \mathcal{D}_{1}^{c}\}\cap \{ V_{2}\in \mathcal{D}_{2}^{c}\} \}  \\
&\overset{(1)}{=} \int_{\mathcal{D}_{2}^{c}}P\{ \{ |V_{1}-v_{2}|<\varepsilon \} \cap \{V_{1}\in \mathcal{D} _{1}^{c}\}\} dP_{V_{2}}( v_{2}) 
\overset{(2)}{=} o_{\varepsilon }(1)~,
\end{align*}
where (1) holds by $V_{1}\perp V_{2}$, and (2) holds by Lemma \ref{lem:aux_lemma3} and the dominated convergence theorem. The right-hand side is less than $\delta $ by making $\varepsilon $ arbitrarily small, as desired.
\end{proof}

\begin{lemma}\label{lem:aux_lemma3} 
Consider a random variable $V$ whose CDF is discontinuous at a finite set of points $\mathcal{D}$. Then, for any $v\in \mathbf{R}$,
\begin{equation*}
P\{ \{ |V-v|<\varepsilon \} \cap \{V\in \mathcal{D}^{c}\}\} =o_{\varepsilon}(1)~.
\end{equation*}
\end{lemma}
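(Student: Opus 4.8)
The plan is to exploit the monotonicity of the relevant events in $\varepsilon$ together with continuity from above of the probability measure, and nothing more. Write $A_{\varepsilon} := \{|V-v|<\varepsilon\}\cap\{V\in\mathcal{D}^{c}\}$. Since the open interval $(v-\varepsilon,v+\varepsilon)$ shrinks as $\varepsilon\downarrow 0$, the family $\{A_{\varepsilon}\}_{\varepsilon>0}$ is nonincreasing, and
\[
\bigcap_{\varepsilon>0}A_{\varepsilon} \;=\; \{V=v\}\cap\{V\in\mathcal{D}^{c}\}~.
\]

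First I would check that this limiting intersection is $P$-null, splitting into two cases. If $v\in\mathcal{D}$, then $\{V=v\}\cap\{V\in\mathcal{D}^{c}\}=\emptyset$, so its probability is zero trivially. If $v\notin\mathcal{D}$, then by hypothesis $v$ is a continuity point of the cdf $F_{V}$, so $P\{V=v\}=F_{V}(v)-\lim_{t\uparrow v}F_{V}(t)=0$, and the intersection is again $P$-null. Next I would invoke continuity from above of $P$: since $A_{\varepsilon}\downarrow \bigcap_{\varepsilon>0}A_{\varepsilon}$ as $\varepsilon\downarrow 0$ and $P(A_{1})\le 1<\infty$, it follows that $P(A_{\varepsilon})\to P\big(\bigcap_{\varepsilon>0}A_{\varepsilon}\big)=0$. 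Because $\varepsilon\mapsto P(A_{\varepsilon})$ is monotone, convergence along the full continuum is equivalent to convergence along any sequence $\varepsilon_{k}\downarrow 0$, so $P(A_{\varepsilon})=o_{\varepsilon}(1)$, which is exactly the asserted statement.

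There is essentially no substantive obstacle here; the only points that need a modicum of care are (i) correctly identifying the limiting event as $\{V=v\}\cap\{V\in\mathcal{D}^{c}\}$, which is why the open interval (rather than a closed one) matters only insofar as it keeps the bookkeeping clean, and (ii) recalling that a continuity point of a cdf carries no point mass, so the case $v\notin\mathcal{D}$ does not secretly reintroduce an atom. I would also note in passing that finiteness of $\mathcal{D}$ plays no role in this lemma — it is only used downstream in Lemma \ref{lem:aux_lemma4} — and that any description of $\mathcal{D}$ as the discontinuity set of $F_V$ would suffice for the argument above.
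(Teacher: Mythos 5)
Your proof is correct, and it follows a genuinely different route from the paper's. The paper first uses finiteness of $\mathcal{D}$ to pick a cushion $\bar\varepsilon$ (half the minimal gap between discontinuity points) so that for $\varepsilon < \bar\varepsilon/2$ the punctured neighbourhood of $v$ contains no other jump points; it then bounds the probability directly by a cdf increment on that punctured interval (case $v\in\mathcal{D}$) or on $(v-\varepsilon,v+\varepsilon]$ (case $v\notin\mathcal{D}$) and invokes one-sided continuity of $F_V$. You instead note that the events $A_\varepsilon$ are nested and shrink to $\{V=v\}\cap\{V\in\mathcal{D}^c\}$, show this limiting event is $P$-null by the same two-case split (empty set if $v\in\mathcal{D}$; zero atom if $v\notin\mathcal{D}$ since $v$ is then a continuity point), and conclude by continuity from above of $P$, upgraded from sequences to the full continuum by monotonicity of $\varepsilon\mapsto P(A_\varepsilon)$. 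Your argument is more abstract and, as you correctly observe, does not use finiteness of $\mathcal{D}$ at all — it only needs $\mathcal{D}$ to be the discontinuity set of $F_V$ (so that $v\notin\mathcal{D}$ guarantees $P\{V=v\}=0$). The paper's cdf-increment argument is more explicit but relies on the cushion $\bar\varepsilon>0$, which is where finiteness of $\mathcal{D}$ enters (and is genuinely needed elsewhere, e.g.\ in Lemma \ref{lem:aux_lemma4}). Both are sound; yours is the leaner derivation for this particular lemma.
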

\begin{proof}
Set $\bar{\varepsilon}:=\{ \min \vert d-\tilde{d}\vert :d,\tilde{d}\in \mathcal{D}_{1}\cap d\not=\tilde{d}\} >0$. Fix $\varepsilon <\bar{\varepsilon}/2$. There are two possibilities: $v\in \mathcal{D}$ or $v\not\in \mathcal{D}$. First, consider $v\in \mathcal{D}$. In this case,
\begin{equation*}
P\{ |V-v|<\varepsilon \} \cap \{V\in \mathcal{D}^{c}\} \leq P\{ V\in ( v-\varepsilon ,v) \cap ( v,v+\varepsilon ) \} \overset{(1)}{=}  o_{\varepsilon}(1)~,
\end{equation*}
where (1) holds because $(d-\varepsilon ,d)\cap (d,d+\varepsilon )$ has no mass points. Second, consider $ v\not\in \mathcal{D}$.  Then,
\begin{equation*}
P\{ \{ |V-v|<\varepsilon \} \cap \{V\in \mathcal{D} ^{c}\}\} \leq F_{V}(v+\varepsilon )-F_{V}(v-\varepsilon )\overset{(2)}{=} o_{\varepsilon }(1)~ ,
\end{equation*}
where (1) by the fact that $( v-\varepsilon ,v+\varepsilon ] $ has no mass points, so $F_{V}$ is continuous on that interval.
\end{proof}

\begin{lemma}\label{lemma:equal-value-at-limit} 
Let $\{V_{n}:n\ge 1\}$ be a sequence of random variables that satisfy $V_{n}\overset{p}{\to }V$, where $V$ is a random variable whose CDF is discontinuous at a finite set of points $\mathcal{D}$. Furthermore, assume $\sup_{t\in \mathbf{R}}|F_{V_{n}}(t)-F_{V}(t)|\to 0$. Then, 
\begin{equation*}
P\{\{V\in \mathcal{D}\} \cap \{V_{n}\neq V\}\}\to 0~.
\end{equation*}
\end{lemma}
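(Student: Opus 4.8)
The plan is to localize to the finitely many atoms of $V$ and then, on each atom, to split the event $\{V_n \neq V\}$ into a ``far'' piece controlled by $V_n \overset{p}{\to} V$ and a ``near but distinct'' piece controlled by the uniform convergence of the cdfs. Write $\mathcal{D} = \{d_1,\dots,d_m\}$; since this set is finite, I would first decompose
\[
\{V \in \mathcal{D}\} \cap \{V_n \neq V\} = \bigcup_{i=1}^m \big(\{V = d_i\} \cap \{V_n \neq d_i\}\big),
\]
a disjoint union, so that it suffices to prove $P\{V = d_i,\, V_n \neq d_i\} \to 0$ for each fixed $i$.

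Fix such an $i$ and an arbitrary $\eta > 0$, and set $G_\eta := (d_i - \eta, d_i + \eta)\setminus\{d_i\}$. On the event $\{V = d_i\}\cap\{V_n \neq d_i\}$ one has either $|V_n - V| \geq \eta$ or $V_n \in G_\eta$, hence
\[
P\{V = d_i,\, V_n \neq d_i\} \leq P\{|V_n - V| \geq \eta\} + P\{V_n \in G_\eta\}.
\]
The first term vanishes as $n\to\infty$ by $V_n \overset{p}{\to} V$. For the second term I would observe that $\delta_n := \sup_{s}|F_{V_n}(s) - F_V(s)| \to 0$ transfers to left-hand limits, i.e.\ $|F_{V_n}(t^-) - F_V(t^-)| \leq \delta_n$ for every $t$ (let $s \uparrow t$ in $F_V(s) - \delta_n \leq F_{V_n}(s) \leq F_V(s) + \delta_n$). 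Writing the exact identity $P\{V_n \in G_\eta\} = \big(F_{V_n}(d_i^-) - F_{V_n}(d_i - \eta)\big) + \big(F_{V_n}((d_i+\eta)^-) - F_{V_n}(d_i)\big)$ and combining it with the pointwise convergence of $F_{V_n}$ and of its left limits, I obtain $P\{V_n \in G_\eta\} \to P\{V \in G_\eta\}$. Taking $\limsup_n$ of the displayed bound then gives $\limsup_n P\{V = d_i,\, V_n \neq d_i\} \leq P\{V \in G_\eta\}$.

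To conclude, I would let $\eta \downarrow 0$: since $\bigcap_{\eta>0} G_\eta = \emptyset$, continuity of the probability measure yields $P\{V \in G_\eta\} \to 0$, so $\limsup_n P\{V = d_i,\, V_n \neq d_i\} = 0$; summing over the finitely many $i$ finishes the proof. The one delicate step — and the place where the hypothesis $\sup_t|F_{V_n}(t) - F_V(t)| \to 0$ is genuinely needed rather than mere convergence in distribution — is the control of $P\{V_n \in G_\eta\}$: because $d_i$ is an atom of $V$, the portmanteau theorem applied to the closed set $[d_i-\eta,d_i+\eta]$ would only bound this probability by a quantity that fails to vanish as $\eta \downarrow 0$, whereas the exact cdf identities above, together with uniform convergence, pin down its limit as $P\{V \in G_\eta\}$, which does vanish.
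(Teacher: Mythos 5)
Your proof is correct and follows essentially the same route as the paper's: decompose over the finitely many atoms, use $V_n\overset{p}{\to}V$ to confine $V_n$ to a punctured neighborhood of the atom, use uniform cdf convergence (transferred to left limits) to pass $P\{V_n\in G_\eta\}\to P\{V\in G_\eta\}$, and let the neighborhood shrink. Your version is a bit more explicit than the paper's at the key step (the exact cdf identity and the remark on why one-sided portmanteau bounds would not suffice), and it also cleanly avoids the paper's auxiliary choice of $\bar\varepsilon$ separating the atoms by treating each $d_i$ individually before localizing.
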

\begin{proof}
Fix $\delta >0$ arbitrarily. It suffices to find $N=N(\delta) $
such that $P\{V\in \mathcal{D},V_{n}\neq V\}<\delta $ for all $n>N$.

Set $\bar{\varepsilon}:=\{ \min \vert d-\tilde{d}\vert :d, \tilde{d}\in \mathcal{D}\cap d\not=\tilde{d}\} >0$. For any $ \varepsilon \in (0,\bar{\varepsilon}/2),$ consider the following argument. 
\begin{align*}
&P\{\{V\in \mathcal{D}\} \cap \{V_{n}\neq V\}\} \\
&=\sum_{d\in \mathcal{D}}P\{\{V=d\} \cap \{V_{n}\neq d\} \} \\
& \overset{(1)}{=} \sum_{d\in \mathcal{D}}P\{\{V=d\} \cap \{V_{n}\neq d\} \cap \{|V_{n}-d|<\varepsilon\}\}+o_{n}(1) \\
& \leq \sum_{d\in \mathcal{D}}P\{V_{n}\in (d-\varepsilon ,d)\cap (d,d+\varepsilon )\}+o_{n}(1) \\
& \overset{(2)}{=} 
\sum_{d\in \mathcal{D}}P\{V \in (d-\varepsilon ,d)\cap (d,d+\varepsilon )\}+o_{n}(1) \\
& \overset{(3)}{=} o_{\varepsilon }(1)+o_{n}(1)~,
\end{align*}
where (1) holds by $V_{n}\overset{p}{\to }V$, (2) by $\sup_{t\in \mathbf{R}}|F_{V_{n}}(t)-F_{V}(t)|=o_{n}(1) $,
and (3) by the fact that $(d-\varepsilon ,d)\cap (d,d+\varepsilon )$ has no discontinuities in the CDF. 
For all large enough $n$ and small enough $\varepsilon $, the right-hand side is bounded by $\delta $, as desired.
\end{proof}

\section{Extensions and Supplementary Analyses}\label{app:supp-material}
\subsection{Asymptotics for large \emph{q}}\label{app:large-q-details}

In this section, we invoke two results from \cite{bugni/canay/kim:26b} to show \eqref{eq:size-large-q}. To properly use these results, we need to introduce some basic notation and two assumptions. 

Let $P_W$ denote the distribution of $(W,Z)$, with $W$ standing for either $Y$ or $X$. Assume $P_W$ is dominated by the product measure $d\mu := \nu(dw)\otimes dz$, where $dz$ is Lebesgue measure on $\mathbf{R}$ and $\nu$ is a fixed $\sigma$-finite measure 
on $\mathbf{R}$. We denote the corresponding joint density by $f(w,z)$ and the marginal 
density of $Z$ by
\[
    f_Z(z) := \int f(w,z)\,\nu(dw)~.
\]
With this notation in place, we introduce our next assumption.
\begin{assumption}\label{ass:g-Lipschitz}
 The random variable $Z$ has a density $f_Z(z)$ with respect to Lebesgue measure. In addition, $f_Z(z)$ satisfies   
	$|f_Z(z) - f_Z(z_{\ell})|\le C_Z | z-z_{\ell}|$
for a constant $C_Z<\infty$ and all $z_{\ell}\in \mathcal Z$.
\end{assumption}

Denote the density of the conditional distribution of $W$ given $Z=z$ by
\[
p_z(w) := \frac{f(w,z)}{f_Z(z)}~.
\]
We say that $p_{z}(w)$ is differentiable in quadratic mean at $z_0$ if there exists a measurable score $\dot\ell_{z_0}:\mathbf R\to\mathbf R$ with
\[
\int \dot\ell^2_{z_0}(w) p_{z_0}(w) \nu(dw)<\infty~,
\]
such that, as $t\to 0$,
\[
\int \Big[ \sqrt{p_{z_0+t}(w)}-\sqrt{p_{z_0}(w)}-\tfrac12\big(t\cdot\dot\ell_{z_0}(w)\big)\sqrt{p_{z_0}(w)} \Big]^2 \nu(dw)
= o(t^2)~.
\]
We now state our last assumption of this section.
\begin{assumption}\label{ass:QMD}
The densities $p_{z}(y)$ and $p_{z}(x)$ are differentiable in quadratic mean at all $z_{\ell}\in \mathcal Z$.
\end{assumption}

Finally, for $r>0$ let $ B_{r}:=\{z:|z-z_0|< r\}$, and define the ``local'' conditional 
distribution of $W$ given $Z \in B_r$ by $P_{W,r}^*\{A\} := P_W\{W \in A \mid Z \in B_r\}$. The corresponding density is
\[
h_r(w) := \frac{\int_{B_r} f(w,z)\,dz}{\int_{B_r} f_Z(z)\,dz}~.
\]
With a slight abuse of notation, we write $P_{W,0}^*$ for the conditional distribution 
of $W$ given $Z = z_0$.

Recall that the vector of induced order statistics $S_n$ in Theorem~\ref{thm:limit-Sn} is defined as
\begin{equation}\label{eq:Sn-app}
  S_n := (S_n^Y,S_n^X) = (Y_{n_y,[1]}, \dots, Y_{n_y,[q_y]}, X_{n_x,[1]}, \dots, X_{n_x,[q_x]})~, 
\end{equation}
while $S := (S^Y,S^X)$ is a $q=q_y+q_x$-dimensional vector of i.i.d.\ random variables with distribution $P_{Y,0}^*\otimes P_{X,0}^*$. Under Assumptions \ref{ass:g-Lipschitz} and \ref{ass:QMD}, Theorem~2 in \cite{bugni/canay/kim:26b} shows that, uniformly over $n\in\mathbf{N}$, $q_y\in\{1,2,\dots,n_y\}$, and $q_x\in\{1,2,\dots,n_x\}$
\[
    \mathrm{TV}\big(\mathcal{L}(S_n^Y),\mathcal{L}(S^Y)\big)
   = O\big(q_y (q_y/n_y)^{a}\big) 
   \qquad \text{and}\qquad 
    \mathrm{TV}\big(\mathcal{L}(S_n^X),\mathcal{L}(S^X)\big)
   = O\big(q_x (q_x/n_x)^{a}\big) 
\]
provided that $\mathrm{TV}(P_{Y,r}^*,P_{Y,0}^*) = O(r^{a})$ and $\mathrm{TV}(P_{X,r}^*,P_{X,0}^*) = O(r^{a})$ for some $a>0$. In turn, Theorem~3 in \cite{bugni/canay/kim:26b} establishes that both of these conditions hold for $a=2$ under the same assumptions. Putting these results together leads to the following theorem. 

\begin{theorem}\label{thm:size-power-largeq}
        
      Let $\mathbf P$ be the class of distributions $P:=P_Y \otimes P_X$ satisfying Assumptions \ref{ass:g-Lipschitz} and \ref{ass:QMD}, and let $\mathbf P_{0}$ be as in \eqref{eq:P_0}. Let $\alpha\in(0,1)$ be given, $T$ be the KS test statistic in \eqref{eq:ks-stat}, and $\phi(\cdot)$ be the test in \eqref{eq:our-test}. Then, 
  \begin{equation} \label{eq:Asysize1}
    E_P[\phi(S_n)] \le \alpha + O\left(\frac{q_y^{3}}{n^2_y} + \frac{q_x^{3}}{n^2_x} \right)~,
  \end{equation}
  whenever $P \in \mathbf P_{0}$. If we let $n:= \min\{n_y,n_x\}$ and $
  \frac{q_y}{q_y + q_x}\to c \in (0,1)$ as $n\to \infty$, then for any $P'\in \mathbf P_1$
  \begin{equation} \label{eq:power}
    \lim_{n\to\infty} E_{P'}[\phi(S_n)] =1 ~.
  \end{equation}
\end{theorem}

It is worth noting that Assumption~\ref{ass:g-Lipschitz} is stronger than Assumption~\ref{ass:Z}, and Assumption~\ref{ass:QMD} is stronger than Assumption~\ref{ass:cond-continous}. This is natural: the stronger assumptions deliver explicit rates of convergence, whereas the weaker assumptions ensure convergence but do not characterize its rate. Note also that \eqref{eq:power} ensures that $\phi(S_n)$ is consistent under fixed alternatives, which necessarily requires an asymptotic framework where $q$ diverges. For an in-depth discussion of the roles of these assumptions, as well as related rate results under weaker versions of Assumption~\ref{ass:QMD}, we refer the reader to \cite{bugni/canay/kim:26b}.

\subsection{Results for other test statistics}\label{app:other-stats}

Our paper proposes the test in \eqref{eq:our-test} based on the one-sided KS test statistic in \eqref{eq:ks-stat} and a critical value in \eqref{eq:our-cv} constructed by simulating i.i.d.\ uniform random variables applied to this statistic. In this section, we investigate the properties of analogous tests that replace the KS statistic with other commonly used statistics in the stochastic dominance literature, such as the Cramér--von Mises (CvM) and Anderson--Darling (AD) statistics.

The main takeaway of this section is that the validity of our approach does not extend to tests that replace the KS statistic with the CvM or AD statistics. Specifically, we provide examples of data-generating processes that satisfy the null hypothesis in \eqref{eq:H0} and the test overrejects. Our examples involve discretely distributed data, which is allowed by our assumptions. That said, the CvM-based version of our test remains valid under the assumption of continuously distributed data, as we show later.

For concreteness, we now define the CvM and AD analogs of our test for the null hypothesis in \eqref{eq:H0}. Given the pooled sample $S_n$ in \eqref{eq:Sn}, the one-sided CvM test statistic is given by 
\begin{equation}\label{eq:cvm-stat}
  T_{\rm CvM}(S_n) =\int \left( \hat F_{n,Y}(s) - \hat F_{n,X}(s) \right)^{+}d\hat F_{n,S}(s) = \frac{1}{q}\sum_{j=1}^{q}\left( \hat F_{n,Y}(S_{n,j}) - \hat F_{n,X}(S_{n,j})\right)^{+}~,
\end{equation}
where $\hat{F}_{n,S}(s)$ denotes the empirical CDF of $S_n$ and $x^{+} = \max\{x, 0\}^2$. In turn, the one-sided AD test statistic is given by
\begin{equation}\label{eq:ad-stat}
    T_{\rm AD}(S_n) = \int \frac{\left( \hat{F}_{n,Y}(s) -\hat{F}_{n,X}(s) \right)^{+} }{\hat{F}_{n,S}(s)( 1-\hat{F}_{n,S}(s))} d\hat{F}_{n,S}(s) = \frac{1}{q}\sum_{j=1}^q \frac{\left( \hat{F}_{n,Y}(S_{n,j}) -\hat{F}_{n,X}(S_{n,j}) \right)^{+} }{\hat{F}_{n,S}(S_{n,j}) \left( 1-\hat{F}_{n,S}(S_{n,j}) \right) }~.
\end{equation}
These statistics align with their standard textbook definitions, as discussed in \citet[][p. 101]{hajek/sidak/sen:99}. For both of these statistics, we can define the analog test as in our paper. For the CvM statistic, the corresponding test is 
\begin{equation}\label{eq:cvm-test}
    \phi_{\rm CvM}(S)=I\{T_{\rm CvM}(S)>c_{\alpha,{\rm CvM}}(q_y, q)\} .
\end{equation}
where $c_{\alpha,{\rm CvM}}(q_y,q) = \inf_{x\in\mathbf R}\left\lbrace P\left\lbrace T_{\rm CvM}(U) \le x \right\rbrace\ge 1-\alpha \right\rbrace $ and $U=(U_1, \cdots, U_q)$ is a vector of i.i.d.\ $U(0,1)$. The testing procedure based on the AD statistic is defined analogously.

We are now ready to argue that the CvM and AD analogs of our test are invalid under our assumptions. Since the CvM and AD statistics are both rank statistics, we can repeat arguments in Theorem \ref{thm:SD-AsySz} to show that, for any $P \in \mathbf P_{0}$,
\begin{align*}
\lim_{n \to \infty} E_{P}[\phi_{\rm CvM} (S_{n})]  = E_{\tilde{P}}[\phi_{\rm CvM} (\tilde{S})]~~~\text{ and }~~~\lim_{n \to \infty} E_{P}[\phi_{\rm AD} (S_{n})]  = E_{\tilde{P}}[\phi_{\rm AD} (\tilde{S})].
\end{align*}
To prove that these tests are invalid, it then suffices to find any $P \in \mathbf P_{0}$ such that $E_{\tilde{P}}[\phi_{\rm CvM} (\tilde{S})]>\alpha$ and $E_{\tilde{P}}[\phi_{\rm AD} (\tilde{S})]>\alpha$. To this end, we consider $q_y=2$, $q_x=1$, $\{Y|Z=z_0\}$ and $\{X|Z=z_0\}$ distributed according to $Bernoulli(0.5)$. For $\alpha = 5\%$, we get $E_{\tilde{P}}[\phi_{\rm CvM} (\tilde{S})] \approx 12.5\%$ and $E_{\tilde{P}}[\phi_{\rm AD} (\tilde{S})] \approx 12.5\%$, as desired.

A notable feature of the examples above is that the data are discrete. This naturally raises the question of whether the validity of these tests can be recovered in settings with continuously distributed data. While we were unable to establish this result for the AD test, we were able to do so for the CvM test.

\begin{theorem}\label{thm:SD-AsySz_CvM}
Let $\mathbf P$ the space of distributions that satisfy Assumptions \ref{ass:Z}, \ref{ass:cond-continous}, and $Y|Z=z_0$ and $X|Z=z_0$ are continuously distributed, and let $\mathbf P_{0}$ be the subset of $\mathbf P$ such that \eqref{eq:H0} holds. Let $\alpha\in(0,1)$ be given, $T_{\rm CvM}$ be the CvM test statistic in \eqref{eq:cvm-stat}, and $\phi_{\rm CvM}(\cdot)$ be the test in \eqref{eq:cvm-test}. Then, 
  \begin{equation} \label{eq:Asysize_cvm}
    \limsup_{n \to \infty} E_{P}[\phi_{\rm CvM}(S_n)]\le \alpha
  \end{equation}
  whenever $P \in \mathbf P_{0}$.
\end{theorem}
\begin{proof}
    This result follows from repeating arguments that prove Theorem \ref{thm:SD-AsySz}. The only difference in the proof is that Theorem \ref{thm:limit-cont} is replaced by Theorem \ref{thm:CvM-validity}. 
\end{proof}

The previous theorem relies on the following auxiliary result. 

\begin{theorem}\label{thm:CvM-validity}
Let $S$ be the random variable in Theorem \ref{thm:limit-Sn} and let $\mathbf P_0$ be as in Theorem \ref{thm:SD-AsySz_CvM}. Then, for any $P\in \mathbf P_0$ and $\alpha\in(0,1)$, we obtain 
\[
E_P[ \phi_{\rm CvM} (S)] \leq \bar{\alpha}_{\rm CvM}\le \alpha ~,
\]
where 
$$\Bar{\alpha}_{\rm CvM} = P\left\{T_{\rm CvM}(U) > c_{\alpha,{\rm CvM}}(q_y, q)\right\}.$$
Moreover, the first inequality becomes an equality under $P$ such that  $F_Y(t|z_{0})=F_X(t|z_{0})$ for all $t\in \mathbf R$.
\end{theorem}
\begin{proof}
For any $t \in \mathbf R$, consider the CDF
\[
F_{M}(t) =\frac{q_{Y}}{q}F_{Y| Z= z_0}(t) +\frac{q_{X}}{q}F_{X| Z = z_0}(t)~,
\]
and let $Q_M$ denote its corresponding quantile function. Let $Q_Y$ and $Q_X$ are the quantiles functions associated with $F_Y(\cdot| z_0)$ and $F_X(\cdot| z_0)$, respectively.

Since $Y|Z=z_0$ and $X|Z=z_0$ are continuously distributed, $F_M$ is continuous. In turn, this implies that $Q_M$ is strictly increasing. By $F_{Y| Z= z_0}(t)\leq F_{X| Z= z_0}(t)$ for all $t \in \mathbf R$, we also have
\begin{align}\label{eq:cvm-monotonicity}
 Q_Y(u) \geq Q_M(u) \geq Q_X(u) ~~\forall u\in(0,1)~.
\end{align}

For brevity notation, we denote the critical value as $ c_{\alpha}\equiv c_{\alpha,{\rm CvM}}(q_y, q)$. Then,
\begin{align}
&E_{P}[\phi_{\rm CvM}(S)]\notag\\
&=P\left\{ \frac{1}{q}\sum_{u=1}^{q}\left( \frac{1}{q_{y}}\sum_{i=1}^{q_{y}}I\left\{
S_{i}\leq S_{u}\right\} -\frac{1}{q_{x}}\sum_{j=q_y+1}^{q}I\left\{ S_{j}\leq
S_{u}\right\} \right)^{+}> c_{\alpha}\right\} \notag\\
&\overset{(1)}{=}P\left\{ 
\left[\begin{array}{c}
\frac{1}{q}\sum\limits_{u=1}^{q_{y}} \bigg( \frac{1}{q_{y}}\sum\limits_{i=1}^{q_{y}} I\left\{ Q_{Y}\left( U_{i}\right) \leq Q_{Y}\left( U_{u}\right) \right\} -\frac{1}{q_{x}}\sum\limits_{j=q_y+1}^{q} I\left\{ Q_{X}\left( U_{j}\right) \leq Q_{Y}\left(U_{u}\right) \right\} \bigg)^{+}+\\ 
\frac{1}{q}\sum\limits_{u=q_y+1}^{q} \bigg( \frac{1}{q_{y}}\sum\limits_{i=1}^{q_{y}} I\left\{ Q_{Y}\left( U_{i}\right) \leq Q_{X}\left( U_{u}\right) \right\}-\frac{1}{q_{x}}\sum\limits_{j=q_y+1}^{q} I\left\{ Q_{X}\left( U_{j}\right) \leq Q_{X}\left(U_{u}\right) \right\}\bigg)^{+}
\end{array}
\right]
> c_{\alpha}\right\}  \notag\\
&\overset{(2)}{\leq} P\left\{ 
\left[
\begin{array}{c}
\frac{1}{q}\sum\limits_{u=1}^{q_{y}} \bigg( \frac{1}{q_{y}}\sum\limits_{i=1}^{q_{y}} I\left%
\{ Q_{M}\left( U_{i}\right) \leq Q_{M}\left( U_{u}\right) \right\} -\frac{1}{q_{x}}\sum\limits_{j=q_y+1}^{q} I\left\{ Q_{M}\left( U_{j}\right) \leq Q_{M}\left( U_{u}\right) \right\} \bigg)^{+}+ \\ 
\frac{1}{q}\sum\limits_{u=q_y+1}^{q} \bigg( \frac{1}{q_{y}}\sum\limits_{i=1}^{q_{y}} I\left\{ Q_{M}\left( U_{i}\right) \leq Q_{M}\left( U_{u}\right) \right\}-\frac{1}{q_{x}}\sum\limits_{j=q_y+1}^{q} I\left\{ Q_{M}\left( U_{j}\right) \leq Q_{M}\left(U_{u}\right) \right\} \bigg)^{+}%
\end{array}%
\right]
> c_{\alpha}\right\} \notag\\
&\overset{(3)}{=} P\left\{ 
\left[
\begin{array}{c}
\frac{1}{q}\sum\limits_{u=1}^{q_{y}} \left( \frac{1}{q_{y}}\sum\limits_{i=1}^{q_{y}} I\left%
\{  U_{i} \leq U_{u} \right\} -\frac{1}{%
q_{x}}\sum\limits_{j=q_y+1}^{q_{x}} I\left\{  U_{j} \leq 
U_{u}\right\} \right)^{+}+ \\ 
\frac{1}{q}\sum\limits_{u=q_y+1}^{q} \left( \frac{1}{q_{y}}\sum\limits_{i=1}^{q_{y}} I\left%
\{  U_{i} \leq  U_{u} \right\} -\frac{1}{%
q_{x}}\sum\limits_{j=q_y+1}^{q} I\left\{ U_{j} \leq 
U_{u} \right\} \right)^{+}%
\end{array}%
\right]
> c_{\alpha}\right\} \notag\\
&\overset{(4)}{=} P\left\{ 
T_{\rm CvM}(U)
> c_{\alpha}\right\} \overset{(5)}{=}  \Bar{\alpha}_{\rm CvM} \overset{(6)}{\leq} \alpha~,\label{eq:cvm-monotonicity3}
\end{align}
where (1) holds by the quantile transformation with $(U_{1},\cdots ,U_{q})$ i.i.d.\ $U(0,1)$, (2) by
\begin{equation}\label{eq:cvm-monotonicity2}
\{ Q_{Y}(U_{a}) \leq Q_{X}(U_{b}) \}
\subseteq \{ Q_{M}(U_{a}) \leq Q_{M}(U_{b})
\} \subseteq \{ Q_{X}(U_{a}) \leq Q_{Y}(
U_{b}) \}~
\end{equation}
for all $a,b=1,\dots,q$, (3) by the fact that $Q_{M}$ is strictly increasing, and (4), (5), and (6) by the definitions of $T_{\mathrm{CvM}}$ and $\Bar{\alpha}_{\mathrm{CvM}}$. We now show that \eqref{eq:cvm-monotonicity2} holds. For any $a,b=1,\dots,q$, suppose that $Q_{Y}(U_{a}) \leq Q_{X}(U_{b}) $. By this and \eqref{eq:cvm-monotonicity}, we have that $ Q_{M}(U_{a}) \leq Q_{Y}(U_{a}) \leq Q_{X}(U_{b}) \leq Q_{M}(U_{b}) $, which implies that $Q_{M}(U_{a}) \leq Q_{M}(U_{b})$, as desired by the first inclusion. In turn, by this and \eqref{eq:cvm-monotonicity}, we have that $Q_{X}(U_{a}) \leq Q_{M}(U_{a}) \leq Q_{M}(U_{b}) \leq Q_{Y}(U_{b}) $, as desired by the second inclusion.

Since $P\in \mathbf{P}_0$ was arbitrary, \eqref{eq:cvm-monotonicity3} implies that $\sup_{P \in \mathbf{P}_0} E_{P}[\phi_{\rm CvM}(S)] \leq \Bar{\alpha}$. Furthermore, under $F_Y(\cdot|z_0) = F_X(\cdot|z_0)$, we have that $Q_Y=Q_X=Q_M$, and so the inequality (2) in \eqref{eq:cvm-monotonicity3} holds with equality, as desired.
\end{proof}

\subsection{Multiple target points}\label{app:multiple-target}

Throughout the paper, we have focused on testing the null hypothesis \eqref{eq:H0} with $L = 1$ to emphasize the main ideas and maintain clarity. This section describes how our method extends when $L > 1$. In this setting, we address the intersection nature of the null hypothesis by using a maximum test. 

To formally define the test, we need to update our notation to explicitly reflect dependence on both $\ell$ and $\alpha$. Consider the point $z_{\ell}\in \mathcal Z$ for some $\ell\in \{1,\dots,L\}$. Let $g_{\ell}(Z) := |Z - z_{\ell}|$ and, for any two values \( z, z' \in \mathcal{Z} \), define the ordering \(\le_{\ell}\) as follows:
\[
z \le_{\ell} z' \quad \text{if and only if} \quad g_{\ell}(z) \le g_{\ell}(z')~.
\]
For each $\ell$, this leads to $g$-order statistics \( Z_{\ell,(i)} \) in each sample, as well as induced order statistics for $Y$ and $X$, that we denote by 
\begin{equation*}
  Y^{\ell}_{n_y,[1]},Y^{\ell}_{n_y,[2]}, \dots,Y^{\ell}_{n_y,[q_y]}~\text{ and } X^{\ell}_{n_x,[1]},X^{\ell}_{n_x,[2]}, \dots,X^{\ell}_{n_x,[q_x]}~.
\end{equation*}
As with $L=1$, any ties in the $g$-ordering can be resolved arbitrarily. Finally, if we let \(n := n_y + n_x\) and \(q := q_y + q_x\), the effective (pooled) sample is given by
\begin{equation}\label{eq:Sn_ell}
  S^{\ell}_n = (S^{\ell}_{n,1}, \dots, S^{\ell}_{n,q}) := (Y^{\ell}_{n_y,[1]}, \dots, Y^{\ell}_{n_y,[q_y]}, X^{\ell}_{n_x,[1]}, \dots, X^{\ell}_{n_x,[q_x]})~.
\end{equation}
Our test is entirely based on \(S^{\ell}_n\), with the default test statistic being 
\begin{equation*}\label{eq:ks-stat-ell}
  T(S^{\ell}_n) = \sup_{t\in \mathbf R} \left( \hat F^{\ell}_{n,Y}(t) - \hat F^{\ell}_{n,X}(t) \right) = \max_{k \in \{1,\dots,q_y\}} \left( \hat F^{\ell}_{n,Y}(S^{\ell}_{n,k}) - \hat F^{\ell}_{n,X}(S^{\ell}_{n,k}) \right) ~,
\end{equation*}
where the empirical CDFs are, 
\begin{equation*}
  \hat F^{\ell}_{n,Y}(t) := \frac{1}{q_y} \sum_{j=1}^{q_y} I\{S^{\ell}_{n,j} \le t\} \quad \text{ and }\quad \hat F^{\ell}_{n,X}(t) := \frac{1}{q_x} \sum_{j=1}^{q_x} I\{S^{\ell}_{n,q_y+j} \le t\}~.
\end{equation*}
In order to define our test below in \eqref{eq:phi-multiple}, we first introduce $\phi_{\ell}(\alpha)$, where 
\begin{equation*}\label{eq:our-test-ell}
  \phi_{\ell}(S^{\ell}_n,\alpha) := I\{T(S^{\ell}_n)>c_{\alpha}(q_y,q)\}~.
\end{equation*}
The test \( \phi_{\ell}(S^{\ell}_n,\alpha) \) depends on the point \( z_{\ell} \), or equivalently the index \( \ell \), in two ways. First, the random variable \( S^{\ell}_n \) depends on \( z_{\ell} \) through the induced order statistics, meaning the test statistic \( T(S^{\ell}_n) \) is influenced by the choice of the target point \( z_{\ell} \). Second, the critical value \( c_{\alpha}(q_y, q) \) is a function of \( (q_y, q) \), which, through the data-dependent rules introduced in Section \ref{sec:tuning-parameters}, implicitly depends on \( z_{\ell} \). The dependence on \( \alpha \) is directly evident from the definition of the critical value \( c_{\alpha}(q_y, q) \).

To test the null hypothesis in \eqref{eq:H0} when $L>1$, we propose the following test:
\begin{equation}\label{eq:phi-multiple}
    \phi(S^{1}_n,S^{2}_n,\dots,S^{L}_n) := \max_{\ell \in \{1,\dots,L\}} \phi_{\ell}\left(S^{\ell}_n,1-(1-\alpha)^{1/L}\right)~.
\end{equation}
In other words, the test $\phi(S^{1}_n,S^{2}_n,\dots,S^{L}_n)$ is the maximum of the $L$ individual tests, each computed at a target point $z_{\ell}$. However, each of these individual tests is performed at a level of $1 - (1 - \alpha)^{1/L}$, rather than at the nominal level $\alpha$. Since $\mathcal{Z}$ consists of a finite number of points, the asymptotic validity of the test $\phi(\alpha)$ follows from the validity of each individual test $\phi_{\ell}$ and the fact that they are asymptotically independent. We formalize this result in Theorem \ref{thm:limit-Sn-general}.

\subsection{An empirical illustration}\label{app:empirical-application}

In this section, we apply our methodology to assess the distributional effect of retirement on household expenditures. Our empirical analysis follows \citet{battistin/Brugianivi/Rettore/Weber:2009} and \cite{shen/zhang:16} (SZ, henceforth). As in these studies, we use data from the 1993--2004 Italian Survey on Household Income and Wealth (SHIW), a repeated cross-section survey that covers roughly 8,000 households across more than 3,000 municipalities; \cite{shen/zhang:2016data}.

We use the SHIW data to evaluate whether retirement shifts the distribution of household expenditures downward. As SZ explains, this hypothesis can be examined by framing the problem as a regression discontinuity design, comparing the distributions of household expenditures at ages just before and just after retirement.
To formulate our problem, let $Z$ denote the age of the household head, $z_0$ the retirement age, $Y$ household expenditure before retirement, and $X$ household expenditure after retirement. Under continuity of the conditional distributions in the retirement age of the household head, the test of a negative distributional effect of retirement can be formulated as the following CSD testing problem:
\begin{align}
H_0 : F_{Y}(t|z_0)\le F_{X}(t|z_0)~\text{for all}~ t \in \mathbf{R}~~~\text{versus}~~~H_1:F_{Y}(t|z_0)> F_{X}(t|z_0)~\text{for some}~ t \in \mathbf{R}~.
\label{eq:empApplic_H0}
\end{align}
We note that $H_0$ in \eqref{eq:empApplic_H0} is a special case of \eqref{eq:H0} with $\mathcal{Z} = \{z_0\}$. Following the logic of regression discontinuity designs, conditioning on $Z = z_0$ is key to controlling for confounding factors.

Table \ref{tab:empApplication} reports the results of implementing the test in \eqref{eq:empApplic_H0} using both our methodology (KS) and the procedure proposed by SZ. Following SZ, we consider expenditures on nondurables and food, and we report results for four samples: the full sample, the subsample of households with wives eligible for retirement, the subsample of households with wives not yet eligible for retirement, and the subsample of households consisting of single men. The table presents the p-values for both tests and their corresponding effective sample sizes. 
The tests only reject $H_0$ in \eqref{eq:empApplic_H0} for the subsample of households in which the wife is eligible for retirement. For this group, SZ rejects $H_0$ for both food and nondurable expenditures (at the 1\% and 5\% levels, respectively), whereas our test (KS) rejects only for food expenditures (at the 10\% level). Consistent with our simulation evidence, the SZ test uses substantially larger effective sample sizes than our KS test. 

\begin{table}[h]
    \centering
    \begin{tabular}{llcccc}
      \hline
      \multicolumn{1}{l}{{sample}} &
      \multicolumn{1}{l}{{expenditure}} &
      \multicolumn{1}{c}{{p-value}} &
      \multicolumn{1}{c}{{$q$}} &
      \multicolumn{1}{c}{{p-value}} &
      \multicolumn{1}{c}{{$q$}} \\
      \multicolumn{1}{l}{{($n$)}} &
      \multicolumn{1}{l}{{type}} &
      \multicolumn{1}{c}{{(KS)}} &
      \multicolumn{1}{c}{{(KS)}} &
      \multicolumn{1}{c}{{(SZ)}} &
      \multicolumn{1}{c}{{(SZ)}} \\
       \hline
       all households & nondurable & 22.82 & 43 & 13.69 & 8,739 \\
       (26,914)       & food       & 86.96 & 42 & 71.74 & 8,739 \\
       \hline
       wife already eligible & nondurable & 99.29 & 20 & 66.72 & 1,822 \\
       (4,798)                & food       & 99.34 & 20 & 99.69 & 1,414 \\
       \hline
       wife ineligible & nondurable & 25.71 & 22 & 0.15$^{***}$ & 1,460 \\
       (7,305)               & food       & 9.64$^{*}$ & 22 & 3.94$^{**}$ & 1,460 \\
       \hline
       single men & nondurable & 37.07 & 25 & 65.82 & 2,884 \\
       (9,001)    & food       & 23.58 & 25 & 31.19 & 2,537\\
       \hline
    \end{tabular}
     \caption{Empirical results of our KS test and the SZ test using the SHIW data for each sample and expenditure type. $q$ indicates the effective sample size. Significant at $^*$10\%, $^{**}$5\%, and $^{***}$1\%.}
     \label{tab:empApplication}
\end{table}

\subsection{Details on the Monte Carlo Simulations}\label{app:simulations}
The parameter values used in the simulations of Section \ref{sec:simulations} are reported in the following tables. The parameters for Designs 1-4 are summarized in Table~\ref{tab:Simu-Designs-14}, those for Designs 5-7 in Table~\ref{tab:Simu-Designs-57}, and the parameter values for case (d) -  power results - in Table~\ref{tab:Simu-cased}. Table \ref{tab:q-values} reports the mean values of \(q_y^*\) and \(q_x^*\) across simulations.

\begin{table}[h]\small
    \centering
    \begin{tabular}{ccccccc}\hline
        Design &$\mu_Y(z)$ &$\mu_X(z)$  & $\sigma_Y(Z)$ & $\sigma_X(Z)$  & $U,V$ & $z_{\ell}$  \\ \hline
         1a&  $z$&  $z$&   $z^2$& $z^2$ & $N(0,1)$ & 0.5\\
         1b&  $1.05z$&  $z$&  $z^2$&  $z^2$&  $N(0,1)$& 0.5\\
         1c&  $z$&  $z$&  $z^2$&  $z^2$& $N(0,1)$& 0.25, 0.75\\ \hline
         2a&  $z$ & \footnotesize{$z^2+0.25$}&   $z^2$&  $z^2$&  $N(0,1)$&  0.5\\
         2b&  $1.05z$ & \footnotesize{$0.5z + 0.25$} &   $z^2$&  $z^2$&  $N(0,1)$&  0.5\\
         2c&  $z$ & \footnotesize{$z-(z-0.25)(z-0.75)$} &   $z^2$&  $z^2$&  $N(0,1)$& 0.25, 0.75\\ \hline
         3a&  $z$&  $z$&   $z^2$&   $z^2$&  $U[0,1]$&  0.5\\
         3b&  $z+0.1z^2$& $z$&     $0.95z^2$& $z^2$&     $U[0,1]$&  0.5\\
         3c&  $z$&  $z$&   $z^2$&   $z^2$&  $U[0,1]$& 0.25, 0.75\\ \hline
         4a&  $\mu(z)$ & $\mu(z)$&  $1$ &  $1$&  $N(0,1)$&  0\\
         4b&  $\mu(z)+0.1$ & $\mu(z)$&  $1$ &  $1$& $N(0,1)$ & 0 \\
         4c&  $\mu(z)$ & $\mu(z)$&  $1$ &  $1$ &  $N(0,1)$& -0.5, 0.5\\ \hline 
    \end{tabular}
    \caption{Parameter values for the continuously distributed designs. Designs 1 to 3 are based on the location-scale model in \eqref{eq:loc-scale}, while Designs 4 is based on the RDD model in \eqref{eq:rdd-design}.}
    \label{tab:Simu-Designs-14}
\end{table}

\begin{table}[h]\small
    \centering
    \begin{tabular}{ccccccc}\hline
        Design &$\mu_Y(z)$ &$\mu_X(z)$  & $\sigma_Y(Z)$ & $\sigma_X(Z)$  & $U,V$ & $z_{\ell}$  \\ \hline
         5a&  0& 0&  $z^2$&  $z^2$      &  C-logN&  0.5\\
         5b&  0& 0&  $1.05z^2$&  $z^2$  & C-logN & 0.5 \\
         5c&  0& 0&  $z^2$&  $z^2$      &  C-logN& 0.25, 0.75\\ \hline
         6a&  0 & --&  -- &   --        & discrete & 0.5 \\
         6b&  1 & --&  -- &  --         & discrete & 0.5 \\
         6c&  0 & --&  -- &  --         &  discrete& 0.25, 0.75\\ \hline
         7a&  0 & --&  --&  --          & Binomial & 0.5 \\
         7b&  1 & --&  --&  --          & Binomial & 0.5 \\
         7c&  0 & --&  --&  --          &  Binomial& 0.25, 0.75\\ \hline 
    \end{tabular}
    \caption{Parameter values for the discrete and mixed designs. Designs 5 is based on the location-scale model in \eqref{eq:loc-scale}, while Designs 6 and 7 are discrete designs as described in the main text.}
    \label{tab:Simu-Designs-57}
\end{table}

\begin{table}[h]\small
    \centering
    \begin{tabular}{ccccccc}\hline
        Design &$\mu_Y(z)$ &$\mu_X(z)$  & $\sigma_Y(Z)$ & $\sigma_X(Z)$  & $U,V$ & $z_{\ell}$  \\ \hline
1d&  $0.95z$&  $z$&  $z^2$&  $z^2$&         $N(0,1)$& 0.5\\ 
2d&  $z$ & $0.6z + 0.25$&   $z^2$&  $z^2$&  $N(0,1)$&  0.5\\ 
3d&  $z$&  $z$&     $0.90z^2$& $z^2$&       $U[0,1]$&  0.5\\ 
4d&  $\mu(z)$ & $\mu(z)$&  $0.5 + z^2$ &  $1$&  $N(0,1)$&  0 \\ 
5d&  0& 0&  $0.90z^2$&  $z^2$               & C-logN & 0.5 \\ 
6d&  -1/2 & --&  -- &  --                   & discrete & 0.5 \\ 
7d&  -1 & --&  --&  --                      & Binomial & 0.5 \\ \hline 
    \end{tabular}
    \caption{Parameter values for all designs under the alternative hypothesis $H_1$.}
    \label{tab:Simu-cased}
\end{table}

\begin{table}[h!]\small
\begin{center}
\begin{tabular}{rrrrrrrrrrrrrrrr}
\hline
\multirow{2}{*}{\bf Design} 
    & \multicolumn{3}{c}{\bf 1} & 
    & \multicolumn{3}{c}{\bf 2} & 
    & \multicolumn{3}{c}{\bf 3} &
    & \multicolumn{3}{c}{\bf 4} \\
\cline{2-4} \cline{6-8} \cline{10-12} \cline{14-16}
 & a & b & c 
 & & a & b & c 
 & & a & b & c
 & & a & b & c \\
\hline
$q_y^*$ 
 & 79.54 & 78.63 & 34.84 
 & & 79.51 & 79.53 & 34.85 
 & & 45.48 & 44.04 & 20.05
 & & 45.28 & 45.29 & 19.96 \\
$q_x^*$ 
 & 79.52 & 79.52 & 34.86 
 & & 79.72 & 89.57 & 34.92 
 & & 45.49 & 45.48 & 20.05
 & & 46.14 & 46.20 & 20.35 \\
\hline

\multirow{2}{*}{\bf Design} 
    & \multicolumn{3}{c}{\bf 5} & 
    & \multicolumn{3}{c}{\bf 6} & 
    & \multicolumn{3}{c}{\bf 7} &
    & \multicolumn{3}{c}{\bf } \\
\cline{2-4} \cline{6-8} \cline{10-12} \cline{14-16}
 & a & b & c 
 & & a & b & c 
 & & a & b & c
 & &  &  &  \\
\hline
$q_y^*$ 
 & 82.97 & 82.99 & 36.35
 & & 96.74 & 102.1 & 44.66 
 & & 59.65 & 60.46 & 26.20
 & & -- & -- & -- \\
$q_x^*$ 
 & 82.99 & 82.98 & 36.37
 & & 96.75 & 96.78 & 42.34 
 & & 59.64 & 59.61 & 26.20
 & & -- & -- & --   \\
\hline
\end{tabular}
\caption{Mean values of $q_y^*$ and $q_x^*$ across simulations: $n=1,000$, $\alpha=10\%$, $MC=10,000$.}\label{tab:q-values}
\end{center}
\end{table}

\newpage
\normalsize
\bibliography{CSD_Ref.bib}

\begin{thebibliography}{42}
\newcommand{\enquote}[1]{``#1''}
\expandafter\ifx\csname natexlab\endcsname\relax\def\natexlab#1{#1}\fi

\bibitem[\protect\citeauthoryear{Abadie}{Abadie}{2002}]{abadie:2002}
\textsc{Abadie, A.} (2002): \enquote{Bootstrap tests for distributional
  treatment effects in instrumental variable models,} \emph{Journal of the
  American statistical Association}, 97, 284--292.

\bibitem[\protect\citeauthoryear{Anderson}{Anderson}{1996}]{anderson:1996}
\textsc{Anderson, G.} (1996): \enquote{Nonparametric tests of stochastic
  dominance in income distributions,} \emph{Econometrica: Journal of the
  Econometric Society}, 1183--1193.

\bibitem[\protect\citeauthoryear{Andrews and Shi}{Andrews and
  Shi}{2017}]{andrews-shi-2017-JOE}
\textsc{Andrews, D.~W. and X.~Shi} (2017): \enquote{Inference based on many
  conditional moment inequalities,} \emph{Journal of Econometrics}, 196,
  275--287.

\bibitem[\protect\citeauthoryear{Armstrong and Koles{\'a}r}{Armstrong and
  Koles{\'a}r}{2018}]{armstrong/kolesar:18}
\textsc{Armstrong, T.~B. and M.~Koles{\'a}r} (2018): \enquote{Optimal inference
  in a class of regression models,} \emph{Econometrica}, 86, 655--683.

\bibitem[\protect\citeauthoryear{Barrett and Donald}{Barrett and
  Donald}{2003}]{barrett/donald:2003}
\textsc{Barrett, G.~F. and S.~G. Donald} (2003): \enquote{Consistent tests for
  stochastic dominance,} \emph{Econometrica}, 71, 71--104.

\bibitem[\protect\citeauthoryear{Battistin, Brugiavini, Rettore, and
  Weber}{Battistin et~al.}{2009}]{battistin/Brugianivi/Rettore/Weber:2009}
\textsc{Battistin, E., A.~Brugiavini, E.~Rettore, and G.~Weber} (2009):
  \enquote{The retirement consumption puzzle: evidence from a regression
  discontinuity approach,} \emph{American Economic Review}, 99, 2209--2226.

\bibitem[\protect\citeauthoryear{Becker}{Becker}{1957}]{becker:1957}
\textsc{Becker, G.~S.} (1957): \enquote{The Economics of Discrimination,}
  \emph{University of Chicago Press Economics Books}.

\bibitem[\protect\citeauthoryear{Bharadwaj, Deb, and Renou}{Bharadwaj
  et~al.}{2024}]{bharadwaj/deb/renou:24}
\textsc{Bharadwaj, P., R.~Deb, and L.~Renou} (2024): \enquote{Statistical
  Discrimination and the Distribution of Wages,} Tech. rep., National Bureau of
  Economic Research.

\bibitem[\protect\citeauthoryear{Bhattacharya}{Bhattacharya}{1974}]{bhattacharya:74}
\textsc{Bhattacharya, P.} (1974): \enquote{Convergence of sample paths of
  normalized sums of induced order statistics,} \emph{The Annals of
  Statistics}, 1034--1039.

\bibitem[\protect\citeauthoryear{Blackman}{Blackman}{1956}]{blackman:1956}
\textsc{Blackman, J.} (1956): \enquote{An extension of the Kolmogorov
  distribution,} \emph{The Annals of Mathematical Statistics}, 513--520.

\bibitem[\protect\citeauthoryear{Bugni and Canay}{Bugni and
  Canay}{2021}]{bugni/canay:21}
\textsc{Bugni, F.~A. and I.~A. Canay} (2021): \enquote{Testing Continuity of a
  Density via g-order statistics in the Regression Discontinuity Design,}
  \emph{Journal of Econometrics}, 221, 138--159.

\bibitem[\protect\citeauthoryear{Bugni, Canay, and Kim}{Bugni
  et~al.}{2025}]{bugni/canay/kim:26b}
\textsc{Bugni, F.~A., I.~A. Canay, and D.~Kim} (2025): \enquote{On the Rate of
  Convergence of Induced Ordered Statistics and their Applications,} Tech.
  rep., Northwestern University.

\bibitem[\protect\citeauthoryear{Bugni, Canay, and Shaikh}{Bugni
  et~al.}{2018}]{bugni/canay/shaikh:18}
\textsc{Bugni, F.~A., I.~A. Canay, and A.~M. Shaikh} (2018): \enquote{Inference
  under Covariate Adaptive Randomization,} \emph{Journal of the American
  Statistical Association}, 113, 1784--1796.

\bibitem[\protect\citeauthoryear{Canay and Kamat}{Canay and
  Kamat}{2018}]{canay/kamat:18}
\textsc{Canay, I.~A. and V.~Kamat} (2018): \enquote{Approximate Permutation
  Tests and Induced Order Statistics in the Regression Discontinuity Design,}
  \emph{The Review of Economic Studies}, 85, 1577--1608.

\bibitem[\protect\citeauthoryear{Canay, Mogstad, and Mountjoy}{Canay
  et~al.}{2024}]{canay/etal:2024}
\textsc{Canay, I.~A., M.~Mogstad, and J.~Mountjoy} (2024): \enquote{On the use
  of outcome tests for detecting bias in decision making,} \emph{Review of
  Economic Studies}, 91, 2135--2167.

\bibitem[\protect\citeauthoryear{Canay, Romano, and Shaikh}{Canay
  et~al.}{2017}]{canay/romano/shaikh:17}
\textsc{Canay, I.~A., J.~P. Romano, and A.~M. Shaikh} (2017):
  \enquote{Randomization Tests under an Approximate Symmetry Assumption,}
  \emph{Econometrica}, 85, 1013--1030.

\bibitem[\protect\citeauthoryear{Caughey, Dafoe, Li, and Miratrix}{Caughey
  et~al.}{2023}]{caughey/etal:2023}
\textsc{Caughey, D., A.~Dafoe, X.~Li, and L.~Miratrix} (2023):
  \enquote{Randomisation inference beyond the sharp null: bounded null
  hypotheses and quantiles of individual treatment effects,} \emph{Journal of
  the Royal Statistical Society Series B: Statistical Methodology}, 85,
  1471--1491.

\bibitem[\protect\citeauthoryear{Chang, Lee, and Whang}{Chang
  et~al.}{2015}]{chang-lee-whang-2015-EJ}
\textsc{Chang, M., S.~Lee, and Y.-J. Whang} (2015): \enquote{Nonparametric
  tests of conditional treatment effects with an application to single-sex
  schooling on academic achievements,} \emph{The Econometrics Journal}, 18,
  307--346.

\bibitem[\protect\citeauthoryear{Chung and Romano}{Chung and
  Romano}{2013}]{chung/romano:13}
\textsc{Chung, E. and J.~P. Romano} (2013): \enquote{Exact and asymptotically
  robust permutation tests,} \emph{The Annals of Statistics}, 41, 484--507.

\bibitem[\protect\citeauthoryear{{[dataset]}~Shen and Zhang}{{[dataset]}~Shen
  and Zhang}{2016}]{shen/zhang:2016data}
\textsc{{[dataset]}~Shen, S. and X.~Zhang} (2016): \enquote{Replication data
  for: Distributional Tests for Regression Discontinuity: Theory and Empirical
  Examples,} Harvard Dataverse, version 1.1.

\bibitem[\protect\citeauthoryear{David and Galambos}{David and
  Galambos}{1974}]{david/galambos:74}
\textsc{David, H. and J.~Galambos} (1974): \enquote{The asymptotic theory of
  concomitants of order statistics,} \emph{Journal of Applied Probability},
  762--770.

\bibitem[\protect\citeauthoryear{Davidson and Duclos}{Davidson and
  Duclos}{2000}]{davidson/duclos:2000}
\textsc{Davidson, R. and J.-Y. Duclos} (2000): \enquote{Statistical inference
  for stochastic dominance and for the measurement of poverty and inequality,}
  \emph{Econometrica}, 68, 1435--1464.

\bibitem[\protect\citeauthoryear{Delgado and Escanciano}{Delgado and
  Escanciano}{2013}]{delgado/escaciano:2013}
\textsc{Delgado, M.~A. and J.~C. Escanciano} (2013): \enquote{Conditional
  stochastic dominance testing,} \emph{Journal of Business \& Economic
  Statistics}, 31, 16--28.

\bibitem[\protect\citeauthoryear{Donald, Hsu, and Barrett}{Donald
  et~al.}{2012}]{donald/hsu/barrett:2012}
\textsc{Donald, S.~G., Y.-C. Hsu, and G.~F. Barrett} (2012):
  \enquote{Incorporating covariates in the measurement of welfare and
  inequality: methods and applications,} \emph{The Econometrics Journal}, 15,
  C1--C30.

\bibitem[\protect\citeauthoryear{Durbin}{Durbin}{1973}]{durbin:1973}
\textsc{Durbin, J.} (1973): \emph{Distribution theory for tests based on the
  sample distribution function}, SIAM.

\bibitem[\protect\citeauthoryear{Gnedenko and Korolyuk}{Gnedenko and
  Korolyuk}{1951}]{gnedenko/korolyuk:1951}
\textsc{Gnedenko, B.~V. and V.~S. Korolyuk} (1951): \enquote{On the maximum
  divergence of two empirical distributions (in Russian),} \emph{DAN}, 80, 525.

\bibitem[\protect\citeauthoryear{Goldman and Kaplan}{Goldman and
  Kaplan}{2018}]{goldman/kaplan:2018}
\textsc{Goldman, M. and D.~M. Kaplan} (2018): \enquote{Comparing distributions
  by multiple testing across quantiles or {CDF} values,} \emph{Journal of
  Econometrics}, 206, 143--166.

\bibitem[\protect\citeauthoryear{Gonzalo and Olmo}{Gonzalo and
  Olmo}{2014}]{gonzalo/olmo:2014}
\textsc{Gonzalo, J. and J.~Olmo} (2014): \enquote{Conditional stochastic
  dominance tests in dynamic settings,} \emph{International Economic Review},
  55, 819--838.

\bibitem[\protect\citeauthoryear{Hajek, Sidak, and Sen}{Hajek
  et~al.}{1999}]{hajek/sidak/sen:99}
\textsc{Hajek, J., Z.~Sidak, and P.~K. Sen} (1999): \emph{Theory of rank
  tests}, Academic press, 2nd edition ed.

\bibitem[\protect\citeauthoryear{Hodges}{Hodges}{1958}]{hodges:1958}
\textsc{Hodges, J.} (1958): \enquote{The significance probability of the
  Smirnov two-sample test,} \emph{Arkiv f{\"o}r matematik}, 3, 469--486.

\bibitem[\protect\citeauthoryear{Hájek and Šidák}{Hájek and
  Šidák}{1967}]{hajek/sidak:1967}
\textsc{Hájek, J. and Z.~Šidák} (1967): \emph{Theory of Rank Tests}, New
  York: Academic Press.

\bibitem[\protect\citeauthoryear{Kaufmann and Reiss}{Kaufmann and
  Reiss}{1992}]{kaufmann/reiss:92}
\textsc{Kaufmann, E. and R.-D. Reiss} (1992): \enquote{On conditional
  distributions of nearest neighbors,} \emph{Journal of Multivariate Analysis},
  42, 67--76.

\bibitem[\protect\citeauthoryear{Korolyuk}{Korolyuk}{1955}]{korolyuk:1955}
\textsc{Korolyuk, V.~S.} (1955): \enquote{On the discrepancy of empiric
  distributions for the case of two independent samples,} \emph{Izv. Akad. Nauk
  SSSR Ser. Mat.}, 19, 81--96.

\bibitem[\protect\citeauthoryear{Lehmann}{Lehmann}{1951}]{lehmann:1951}
\textsc{Lehmann, E.~L.} (1951): \enquote{Consistency and unbiasedness of
  certain nonparametric tests,} \emph{The annals of mathematical statistics},
  165--179.

\bibitem[\protect\citeauthoryear{Lehmann and Romano}{Lehmann and
  Romano}{2005}]{lehmann/romano:05}
\textsc{Lehmann, E.~L. and J.~P. Romano} (2005): \emph{Testing Statistical
  Hypotheses}, Springer, New York, 3rd ed.

\bibitem[\protect\citeauthoryear{Linton, Maasoumi, and Whang}{Linton
  et~al.}{2005}]{Linton/maasoumi/whang:2005}
\textsc{Linton, O., E.~Maasoumi, and Y.-J. Whang} (2005): \enquote{Consistent
  testing for stochastic dominance under general sampling schemes,} \emph{The
  Review of Economic Studies}, 72, 735--765.

\bibitem[\protect\citeauthoryear{Linton, Song, and Whang}{Linton
  et~al.}{2010}]{linton/song/whang:2010}
\textsc{Linton, O., K.~Song, and Y.-J. Whang} (2010): \enquote{An improved
  bootstrap test of stochastic dominance,} \emph{Journal of Econometrics}, 154,
  186--202.

\bibitem[\protect\citeauthoryear{McFadden}{McFadden}{1989}]{mcfadden:1989}
\textsc{McFadden, D.} (1989): \enquote{Testing for stochastic dominance,} in
  \emph{Studies in the economics of uncertainty: In honor of Josef Hadar},
  Springer, 113--134.

\bibitem[\protect\citeauthoryear{Pollard}{Pollard}{2002}]{pollard:02}
\textsc{Pollard, D.} (2002): \emph{A User's Guide to Measure Theoretic
  Probability}, Cambrigde University Press, New York.

\bibitem[\protect\citeauthoryear{Qu and Yoon}{Qu and Yoon}{2019}]{qu/yoon:2019}
\textsc{Qu, Z. and J.~Yoon} (2019): \enquote{Uniform inference on quantile
  effects under sharp regression discontinuity designs,} \emph{Journal of
  Business \& Economic Statistics}, 37, 625--647.

\bibitem[\protect\citeauthoryear{Reiss}{Reiss}{1989}]{reiss:89}
\textsc{Reiss, R.-D.} (1989): \emph{Approximate distributions of order
  statistics: with applications to nonparametric statistics}, Springer-Verlag,
  New York.

\bibitem[\protect\citeauthoryear{Shen and Zhang}{Shen and
  Zhang}{2016}]{shen/zhang:16}
\textsc{Shen, S. and X.~Zhang} (2016): \enquote{Distributional Tests for
  Regression Discontinuity: Theory and Empirical Examples,} \emph{Review of
  Economics and Statistics}.

\end{thebibliography}
\end{document}